\newenvironment{definition}[1][Definition]{\begin{trivlist}
\item[\hskip \labelsep {\bfseries #1}]}{\end{trivlist}}
\newtheorem{theorem}{Theorem}[section]
\newtheorem{exampl}[theorem]{Example}
\newtheorem{lemma}[theorem]{Lemma}
\newtheorem{claim}[theorem]{Claim}
\newtheorem{proposition}[theorem]{Proposition}
\newtheorem{corollary}[theorem]{Corollary}
\newtheorem*{inf_thm}{(Informal Theorem)}{\bfseries}{\itshape}
\title{Pricing to Maximize Revenue and Welfare Simultaneously in Large Markets}
\author{Elliot Anshelevich \and Koushik Kar \and Shreyas Sekar}
\begin{document}


\maketitle

\begin{abstract}
We study large markets with a single seller which can produce many types of goods, and many multi-minded buyers. The seller chooses posted prices for its many items, and the buyers purchase bundles to maximize their utility. For this setting, we consider the following questions: What fraction of the optimum social welfare does a revenue maximizing solution achieve? Are there pricing mechanisms which achieve both good revenue and good welfare simultaneously? To address these questions, we give efficient pricing schemes which are guaranteed to result in both good revenue and welfare, as long as the buyer valuations for the goods they desire have a nice (although reasonable) structure, e.g., that the aggregate buyer demand has a monotone hazard rate or is not too convex. We also show that our pricing schemes have implications for any pricing which achieves high revenue: specifically that even if the seller cares only about revenue, they can still ensure that their prices result in good social welfare without sacrificing profit. Our results holds for general multi-minded buyers in large markets; we also provide improved guarantees for the important special case of unit-demand buyers.
\end{abstract}

\setcounter{page}{1}

\section{Introduction}

Social Welfare and Profit~\footnote{For convenience, we will use revenue and profit interchangeably in this work} are the two canonical objectives in the extensive literature dealing with \emph{envy-free} algorithmic pricing. The study of these two objectives, in isolation from each other, has inspired the design of novel pricing mechanisms for revenue maximization~\cite{balcanBM08,guruswami2005profit} in a variety of interesting markets, and an equally impressive body of work on welfare maximization~\cite{chen2014revenue,deng2013pricing,hsuMRRV16}. While the significance of profit and social welfare is undisputed, it is easy to overlook the fact the two objectives do not exist in a vacuum. For instance, although a monopolistic seller may only be interested in profits, myopically increasing prices while compromising on buyer welfare can lead to poor long-term revenue. This is distinctly true for large markets with repeated engagement where singularly optimizing for one objective while ignoring the other (as in the existing literature) could adversely affect the health of the marketplace~\cite{bachrachCKKK14}. Therefore, not only is it desirable to promote the design of holistic pricing solutions that optimize on both counts simultaneously, it is also crucial to gain a better understanding of how existing algorithms perform in a \textit{bicriteria} sense. Against this backdrop, we seek to address the following questions.
\begin{quote}

{\em What fraction of the optimum social welfare does a revenue maximizing solution achieve? Are there pricing mechanisms which achieve both good revenue and good welfare simultaneously? }

\end{quote}

Both in economics and in computer science~\cite{kleinbergY13}, it is well understood that the goals of maximizing revenue and social welfare are often at odds with each other. Bearing this in mind, we seek to quantify the exact amount of friction between these two objectives in large markets. In particular, we are interested in understanding the surplus achieved by a profit maximizing solution, a problem that has received considerable attention in Auction theory~\cite{abhishekH10, dughmiRS12,kleinbergY13}. The fact that we restrict our attention to the revenue end of the spectrum is motivated partly by the observation that welfare maximizing prices can result in negligible profits (see Example~\ref{ex:socmaxpoorrev}) even for trivial instances. However, unlike most analogous work in the theory of auctions, we are interested in understanding these trade-offs as well as designing bicriteria approximation algorithms in multi-item markets where the seller's modus operandi involves posting prices on the individual goods. In that sense, this work is a high-level extension of the recent body of work on envy-free revenue-maximization~\cite{anshelevichKS15,briest2008uniform,guruswami2005profit} towards additional ambitious objectives.

\subsection{Market Model: Item Pricing for Multi-Minded Buyers}

In this work, we adopt a simple posted-pricing mechanism that captures the operation of most real-life large markets: the seller posts a single price per good, and each buyer purchases a bundle of goods that maximizes their utility. The seller controls a set $T$ of available goods, and can produce any desired quantity $x_t$ of a good $t \in T$, for which he incurs a cost of $C_t(x_t)$. The market consists of many, many buyers who are \emph{multi-minded}, meaning that each buyer $i$ has a `desired set of bundles of goods': the buyer has the same value $v_i$ for each of these bundles and under a given set of prices, purchases the bundle that maximizes her utility.

The market model that we study in this work is reasonably general. Multi-minded buyers represent a class of computationally attractive yet combinatorially non-trivial buyer valuations that have recently been featured in a number of papers~\cite{buchbinderG15,krystaTV15}. Perhaps, more importantly, the class strictly generalizes highly popular models such as \emph{unit-demand} and \emph{single-minded} valuations. Secondly, the convex production costs considered in our framework strictly generalize models with limited (or unlimited) supply, which are usually the norm in the pricing literature. As~\cite{anshelevichKS15, blumGMS11} point out, limited supply is often too rigid for realistic, large markets where the seller may be able to increase production, albeit at a higher cost. Bicriteria algorithms notwithstanding, our work actually presents the first profit-maximization algorithms for general multi-minded buyers even with limited supply.

Our model captures several scenarios of interest wherein a typically profit maximizing seller may be driven to ensure good overall social utility. We illustrate two of them here.
\begin{enumerate}
\item \textbf{PEV Charging}: In a market capturing charging stations for plug-in electric vehicles, each good represents a time slot, and each buyer may desire specific (sets of) time slots based on time constraints and charging capacity. Varying demand and electricity generation costs necessitate differential pricing across time slots~\cite{bhattacharya2014extended,tushar2012economics}.

\item \textbf{Advertising Markets}: A publisher in such a market may decide to use simultaneous posted prices to auction off a set of distinct ad-items (positions or locations on the website) to buyers interested in purchasing specific subsets of these items to reach target audiences.
\end{enumerate}

\subsubsection*{Circumventing Computational Complexity via Oblivious Guarantees}

One of the challenges in essentially any non-trivial setting (including \emph{all} the settings which we consider), is that computing profit-maximizing prices is NP-Hard. This is largely due to the fact that the seller is not allowed to price-discriminate, i.e., it must charge the same price for each good to all the buyers, instead of having different prices for each buyer. In view of the computational barriers surrounding the optimal profit solution, a seller which cares about profit may use a variety of strategies, from approximation algorithms to heuristics. The uncertainty regarding the actual strategy adopted by the seller in turn casts aspersions on the practical significance of our goal of characterizing the social welfare at optimal-revenue solutions. One of the contributions of this work is a simple but powerful framework that allows us to completely circumvent the complexity question: \emph{our guarantees on the social welfare do not depend on the exact details of the pricing mechanism used by the seller, and instead would hold for a wide variety of pricing mechanisms, as long as these prices achieve decent revenue guarantees.}

\subsubsection*{Inverse Demand Functions and $\alpha$-Strong Regularity}
In order to concisely represent the large number of buyers in the market, we classify the buyers into a finite set of buyer types $B$ such that all of the multi-minded buyers belonging to a certain type desire the same set of bundles. Then, each buyer type can be fully captured by a subset of $2^T$ along with an \emph{inverse demand distribution} $\lambda_i(x)$ describing the valuations of buyers having this type. Formally, for any buyer type $i \in B$, $\lambda_i(x) = p$ implies exactly $x$ amount of buyers of type $i$ have a valuation of $p$ or more for each bundle in their common desired set. Although different buyer types may have different demand functions, it is natural to assume that the valuations of all buyers are often sampled (albeit differently) from some global distribution. Because of this, we will make the assumption that the buyer valuations for every type have the same support $[\lambda^{min}, \lambda^{max}]$. 

A first stab at the problem reveals that the above framework is too coarse to obtain meaningful trade-offs between social welfare and profit. Indeed, it is not hard to reason that a precise characterization of the revenue-welfare trade-offs would depend heavily on the distributions of the buyer valuations. To better understand this dependence, we study a class of inverse demand functions parameterized by a single parameter $\alpha \in [0,1]$ known as $\alpha$-strongly regular distributions.

\begin{definition}(\textbf{$\alpha$-Strong Regularity}\cite{coleR14})
\label{def:strongregularity}
A buyer type $i$ is said to have an $\alpha$-Strongly regular demand function ($\alpha$-SR) for $\alpha \in [0,1]$ if for any $x_1 < x_2$, we have $\frac{\lambda_i(x_2)}{|\lambda'_i(x_2)|} -  \frac{\lambda_i(x_1)}{|\lambda'_i(x_1)|} \leq \alpha(x_2 - x_1)$.
\end{definition}

\noindent $\alpha$-Strongly regular distributions were introduced in~\cite{coleR14} as a strict generalization of \emph{monotone hazard rate} (MHR) distributions that smoothly interpolate between MHR $(\alpha = 0$) and regular distributions $(\alpha = 1)$. Our main contribution is the design of mechanisms that simultaneously obtain good revenue and welfare for small $\alpha$, and degrade gracefully as $\alpha$ increases. Note that even the set of $\alpha$-SR functions with $\alpha=0$, for which we obtain the strongest results, contains a large class of important distributions, including exponential (e.g., $e^{-x}$), polynomial (e.g., $1-x^2$), and all log-concave functions. The reader is asked to refer to Section~\ref{sec:prelim} for a more detailed discussion regarding this class of distributions.

\subsection{Our Contributions}
The primary algorithmic contribution of this work is a new $($profit, welfare$)$-bicriteria approximation for general markets with multi-minded buyers and production costs, stated below.

\begin{inf_thm}
We can compute in poly-time a set of item prices that guarantee a $(\Theta(\frac{ \log \Delta}{1-\alpha}))$-approximation to the optimum profit, and a $\Theta(\frac{1}{1-\alpha})$-approximation for welfare, where $\Delta$ is the ratio of the size of the largest bundle desired by any buyer to the smallest one.
\end{inf_thm}

\noindent There are several exciting aspects to this result: $(i)$ Not only do we present the \emph{only-known bi-criteria approximation} algorithm for such general markets, but ours is also the first non-trivial profit-maximization algorithm for multi-minded buyers in the envy-free literature.
$(ii)$ Our welfare guarantees are completely independent of the bundle size ($\Delta$). $(iii)$ When buyers desire small bundles (e.g., for unit-demand valuations where all bundles are unit-size), our solution extracts a constant portion of the social welfare as revenue, as illustrated in Figure~\ref{figure:bicriteria_unitdemand}. Moreover, for the important special case of unit-demand valuations, we provide a much simpler pricing mechanism with slightly better constant approximation factors than in the theorem statement. $(iv)$ Finally, even when buyers desire large bundles, it is reasonable to expect that in markets with similar types of goods, the various bundles are of approximately the same size, i.e., $\Delta$ is small. In the PEV example, one expects different electric vehicles to have similar charging capacity. This case with similar-sized bundles is a considerably non-trivial instance for which the revenue-welfare gap becomes small.

%
%

%
%

\begin{figure}
\centering
\hfill
\subfigure[Exact bounds for profit and welfare as a function of $\alpha$ for unit-demand buyers. We obtain constant-factor bicriteria approximations when the demand is close to \emph{MHR} ($\alpha = 0$), and good guarantees for larger $\alpha$ even when the demand is close to the \emph{equal-revenue distribution} ($\alpha=1$).]{\label{figure:bicriteria_unitdemand} \includegraphics[width = 0.4\linewidth]{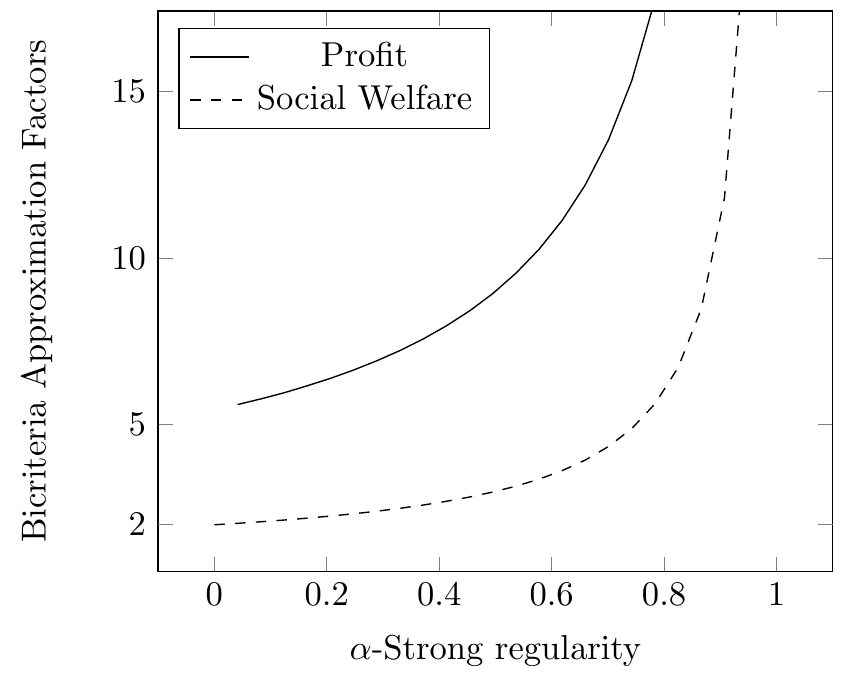}}
\hfill
\subfigure[Actual Revenue-Welfare Curve for unit-demand buyers, $\alpha = 0$:  The exact bicriteria guarantees lie on the trade-off curve, so that one of the two approximation factors is significantly better than the worst-case bound (point $X$). For example, when welfare is only half-optimal, the revenue factor improves from $2e$ to $2$.]{\label{figure:tradeoff} \includegraphics[width=0.4\linewidth]{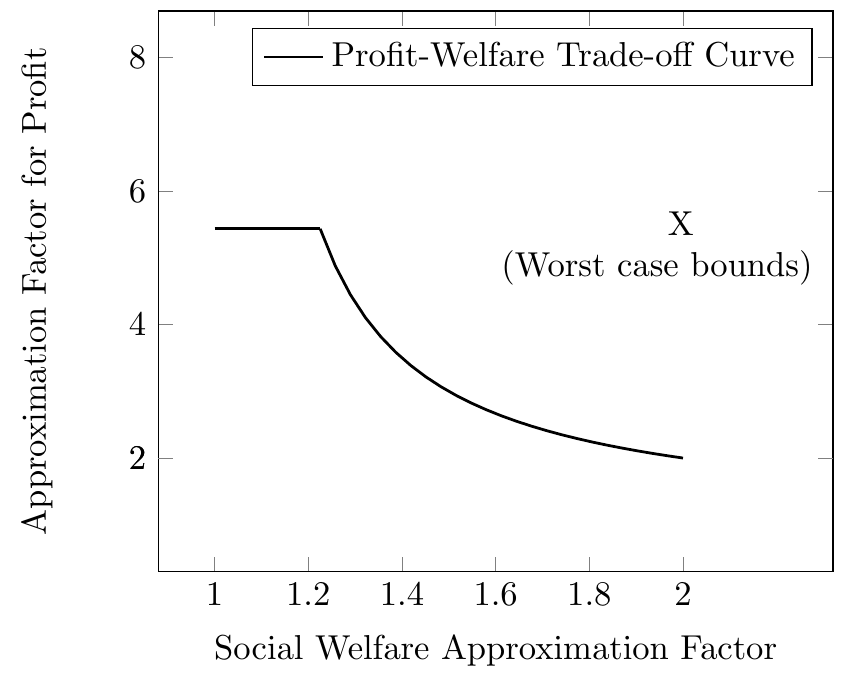}}
\hfill
\end{figure}


\textbf{Profit-Welfare Trade-Offs:} The approximation guarantees in the theorem statement are only the worst-case bounds derived independently for each objective. In fact, as illustrated in Figure~\ref{figure:tradeoff}, we prove that the two worst-case factors never occur simultaneously and the actual bicriteria bound lies on a trade-off curve, resulting in improved approximations for at least one objective, i.e., if the actual welfare is close to the worst-case guarantee, then the profit is much better than in the theorem and vice-versa.
%
%
	
\subsubsection*{Social Welfare of other Revenue-Maximizing Solutions}
All of the revenue guarantees in this paper are shown by comparing the profit of our solution to the optimum welfare, an approach that has strong implications towards bounding the social welfare of other profit-maximizing solutions. Specifically, we design a simple framework using our bicriteria bounds as a black-box result and show that {\em any} pricing mechanism which achieves revenue better than our (efficiently computable) mechanism is guaranteed to deliver at least a $\Theta(\frac{\log \Delta}{1-\alpha})$-approximation to the optimum welfare. This holds whether the seller computes revenue-maximizing prices (an NP-Hard problem even for unit-demand with $\alpha=0$), or uses a more efficient mechanism. Thus, one of the main messages of this paper is that even a seller interested solely in maximizing profits can guarantee a good social welfare without sacrificing any revenue. For example, in unit-demand markets with MHR valuations, there is absolutely no excuse for such a seller to not also achieve at least a $2e$-approximation to the optimum social welfare irrespective of their preferred pricing mechanism.

\noindent\textbf{Technical Difficulties:}
Although our large market model falls in the realm of settings where it is possible to efficiently compute social welfare maximizing prices, exploiting this for profit-maximization as in~\cite{balcanBM08,guruswami2005profit, cheung2008approximation} leads to poor approximation guarantees, for e.g., $O(\frac{\lambda^{max}}{\lambda^{min}})$-bounds even for unit-demand instances with no production costs. Instead, our techniques rely crucially on exploiting the structure of $\alpha$-strong regular functions to efficiently compute prices that compromise neither on revenue nor welfare; following this, we also develop new computational insights for characterizing profit via ascending-price procedures in settings with multi-minded buyers and production costs. Finally, in this work, we will focus solely on deterministic pricing mechanisms. While randomized mechanisms that mix between welfare and profit maximizing solutions are a theoretically fascinating apparatus, the ensuing price fluctuations and alternating buyer dissatisfaction render them unsuitable for many settings of interest~\cite{diakonikolasPPS12,likhodedovS04}

\subsubsection*{Related Work: Existing Bicriteria Algorithms}
The primary barrier towards designing envy-free revenue-maximizing prices --- a lack of insight regarding the optimum solution --- is also, surprisingly, the chief architect behind the existence of many (implicit) bi-criteria approximation algorithms in the algorithmic pricing literature. More concretely, a majority of the revenue-maximization algorithms in the literature~\cite{balcanBM08, briest2011buying, chawla2007algorithmic, cheung2008approximation, guruswami2005profit, ImLW12} achieve their approximation factors for revenue by comparing it to the optimum social welfare. Exploiting these revenue-welfare ties further, it is not hard to see 
that such an $\alpha$-approximation algorithm for revenue trivially results in a $(\alpha, \alpha)$-bicriteria approximation. For instance, the results from~\cite{guruswami2005profit} and~\cite{balcanBM08} immediately imply $(\Theta(\log |B|), \Theta(\log |B|))$-bicriteria approximation algorithms for unit-demand and unlimited supply markets respectively.

Our results improve upon the work mentioned above across multiple dimensions. In contrast to the trivial $(\alpha,\alpha)$ type bounds in previous work, our specific focus on bicriteria approximations leads to significant improvements in social welfare without sacrificing much profit. We also remark that while specific bicriteria bounds were also provided in~\cite{anshelevichKS15}, their results only apply to the easiest version of our setting ($\alpha = 0, \Delta =1$). Indeed, we study general settings with multi-minded buyers and production costs, instead of unit-demand and single-mined valuations with limited supply, which are much more common in algorithmic pricing literature. Finally, our bounds are more nuanced due to their dependence on $\alpha$, and independence from $|B|$, e.g., $\Theta(\log |B|)$-bounds are totally unacceptable for large markets. 

\subsection{Other Related Work}
While $($revenue,welfare$)$-bicriteria approximation algorithms have not specifically been tackled in the pricing literature beyond the single good case, the broader understanding of trade-offs between the two objectives has been a prominent motif that has cropped up time and again in various forms. We first highlight a few overarching differences between our results and other work on  profit-welfare trade-offs, specifically in auctions: $(i)$ we study reasonably general combinatorial markets with multi-minded buyers and production cost functions, and not just limited-supply settings with unit-demand buyers as in other work,  $(ii)$ our results do not depend on externalities such as tuning the objective function or the addition of external buyers as in Bulow-Klemperer type theorems~\cite{AggarwalGM09,bulow1994auctions,roughgarden2007efficiency}, and finally $(iii)$ unlike similar (types of) results in Bayesian auctions, our pricing mechanisms are non-discriminatory, and therefore, envy-free.

Characterizing the efficiency of revenue-optimal mechanisms is an extremely fundamental question that has spurred multiple avenues of research starting with Bulow and Klemperer's seminal result~\cite{bulow1994auctions} that adding a single buyer is usually more profitable than blindly optimizing revenue. Most pertinent to the questions posed in this work are the tight bounds on the (in)efficiency of the Myerson revenue-maximizing mechanism for single good settings appearing in~\cite{abhishekH10, kleinbergY13}: in particular,~\cite{kleinbergY13} provides welfare bounds for general single-parameter auctions as a function of the distribution of buyer valuations, as we do in this work.

Moving beyond welfare lower bounds, other researchers have adopted a more constructive approach by designing explicitly taking into account both the objectives either via bicriteria mechanisms~\cite{sivanST12, daskalakisP11}, or by optimizing linear combinations of revenue and welfare~\cite{likhodedovS04}, or even characterizing the revenue-welfare Pareto curve~\cite{diakonikolasPPS12}. We reiterate that all of the above papers consider simple single good settings, which are easier to characterize, and where the revenue-optimal mechanism is well understood. Moreover, in comparison to the revenue-welfare Pareto curves in~\cite{diakonikolasPPS12}, the implicit trade-off curves in our work are of a different nature as they are obtained for a single instance on top of the worst-case bounds. Finally, multi-objective trade-offs are quite popular in the Sponsored Search literature~\cite{bachrachCKKK14,deng2013pricing,lucier2012revenue,roughgarden2007efficiency}, where the repeated engagement and the tight competition (between sellers)  necessitates approximately revenue-optimal mechanisms that do not compromise on overall social welfare. Such settings can essentially be viewed as a special case of unit-demand markets.
%
%
%
%
%
%
%
%


\section{Model and Preliminaries}
\label{sec:prelim}
Our market model comprises of a single seller controlling a set $T$ of goods and a large number of infinitesimal buyers. The buyers can be concisely represented using a finite set of multi-minded buyer types $B$: for a given type $i \in B$, all the buyers having this type desire the same set of item bundles $B_i \subseteq 2^T$, and each buyer is indifferent between the bundles in $B_i$. Notice that when all of the desired bundles are of unit cardinality, our model reduces to the \textit{unit-demand} case; when each buyer type desires only a single bundle $(|B_i| = 1)$, we get \textit{single-minded} valuations. For general multi-minded valuations, we also assume the free disposal property. Finally, buyers belonging to the same type may hold different valuations for the same bundles, this is modeled by way of an \textit{inverse demand function} $\lambda_i(x)$ for every $i \in B$; $\lambda_i(x) = p$ implies that exactly $x$ amount of buyers of type $i$ value the bundles in $B_i$ at valuation $p$ or more. Given $\lambda_i(x) = p$, it is not hard to see that the total utility derived when $x$ amount of buyers purchase some bundle at price $p$ is $u_i(x) = \int_{z=0}^x \lambda_i(z)dz.$

The market operates according to a natural pricing mechanism with the seller posting a price $p_t$ for each good $t \in T$. Buyers purchase one of the utility-maximizing bundles available to them, i.e., a buyer belonging to type $i$ will purchase the cheapest bundle in $B_i$ as long as its price is no larger than her valuation for the same. Therefore, if $\bar{p}_i$ denotes the bundle in $B_i$ with the smallest price and $\bar{x}_i$ is the population of buyers of this type who purchased some bundle, then $\lambda_i(\bar{x}_i) = \bar{p}_i$.

\subsubsection*{Pricing Solutions, Social Welfare, and Revenue}

We use $(\vec{p}, \vec{x}, \vec{y})$ to represent the outcome of the market mechanism. Here $\vec{p}$ is the vector of prices, $\vec{x}$ denotes the allocation to the buyers with $x_i$ being the total amount of good purchased by buyers of type $i$, and finally $y_t$ is the total amount of good $t \in T$ sold to the buyers. We now define the two main metrics that form the crux of this paper.

\begin{definition}
The social welfare of a solution $(\vec{p}, \vec{x}, \vec{y})$ is defined to be the total utility of all the buyers and the seller and therefore, is equal to the utility of the buyers minus the production cost incurred by the seller, i.e.,

$$SW(\vec{p}, \vec{x}, \vec{y}) = \sum_{i \in B}\int_{x=0}^{x_i} \lambda_i(x)dx - \sum_{t \in T}C_t(y_t).$$
\end{definition}

Notice that the social welfare is independent of the prices, and depends only on $(\vec{x}, \vec{y})$.

\begin{definition}
The (seller's) profit at the solution $(\vec{p}, \vec{x}, \vec{y})$ is the total income due to each good in the market minus the total production cost incurred, i.e.,

$$\pi(\vec{p}, \vec{x}, \vec{y}) = \sum_{t \in T}[p_ty_t - C_t(y_t)].$$
\end{definition}

One of the main goals of this paper is to obtain a lower bound on the social welfare of the profit maximizing solution. We will use $(\vec{p^{opt}}, \vec{x^{opt}}, \vec{y^{opt}})$ to denote the maximum profit solution, and $(\vec{p^{*}}, \vec{x^{*}}, \vec{y^{*}})$ to denote the solution maximizing welfare. Sometimes we will also use $SW^*=SW(\vec{p^{*}}, \vec{x^{*}}, \vec{y^{*}})$ and $\pi^{opt}=\pi(\vec{p^{opt}}, \vec{x^{opt}}, \vec{y^{opt}})$. Thus the quantity we are interested in is $\frac{SW(\vec{p^*},\vec{x^*}, \vec{y^*})}{SW(\vec{p^{opt}}, \vec{x^{opt}},\vec{y^{opt}})}$.

We remark that the maximum social welfare solution should ideally be represented as $(\vec{x^*}, \vec{y^*})$. However, due to the following simple claim (inspired by a similar claim from~\cite{anshelevichKS15}), we will define $p^*_t$ as below, and let $(\vec{p^{*}}, \vec{x^{*}}, \vec{y^{*}})$ denote this specific welfare-maximizing solution. More importantly, as we show in the Appendix, such a welfare maximizing solution can be computed efficiently via a simple convex program.

\begin{claim}
There exists a welfare-maximizing solution $(\vec{p^*},\vec{x^*},\vec{y^*})$ where for every good $t \in T$, $p^*_t = c_t(y^*_t)$. (Here $c_t$ is the derivative of $C_t$.)
\end{claim}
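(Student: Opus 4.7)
The plan is to start with any welfare-maximizing allocation $(\vec{x^*}, \vec{y^*})$, together with its implicit bundle assignment (for each type $i$, let $S_i \in B_i$ denote a bundle consumed by buyers of type $i$, so $y^*_t = \sum_{i : t \in S_i} x^*_i$), define $p^*_t := c_t(y^*_t)$, and then verify that $(\vec{p^*}, \vec{x^*}, \vec{y^*})$ is a valid market outcome. By the definition of the market mechanism in the preceding paragraphs, this amounts to checking two things: (i) for every type $i$, the bundle $S_i$ is a cheapest bundle in $B_i$ under the prices $\vec{p^*}$, and (ii) the marginal buyer is exactly indifferent, i.e., $\lambda_i(x^*_i) = \sum_{t \in S_i} p^*_t$.

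For (ii), I would observe that the social welfare objective $\sum_i \int_0^{x_i}\lambda_i(z)\,dz - \sum_t C_t\bigl(\sum_{i : t \in S_i} x_i\bigr)$ is concave in $\vec{x}$: each $\int_0^{x_i}\lambda_i$ is concave because $\lambda_i$ is non-increasing, and each $-C_t$ is concave because $C_t$ is convex. Hence the standard first-order condition at the unconstrained optimum (or the appropriate KKT inequality at the boundary $x^*_i = 0$) gives $\lambda_i(x^*_i) = \sum_{t \in S_i} c_t(y^*_t) = \sum_{t \in S_i} p^*_t$, which is exactly (ii). The boundary case $x^*_i=0$ is handled symmetrically: it corresponds to $\lambda_i(0) \le \sum_{t \in S_i} p^*_t$, meaning no buyer of type $i$ would purchase under these prices, again consistent with the market outcome.

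For (i), I would argue by contradiction. Suppose some type $i$ has a bundle $S'_i \in B_i$ with $\sum_{t \in S'_i} p^*_t < \sum_{t \in S_i} p^*_t$. Shift an infinitesimal quantity $\delta > 0$ of type $i$ demand from $S_i$ to $S'_i$; this leaves $x_i$ (and therefore the buyer utility contribution $\int_0^{x_i}\lambda_i$) unchanged while perturbing the $y_t$'s. To first order, the change in total production cost is $\delta\bigl(\sum_{t \in S'_i \setminus S_i} c_t(y^*_t) - \sum_{t \in S_i \setminus S'_i} c_t(y^*_t)\bigr) = \delta\bigl(\sum_{t \in S'_i} p^*_t - \sum_{t \in S_i} p^*_t\bigr) < 0$, where the cancellation on $S_i \cap S'_i$ yields the second equality. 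Hence welfare strictly increases, contradicting the optimality of $(\vec{x^*}, \vec{y^*})$. Whenever there is a tie among cheapest bundles in $B_i$, we are free to pick $S_i$ to be any one of them, which is what justifies the ``there exists'' in the statement.

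The main obstacle is really bookkeeping rather than depth: one has to be careful with boundary cases of the first-order condition (types with $x^*_i=0$ or whose marginal condition is a KKT inequality), with ties between bundles in $B_i$ (resolved by choosing the assignment $S_i$ appropriately), and with overlapping bundles $S_i \cap S'_i$ in the swap argument (handled by the cancellation above). None of these raise genuine difficulties, and the entire proof is essentially an instance of the first welfare theorem applied to the concave welfare program defined in the paper.
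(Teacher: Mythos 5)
Your proof is correct, and it is essentially the canonical argument for this kind of statement. The paper itself does not spell out a proof of Claim~\ref{clm_optchar} --- it merely attributes the claim to a similar result in prior work (Anshelevich et al.) --- so there is no in-paper proof to compare against, but the route you take (define $p^*_t = c_t(y^*_t)$, verify the stationarity/KKT conditions of the concave welfare program to show the marginal buyer is indifferent, and use a first-order swap argument to show each consumed bundle is cheapest under those prices) is the standard one and is what any careful proof of this fact would do. Two small points worth tightening if you wrote this out in full: first, in the multi-minded model a single type $i$ may split its demand across several bundles, so rather than a single $S_i$ you should track the per-bundle demands $x^*_i(S)$ and observe that the KKT conditions force $\lambda_i(x^*_i) = \sum_{t \in S} p^*_t$ for \emph{every} bundle $S$ with $x^*_i(S) > 0$ and $\lambda_i(x^*_i) \le \sum_{t \in S} p^*_t$ for those with $x^*_i(S) = 0$, which subsumes both your points (i) and (ii) in a single stroke; second, strictly speaking this claim is a second-welfare-theorem-type decentralization (implementing an efficient outcome via marginal-cost prices) rather than a first-welfare-theorem statement, though the underlying concavity argument is the same.
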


%
%
\subsubsection*{Structure of the Demand and Cost Functions} In this work, we are interested in studying markets with many, many buyers and therefore, correspondingly we take both the inverse demand and production cost functions to be continuously differentiable. In addition, we also make the standard assumption that the utility function $u_i(x)$ is concave for all $i \in B$ and therefore its derivative $\lambda_i(x)$ is non-increasing with $x$. Finally, we consider production cost functions that are \emph{doubly convex} i.e, both $C_t$ and its derivative $c_t(x) = \frac{d}{dx}C_t(x)$ are convex and non-decreasing for all $t \in T$ and further, $C_t(0) = c_t(0) = 0$. A number of well studied cost functions fall within our framework~\cite{blumGMS11}.

As mentioned previously, it is often natural to assume that the inverse demand distributions for different buyer types have the same support $[\lambda^{min}, \lambda^{max}]$. In fact, \emph{all} of our results hold under the more general \emph{uniform peak} assumption, which will be assumed for the rest of this work.

\begin{definition}{(Uniform Peak Assumption)}
For every $i \in B$, $\lambda_i(0) = \lambda^{max}$.
\end{definition}

\noindent\textbf{$\alpha$-SR inverse demand} (Definition~\ref{def:strongregularity}) Recently, $\alpha$-strong regularity has gained some popularity~\cite{coleR14,coleR15,huang2015making} as an elegant characterization of the class of regular functions, which encompasses most well-studied demand distributions including polynomial ($\lambda(x) = 1-x^2$; $\alpha=0$), exponential ($\lambda(x) = e^{-x}$; $\alpha=0$), power law ($\lambda(x) = \frac{1}{\sqrt{x}}$; $\alpha=\frac{1}{2}$), and the equal-revenue distribution ($\lambda(x) = \frac{1}{x}$; $\alpha=1$). For such functions, one can interpret $\alpha$ as a measure of the convexity of the function as larger values as $\alpha$ imply greater convexity or alternatively as a bound on the volatility of the inverse demand $\lambda_i(x)$ as every $\alpha$-SR demand function satisfies  $\frac{d}{dx}\left(\frac{\lambda(x)}{|\lambda'(x)|}\right) \leq \alpha$. As expected, the equal-revenue distribution $(\alpha=1)$ leads to the worst-case bounds for all of our results: in fact even in single good, single buyer type markets, the revenue-optimal solution for $\alpha=1$ only extracts a negligible fraction of the optimum social welfare. However, what is surprising (as evidenced by Figure~\ref{figure:bicriteria_unitdemand}) is that we obtain reasonably good performance guarantees even when $\alpha$ is larger than $\frac{1}{2}$. Note that even $\alpha$-SR functions with very small $\alpha$ include a very large class of interesting demand functions (see above), including any log-concave function.

\section{Warm-up: Profit and Welfare for Unit-Demand Markets}
\label{sec:unitdemandresult}
As a first step towards stating our general results in Section \ref{sec:multi-minded}, we consider the important special case of unit-demand markets, perhaps the most popular class of valuations in the algorithmic pricing literature~\cite{briest2011buying, chawla2007algorithmic, guruswami2005profit}.
Recall that unit-demand valuations are a simple sub-class of multi-minded functions, wherein for each buyer (type) $i \in B$, the bundles desired by $i$ are singleton sets. Therefore, we will overload notation, and when talking about unit-demand buyers say that $B_i \subseteq B$ is the set of items acceptable by all of the buyers of type $i$. Our main result in this section is a simple pricing rule that achieves a $(\Theta(\frac{1}{1-\alpha}), \frac{2-\alpha}{1-\alpha})$-bicriteria approximation algorithm for revenue and welfare respectively when the buyers have $\alpha$-SR inverse demand functions. While we present a generalized version of this theorem in Section~\ref{sec:multi-minded}, the simple techniques from this section do not really extend to multi-minded case. In fact, we require several new gadgets and a more sophisticated pricing mechanism to achieve the generalization.

That said, our reasons for dedicating an entire section to unit-demand buyers is two-fold: $(i)$ the algorithm presented in this section is extremely simple and the constant factors hidden by the asymptotic bound are smaller, and $(ii)$ the unit-demand case provides a platform for us to build intuition and more importantly, discuss the various implications of our results including the profit-welfare trade-off and the ability to derive welfare bounds for the profit-maximizing solution. Before stating our main theorem, we give a simple example to highlight the poor revenue guarantees obtained by the welfare maximizing prices even for single-good markets.

%

\begin{exampl}
\label{ex:socmaxpoorrev}
Consider a single good market with a negligible production cost function, say $C(x) = \epsilon x$. Obviously, there is only one buyer type, and suppose that its inverse demand is $\lambda(x) = 1 - x$ ($\alpha$-SR for $\alpha = 0$). It is easy to observe that at the social wefare maximizing solution, the good is priced at $p^* = \epsilon$, resulting in near-zero revenue. On the contrary, one can price at $p^{opt} = \frac{1}{2}$ to obtain a constant fraction of the optimum welfare as profit.
\end{exampl}



\begin{theorem}
\label{thm_mainrevenuewelfare}
For any unit-demand instance where buyers have $\alpha$-strongly regular inverse demand functions, there is a poly-time $(\zeta, \frac{2-\alpha}{1-\alpha})$-bicriteria approximation algorithm for revenue and social welfare respectively, where

$$\zeta = 2(\frac{1}{1-\alpha})^{\frac{1}{\alpha}} + \frac{\alpha}{1-\alpha} = \Theta\left(\frac{1}{1-\alpha}\right).$$
\end{theorem}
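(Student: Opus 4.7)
The plan is to construct the desired prices by perturbing a welfare-maximizing solution $(\vec{p^*}, \vec{x^*}, \vec{y^*})$ (which the paper notes is computable in polynomial time via a convex program) using a single scaling parameter $\beta \in (0,1)$ chosen as a function of $\alpha$. Specifically, I would raise the price of each good $t$ to $\hat{p}_t = \lambda_i(\beta x^*_i)$, where $i$ is a type ``supported'' by good $t$ at the welfare-max allocation; a short envy-freeness check that uses $p^*_t = c_t(y^*_t) = \lambda_i(x^*_i) \leq \lambda_i(\beta x^*_i)$ (from the cited claim plus monotonicity of $\lambda_i$) shows that each type $i$ still views her originally-assigned good as a cheapest acceptable option, so the market clears with the scaled demand $\hat{x}_i = \beta x^*_i$ and corresponding $\hat{y}_t = \beta y^*_t$.

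Next I would bound the welfare of the new solution from below. Social welfare decomposes into a buyer-surplus term $\sum_i \int_0^{x_i} \lambda_i$ and a production-cost term $-\sum_t C_t(y_t)$. Since each $C_t$ is convex with $C_t(0) = 0$, convexity gives $C_t(\beta y^*_t) \leq \beta C_t(y^*_t)$, so the cost term moves favorably under scaling. For the buyer-surplus term, I would apply $\alpha$-SR: starting from $\phi_i(x) = \lambda_i(x)/|\lambda'_i(x)|$ with $\phi'_i \leq \alpha$, a one-variable calculation that integrates $\lambda'_i/\lambda_i = -1/\phi_i$ on $[\beta x^*_i, x^*_i]$ controls $\lambda_i(\beta x^*_i)$ relative to $\lambda_i(x^*_i)$ and hence the lost surplus $\int_{\beta x^*_i}^{x^*_i} \lambda_i(z)\,dz$. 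Assembling the two bounds and optimizing $\beta$ yields the stated welfare guarantee $\frac{2-\alpha}{1-\alpha}$.

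For the profit, the revenue collected on good $t$ is $\hat{p}_t \hat{y}_t = \lambda_i(\beta x^*_i) \cdot \beta y^*_t$, and the per-good profit is at least $\beta \lambda_i(\beta x^*_i) y^*_t - \beta C_t(y^*_t)$ by the convex-cost inequality above. Aggregating and rewriting in terms of buyer types gives total profit at least $\sum_i \beta x^*_i \lambda_i(\beta x^*_i) - \beta \sum_t C_t(y^*_t)$. A second $\alpha$-SR calculation lower-bounds $\beta x \lambda_i(\beta x)$ in terms of $\int_0^{x} \lambda_i(z)\,dz$ --- this is the step that produces the factor $(1/(1-\alpha))^{1/\alpha}$ inside $\zeta$ --- and comparing the result to $SW^* \geq \pi^{opt}$ yields the $\zeta$-approximation. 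The natural choice $\beta = (1-\alpha)^{1/\alpha}$ balances both objectives and reproduces the exact constants in the theorem statement, in particular $\zeta \to 2e$ and welfare factor $\to 2$ as $\alpha \to 0$.

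The main obstacle I anticipate is the unit-demand coupling: a single good $t$ may be supported by several buyer types at the welfare max, so $y^*_t = \sum_i x^*_{i,t}$ aggregates contributions with potentially different values of $\lambda_i(\beta x^*_i)$, and a single price $\hat{p}_t$ cannot be simultaneously tight for every such type. The fix will be a per-type charging scheme: each unit of good $t$ sold at welfare max to buyers of type $i$ is charged against $i$'s own $\alpha$-SR inequality, so that the global profit and welfare bounds reduce cleanly to the one-dimensional computations above, with production costs decoupled from the per-type analysis through the convexity inequality $C_t(\beta y^*_t) \leq \beta C_t(y^*_t)$. Since the welfare-max solution is explicit and the perturbation depends only on $\beta$, the construction is immediately algorithmic.
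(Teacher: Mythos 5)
Your approach is genuinely different from the paper's: you scale every buyer's demand by a common factor $\beta=(1-\alpha)^{1/\alpha}$, whereas the paper posts a \emph{price floor} $\tilde p_t=\max(p^*_t,\tilde p)$ with $\tilde p=\lambda^{max}(1-\alpha)^{1/\alpha}$, so that buyers already paying at least $\tilde p$ are left completely untouched. That distinction is not cosmetic, and I see two concrete gaps in your proposal.

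First, the construction $\hat p_t=\lambda_i(\beta x^*_i)$, $\hat x_i=\beta x^*_i$, $\hat y_t=\beta y^*_t$ is not realizable by item prices. In a cost-minimizing welfare-max allocation, two types $i_1,i_2$ buying the same good $t$ must satisfy $\lambda_{i_1}(x^*_{i_1})=\lambda_{i_2}(x^*_{i_2})=p^*_t$, but their demand curves can be entirely different, so $\lambda_{i_1}(\beta x^*_{i_1})\ne\lambda_{i_2}(\beta x^*_{i_2})$ in general and no single posted $\hat p_t$ induces both scaled demands. Moreover, even ignoring that collision, the allocation $\hat y$ at equilibrium is determined by which goods each buyer finds cheapest under $\hat p$; there is no reason buyers won't migrate between goods, so $\hat y_t=\beta y^*_t$ is an assertion, not a consequence. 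You flag this as ``the main obstacle'' and propose to fix it by a per-type charging argument in the \emph{analysis}, but the problem is in the \emph{construction}: item pricing cannot charge type-dependent prices, which is precisely why the paper works with a price threshold (a per-good object) rather than a demand rescaling (a per-type object).

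Second, the welfare estimate does not close. The paper's bound $SW^*-SW(\tilde p)\le\frac{1}{1-\alpha}\pi(\tilde p)$ hinges on Corollary~\ref{corr_crucial}: for buyers who actually see the threshold price, $\lambda_i(\tilde x_i)=\tilde p=\lambda^{max}(1-\alpha)^{1/\alpha}$, so $\lambda_i(0)\ge(\frac{1}{1-\alpha})^{1/\alpha}\lambda_i(\tilde x_i)$ and hence $\lambda_i(\tilde x_i)/|\lambda'_i(\tilde x_i)|\le\tilde x_i$; buyers in $B^H$ contribute zero welfare loss, so they never need this inequality. Under uniform scaling, a buyer with $\lambda_i(x^*_i)$ near $\lambda^{max}$ has $\lambda_i(\beta x^*_i)$ even closer to $\lambda^{max}$, so $\lambda^{max}\ge(\frac{1}{1-\alpha})^{1/\alpha}\lambda_i(\beta x^*_i)$ \emph{fails} exactly where your construction still incurs welfare loss. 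Your one-variable integration of $\lambda'_i/\lambda_i$ on $[\beta x^*_i,x^*_i]$ would then need to control $\lambda_i(\beta x^*_i)/|\lambda'_i(\beta x^*_i)|$ by a quantity that is not available, and the factor $\frac{2-\alpha}{1-\alpha}$ does not come out. The threshold construction is what makes the right set of buyers lossless and restricts the $\alpha$-SR machinery to buyers for whom it actually applies.
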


The exact guarantees for profit and welfare are illustrated in Figure~\ref{figure:bicriteria_unitdemand}.

\noindent (Algorithm) The bicriteria approximation factor is achieved by the following simple pricing mechanism
\begin{itemize}
\item Compute the max-welfare solution $(\vec{p^{*}}, \vec{x^{*}}, \vec{y^{*}})$.
\item For every good $t$, set its price $\tilde{p}_t = \max{(p^*_t, \lambda^{max} (1-\alpha)^{\frac{1}{\alpha}})}.$
\end{itemize}

\noindent{\bf Proof Sketch:} We prove this theorem in the Appendix, but here we give some intuition for why this produces a good approximation to both profit and welfare. Let $\tilde{p}=\lambda^{max} (1-\alpha)^{\frac{1}{\alpha}}$. We begin by analyzing these types of {\em thresholded} pricing schemes, in which the price is simply the maximum of the optimum price $p^*_t$ and a constant. In Lemma \ref{lem_proxysimilartoopt}, we show that such solutions have nice structure; essentially we can think of buyers who purchase goods priced at $\tilde{p}$ and those who purchase goods priced at $p^*_t$ as separate systems.

\begin{lemma}
\label{lem_proxysimilartoopt}
Suppose that $((\tilde{p})_{t \in T}, (\tilde{x})_{i \in B}, (\tilde{y})_{t \in T})$ is a pricing solution resulting from a thresholded pricing vector. Then,
\begin{enumerate}
\item The market can be clustered into two mutually disjoint sets of buyers and goods $(B^H, T^H)$ and $(B^L, T^L)$ so that the buyers in each cluster only purchase the goods in the same cluster and $(a)$ for $(i,t) \in (B^H,T^H)$, $\tilde{p}_t = p^*_t$, $\tilde{x}_i = x^*_i$, and $\tilde{y}_t = y^*_t$; $(b)$ for $(i,t) \in (B^L,T^L)$, $\tilde{p}_t \geq p^*_t$, $\tilde{x}_i \leq x^*_i$, and $c_t(\tilde{y}_t) \leq c_t(y^*_t)$.

\item $\vec{\tilde{y}}$ is a welfare-maximizing allocation vector with respect to the demand vector $\vec{\tilde{x}}$.
\end{enumerate}
\end{lemma}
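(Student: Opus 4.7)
The plan is to build the partition explicitly from the threshold, verify that the two clusters evolve independently, and then reduce the final marginal-cost bound to a shadow-price monotonicity property on the $(B^L,T^L)$ subsystem. Define $T^H = \{t \in T : p^*_t \geq \tilde{p}\}$ and $T^L = T \setminus T^H$; by construction $\tilde{p}_t = p^*_t$ on $T^H$ and $\tilde{p}_t = \tilde{p} > p^*_t$ on $T^L$. The matching buyer partition is $B^H = \{i \in B : \min_{t \in B_i} p^*_t \geq \tilde{p}\}$ and $B^L = B \setminus B^H$: any $i \in B^H$ has $B_i \subseteq T^H$, while every $i \in B^L$ has some good in $B_i \cap T^L$ (the one realizing the original minimum).

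To establish part (a), for $i \in B^H$ every price in $B_i$ is unchanged, so the utility-maximizing choice and its price match the welfare-maximizing solution, giving $\tilde{x}_i = x^*_i$. Moreover, in the welfare-maximizing solution an $i \in B^H$ already purchases from $T^H$ (its entire bundle lies there) and an $i \in B^L$ purchases from $T^L$ (its cheapest lies there), so $T^H$ is consumed exclusively by $B^H$ buyers in both solutions, yielding $\tilde{y}_t = y^*_t$ for $t \in T^H$. For part (b), each $i \in B^L$ now sees a uniform price $\tilde{p}$ on $B_i \cap T^L$ and prices at least $\tilde{p}$ on $B_i \cap T^H$, so $i$ becomes indifferent among $B_i \cap T^L$ and purchases only from $T^L$; since $\tilde{p} > p^*_{t^*_i} = \lambda_i(x^*_i)$ and $\lambda_i$ is non-increasing, I obtain $\tilde{x}_i = \lambda_i^{-1}(\tilde{p}) \leq x^*_i$.

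The main obstacle is the per-good bound $c_t(\tilde{y}_t) \leq c_t(y^*_t)$ on $T^L$, which I would prove jointly with part (2). Because $B^L$ buyers are indifferent across $B_i \cap T^L$, the allocation $\tilde{y}$ is not pinned down by buyer choices, so I would choose it as the welfare-maximizing (cost-minimizing) distribution of the demands $\tilde{x}$ among compatible goods, immediately giving part (2). For the marginal-cost bound I would invoke the KKT conditions of the cost-minimization programs defining $\tilde{y}$ and $y^*$: at each optimum there exist shadow prices $\tilde{\lambda}_i$ and $\lambda^*_i$ with $c_t(\tilde{y}_t) = \tilde{\lambda}_i$ whenever $\tilde{y}_{i,t} > 0$ and dual feasibility $c_t(y^*_t) \geq \lambda^*_i$ for every $t \in B_i$. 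Since $\tilde{x}_i \leq x^*_i$ componentwise and the minimum-cost function is convex in the demand vector, one expects $\tilde{\lambda}_i \leq \lambda^*_i$; chaining then gives $c_t(\tilde{y}_t) = \tilde{\lambda}_i \leq \lambda^*_i \leq c_t(y^*_t)$ for any $t \in T^L$ used in $\tilde{y}$, while the unused-good case is trivial since $\tilde{y}_t = 0$. Proving this componentwise shadow-price monotonicity cleanly is the most delicate step, and I expect to handle it via a short exchange argument comparing the two convex programs on the $(B^L, T^L)$ subsystem.
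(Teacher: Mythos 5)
Your partition into $(B^H,T^H)$ and $(B^L,T^L)$ and the proofs of $\tilde{p}_t = p^*_t$, $\tilde{x}_i = x^*_i$, $\tilde{y}_t = y^*_t$ on the high side and $\tilde{p}_t \geq p^*_t$, $\tilde{x}_i \leq x^*_i$ on the low side are essentially the paper's argument. However, the proposal has two genuine gaps, both concentrated in the part you flagged as delicate.

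First, the marginal-cost bound $c_t(\tilde{y}_t) \leq c_t(y^*_t)$ on $T^L$ is not actually proved: you say you ``expect'' componentwise shadow-price monotonicity to follow from an exchange argument on the $(B^L,T^L)$ subsystem, but this is precisely the content of the statement. The paper invokes a packaged monotonicity result (Lemma~\ref{lem_diffcostmonoton}, adapted from~\cite{anshelevichKS15}): if $\vec{p_1} \geq \vec{p_2}$ componentwise and the corresponding allocations are cost-minimal, then $\vec{x_2} \geq \vec{x_1}$, $c_t(y^1_t) \leq c_t(y^2_t)$ for all $t$, and $r_i(\vec{y^1}) \leq r_i(\vec{y^2})$ for all $i$. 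Your KKT/shadow-price framing is compatible with proving this lemma, but until the exchange argument is written out the proposal is incomplete on exactly the point the lemma is standing on.

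Second, and more importantly, you misread part (2). The claim is that $\vec{\tilde{y}}$ is the minimum-cost allocation of $\vec{\tilde{x}}$ over \emph{all} feasible assignments of buyers to goods in their desired bundles, not merely over allocations consistent with the posted prices $\vec{\tilde{p}}$. You treat it as definitional (``I would choose it as the welfare-maximizing distribution... among compatible goods, immediately giving part (2)''), but a buyer $i \in B^L$ is forced by the prices to buy only $T^L$ goods even when $B_i$ also contains $T^H$ goods; a priori one could imagine that routing some of $i$'s demand to an idle, cheap $T^H$ good would reduce cost. Part (2) asserts this cannot happen, and it requires checking these cross-edges: one must show $r_i(\vec{\tilde{y}}) \leq c_t(\tilde{y}_t)$ for every $i \in B^L$ and $t \in B_i \cap T^H$. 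The paper does this by chaining $r_i(\vec{\tilde{y}}) \leq r_i(\vec{y^*}) \leq c_t(y^*_t) = c_t(\tilde{y}_t)$, where the first inequality again comes from the monotonicity lemma, the second from optimality of $y^*$, and the third from part~(1a). Your proposal skips the cross-edge check entirely, so part (2) as stated remains unproved. Note that part (2) is not cosmetic: the downstream use of $r_i(\vec{\tilde{y}})$ in Claim~\ref{clm_proxy2} and the application of Lemma~\ref{lem_costdiff} both require $\vec{\tilde{y}}$ to be a genuine global cost-minimizer for $\vec{\tilde{x}}$.
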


We then prove that due to our choice of $\tilde{p}$ the following two bounds hold: $$SW(\vec{\tilde{p}}, \vec{\tilde{x}}, \vec{\tilde{y}})\leq (2(\frac{1}{1-\alpha})^{\frac{1}{\alpha}}-1)\pi(\vec{\tilde{p}}, \vec{\tilde{x}}, \vec{\tilde{y}}),$$ and $$SW(\vec{p^{*}}, \vec{x^{*}}, \vec{y^{*}}) - SW(\vec{\tilde{p}}, \vec{\tilde{x}}, \vec{\tilde{y}}) \leq \frac{1}{1-\alpha}\pi(\vec{\tilde{p}}, \vec{\tilde{x}}, \vec{\tilde{y}}).$$

Once we show those lower bounds on the profit of our pricing scheme, we apply the following simple claim to finish the proof of the theorem. While the difficult parts of our argument lie in proving the above bounds, we state the claim here since, despite its technical simplicity, we believe that this claim and our general approach may find application in other settings.



\begin{claim}
\label{lem_bicritimprov}
Suppose that for some solution $s$ we have that $SW(s) \leq c_1 \pi(s)$, and $SW^* - SW(s) \leq c_2 \pi(s)$. Then,
$s$ is a $(c_1 + c_2, c_2 + 1)$-bicriteria approximation to (profit,welfare); moreover $SW^*\leq (c_1+c_2) \pi(s)$.  $\blacksquare$
\end{claim}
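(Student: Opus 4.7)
The plan is to observe that the two given inequalities combine trivially to yield the ``moreover'' claim, and then to unpack each of the two approximation factors from it using one basic structural fact about the market, namely that at any pricing outcome the profit is upper bounded by the welfare.

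First I would simply add the two hypothesized inequalities:
\begin{equation*}
SW^{*} \;=\; SW(s) \,+\, \bigl(SW^{*} - SW(s)\bigr) \;\leq\; c_{1}\pi(s) \,+\, c_{2}\pi(s) \;=\; (c_{1}+c_{2})\,\pi(s).
\end{equation*}
This is exactly the ``moreover'' statement. To convert it into the profit approximation guarantee, I would invoke the basic inequality $\pi^{opt} \leq SW^{*}$. This holds because at any feasible pricing outcome $(\vec p,\vec x,\vec y)$, the total revenue $\sum_t p_t y_t$ is at most the total buyer value $\sum_i \int_0^{x_i}\lambda_i(z)\,dz$ (each buyer pays no more than her valuation, since she only purchases a bundle when its price is at most her value), and hence $\pi(\vec p,\vec x,\vec y)\leq SW(\vec p,\vec x,\vec y)$; applying this to $(\vec{p^{opt}},\vec{x^{opt}},\vec{y^{opt}})$ and then maximizing on the right gives $\pi^{opt}\leq SW^{*}$. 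Chaining with the displayed inequality yields $\pi^{opt}\leq (c_1+c_2)\pi(s)$, i.e., $s$ is a $(c_{1}+c_{2})$-approximation for profit.

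For the welfare factor, I would apply the same inequality $\pi(s) \leq SW(s)$ to the second hypothesis: $SW^{*} - SW(s) \leq c_{2}\pi(s) \leq c_{2}\,SW(s)$, which rearranges to $SW^{*}\leq (c_{2}+1)\,SW(s)$, giving the $(c_2+1)$-approximation for welfare. This completes the $(c_{1}+c_{2},\,c_{2}+1)$-bicriteria claim.

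The proof is essentially a one-line algebraic combination, so there is no real obstacle; the only conceptual ingredient I need to pin down is the profit-vs-welfare comparison at a single solution. I would state and justify $\pi(\vec p,\vec x,\vec y)\leq SW(\vec p,\vec x,\vec y)$ once at the start (noting that it uses only the fact that buyers purchase only when the bundle price does not exceed their valuation, together with the identical production-cost term in both definitions) and then use it twice, once to upper bound $\pi^{opt}$ by $SW^{*}$ and once to upper bound $\pi(s)$ by $SW(s)$.
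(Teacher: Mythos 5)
Your proof is correct and is exactly the elementary argument the paper intends (the claim is presented with a $\blacksquare$ and described as technically simple, so no explicit proof is given). Adding the two hypotheses to get $SW^*\leq(c_1+c_2)\pi(s)$, chaining with $\pi^{opt}\leq SW^*$ for the profit factor, and using $\pi(s)\leq SW(s)$ in the second hypothesis for the welfare factor is precisely the expected route.
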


Henceforth, we will unambiguously use $\zeta$ to denote the exact approximation factor appearing in the statement of the above theorem. Interestingly, in the proof of Theorem~\ref{thm_mainrevenuewelfare}, the revenue guarantee of $\zeta$ is shown with respect to the optimum social welfare, which in turn has important consequences for other profit-maximizing solutions (see Section~\ref{sec:implications}). For now, we just state this revenue guarantee formally.

\begin{corollary}
\label{corr_revenuewrtwelfare}
The profit obtained by the pricing mechanism in Theorem \ref{thm_mainrevenuewelfare} is always within a factor $\zeta$ of the optimum social welfare.
\end{corollary}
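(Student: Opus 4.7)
The plan is to observe that this corollary is essentially a restatement of an intermediate bound derived en route to Theorem~\ref{thm_mainrevenuewelfare}, so the proof reduces to stitching together results already established. Specifically, let $\tilde{s} = (\vec{\tilde{p}}, \vec{\tilde{x}}, \vec{\tilde{y}})$ denote the solution produced by the thresholded pricing mechanism. The proof sketch of Theorem~\ref{thm_mainrevenuewelfare} explicitly asserts the two inequalities
\[
SW(\tilde{s}) \leq \Bigl(2\bigl(\tfrac{1}{1-\alpha}\bigr)^{1/\alpha} - 1\Bigr)\pi(\tilde{s}), \qquad SW^* - SW(\tilde{s}) \leq \tfrac{1}{1-\alpha}\,\pi(\tilde{s}).
\]
Setting $c_1 = 2(\tfrac{1}{1-\alpha})^{1/\alpha} - 1$ and $c_2 = \tfrac{1}{1-\alpha}$, these are exactly the hypotheses of Claim~\ref{lem_bicritimprov}.

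Next I would invoke the ``moreover'' clause of Claim~\ref{lem_bicritimprov}, which packages the trivial step of adding the two bounds above to conclude $SW^* \leq (c_1 + c_2)\,\pi(\tilde{s})$. It then remains to verify that $c_1 + c_2$ matches the constant $\zeta$ defined in Theorem~\ref{thm_mainrevenuewelfare}. A one-line algebraic simplification gives
\[
c_1 + c_2 = 2\bigl(\tfrac{1}{1-\alpha}\bigr)^{1/\alpha} - 1 + \tfrac{1}{1-\alpha} = 2\bigl(\tfrac{1}{1-\alpha}\bigr)^{1/\alpha} + \tfrac{\alpha}{1-\alpha} = \zeta,
\]
so rearranging $SW^* \leq \zeta\,\pi(\tilde{s})$ yields $\pi(\tilde{s}) \geq \tfrac{1}{\zeta}\,SW^*$, which is the assertion of the corollary.

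There is no real obstacle at this stage: the heavy lifting (i.e., establishing the two structural inequalities using Lemma~\ref{lem_proxysimilartoopt} and the $\alpha$-strong regularity assumption, with the specific choice of threshold $\tilde{p} = \lambda^{\max}(1-\alpha)^{1/\alpha}$) has already been carried out inside the proof of Theorem~\ref{thm_mainrevenuewelfare}. The only subtle point worth flagging explicitly is conceptual rather than technical: the corollary compares the profit of our mechanism to the optimum \emph{welfare} rather than to the optimum profit $\pi^{opt}$, and this is in fact the stronger statement (since $SW^* \geq \pi^{opt}$ for any feasible assignment). This stronger comparison is precisely what will be needed in Section~\ref{sec:implications} to bootstrap welfare guarantees for arbitrary revenue-maximizing pricing schemes, and it is the reason we take care to extract it as a separate corollary rather than leaving it buried inside the bicriteria argument.
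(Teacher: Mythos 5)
Your proof is correct and follows the same route the paper implicitly takes: it reuses the two inequalities established in the proof of Theorem~\ref{thm_mainrevenuewelfare}, applies the ``moreover'' clause of Claim~\ref{lem_bicritimprov}, and checks the algebra $c_1 + c_2 = 2(\tfrac{1}{1-\alpha})^{1/\alpha} - 1 + \tfrac{1}{1-\alpha} = \zeta$. The concluding remark on why the comparison to $SW^*$ (rather than $\pi^{opt}$) is the useful stronger form is exactly the point the paper makes immediately before stating the corollary.
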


\subsubsection{Profit-Welfare Trade-offs}
We further exploit the close ties between revenue and social welfare, and present a revenue-welfare trade-off that improves upon the bicriteria bound in Theorem~\ref{thm_mainrevenuewelfare} by showing that at least one of revenue or welfare is better than the factor guaranteed by the theorem. The bounds in Theorem \ref{thm_mainrevenuewelfare} are actually somewhat misleading as they represent the worst-case bound for each objective, which is derived independently from the other objective by simply bounding the worst-case revenue (or welfare) over all instances. However, the worst-case performance for revenue $(\zeta)$ and the worst-case performance for social welfare $(\frac{2 - \alpha}{1-\alpha})$ need not and as we show, \emph{do not} occur for the same instance. As a matter of fact, for a given instance, if the actual welfare obtained is close to the guarantee provided in Theorem~\ref{thm_mainrevenuewelfare}, the large gap between welfare and revenue as in Figure~\ref{figure:bicriteria_unitdemand} completely vanishes and the approximation factors coincide.

We first state the main structural claim that enables this trade-off.

\begin{claim}
\label{lem_bicritimprov2}
Suppose that a pricing algorithm $Alg$ satisfies $SW(Alg) \leq c_1 \pi(Alg)$,  $SW^* - SW(Alg) \leq c_2 \pi(Alg)$. Then for every instance there exists some $1 \leq c \leq c_2 + 1$ such that $Alg$ is a bicriteria approximation $(\min(cc_1, \frac{c c_2}{c-1}), c)$ for revenue and welfare respectively.
\end{claim}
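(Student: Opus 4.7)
The plan is to \emph{parameterize} the instance by its true welfare ratio and read off the revenue bound from each hypothesis separately. Specifically, define
\[
c \;=\; \frac{SW^*}{SW(Alg)},
\]
so that $Alg$ is, by construction, exactly a $c$-approximation to optimum welfare. It then suffices to verify that (a) $c$ lies in the interval $[1, c_2+1]$, and (b) each of the two hypotheses delivers one of the two candidate revenue factors $c c_1$ and $c c_2/(c-1)$; the claim follows by taking the better of the two.

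For step (a), the lower bound $c \geq 1$ is immediate from the optimality of $SW^*$. For the upper bound I would use the standard fact that for any fixed solution, profit is dominated by social welfare, since their difference
\[
SW - \pi \;=\; \sum_{i \in B} \int_{0}^{x_i} \lambda_i(x)\,dx \;-\; \sum_{t \in T} p_t y_t
\]
is exactly the aggregate buyer utility, which is non-negative because buyers only purchase bundles of non-negative utility. Applied to $Alg$ this gives $\pi(Alg) \leq SW(Alg)$, and plugging into the second hypothesis yields $SW^* - SW(Alg) \leq c_2\, SW(Alg)$, i.e.\ $c \leq c_2 + 1$.

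For step (b), I would substitute $SW(Alg) = SW^*/c$ into each hypothesis. The first hypothesis rearranges to $\pi(Alg) \geq SW(Alg)/c_1 = SW^*/(c c_1)$, while the second gives $\pi(Alg) \geq (SW^* - SW(Alg))/c_2 = SW^*(c-1)/(c c_2)$. The same $\pi \leq SW$ inequality, now applied to the profit-maximizing solution, yields $\pi^{opt} \leq SW(\vec{p^{opt}},\vec{x^{opt}},\vec{y^{opt}}) \leq SW^*$, so both lower bounds on $\pi(Alg)$ convert to upper bounds on $\pi^{opt}/\pi(Alg)$, namely $c c_1$ and $c c_2/(c-1)$. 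Taking the minimum gives the stated revenue factor.

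Honestly, the proof is almost pure bookkeeping once the parameter $c$ is introduced; there is no substantial obstacle, and the result should be viewed as an instance-wise refinement of Claim~\ref{lem_bicritimprov} that, instead of pessimistically fixing $c = c_2+1$, tracks how much slack there actually is in the welfare approximation. The one point to be careful about is using $\pi \leq SW$ in two different places --- once on $Alg$ to bound $c$ from above, and once on the revenue-optimal solution to replace $SW^*$ with $\pi^{opt}$ in the revenue ratio --- and writing the final factor as $\min(\cdot,\cdot)$ rather than committing to one branch, since which branch is tighter depends on whether $c \leq 1 + c_2/c_1$ or not.
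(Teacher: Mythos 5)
Your proof is correct, and the approach — defining $c = SW^*/SW(Alg)$, bounding it via $\pi(Alg)\leq SW(Alg)$ applied to the second hypothesis, reading off two lower bounds on $\pi(Alg)$ from the two hypotheses, and converting to revenue ratios via $\pi^{opt}\leq SW^*$ — is the natural one. The paper does not actually include an explicit proof of this claim (it is stated without a proof environment, presumably viewed as a routine refinement of Claim~\ref{lem_bicritimprov}), so there is no alternative route to compare against; your argument is exactly the kind of direct parameterization one would expect, and your boundary remark that the second branch degenerates at $c=1$ (where the minimum is trivially the first branch, $c_1$) is the one edge case worth noting.
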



Unfortunately, the designer has no control over the factor $c$ as its exact value depends on the particular instance. Applying this claim to Theorem~\ref{thm_mainrevenuewelfare} yields the following corollary.


\begin{corollary}
\label{clm_bicrittradeoff}
For every given instance, there exists a constant $1 \leq c \leq \frac{2-\alpha}{1-\alpha}$ so that the solution returned by the prices of Theorem~\ref{thm_mainrevenuewelfare} has a social welfare that is within a factor $c$ of the optimum welfare and such that
$$\pi^{opt}\leq SW^* \leq \min(\frac{c}{c-1}.\frac{1}{1-\alpha}, \zeta)\pi(Alg).$$
\end{corollary}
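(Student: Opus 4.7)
The plan is to simply apply Claim~\ref{lem_bicritimprov2} using the two inequalities established in the proof of Theorem~\ref{thm_mainrevenuewelfare} for the thresholded solution $s=(\tilde{\vec{p}},\tilde{\vec{x}},\tilde{\vec{y}})$, namely $SW(s)\le c_1\pi(s)$ and $SW^*-SW(s)\le c_2\pi(s)$, with the specific constants $c_1 = 2\bigl(\tfrac{1}{1-\alpha}\bigr)^{1/\alpha}-1$ and $c_2 = \tfrac{1}{1-\alpha}$. Two bookkeeping observations anchor the argument: $c_2 + 1 = \tfrac{2-\alpha}{1-\alpha}$ matches the upper endpoint for $c$ asserted by the corollary, and $c_1+c_2 = \zeta$.

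Feeding these constants into Claim~\ref{lem_bicritimprov2} furnishes, for each instance, some $c\in\bigl[1,\tfrac{2-\alpha}{1-\alpha}\bigr]$ such that the welfare approximation of $Alg$ is exactly $c$, i.e.\ $SW^*\le c\cdot SW(Alg)$. Substituting $SW(Alg)\ge SW^*/c$ into the second inequality above gives $SW^*\bigl(1-\tfrac{1}{c}\bigr)\le c_2\,\pi(Alg)$, which rearranges to $SW^*\le \tfrac{c}{c-1}\cdot\tfrac{1}{1-\alpha}\,\pi(Alg)$. Independently, Corollary~\ref{corr_revenuewrtwelfare} already gives the instance-independent bound $SW^*\le \zeta\,\pi(Alg)$, so combining the two yields $SW^*\le \min\bigl(\tfrac{c}{c-1}\cdot\tfrac{1}{1-\alpha},\,\zeta\bigr)\pi(Alg)$, exactly as stated. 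To close the chain $\pi^{opt}\le SW^*$, I would invoke the standard fact that for any feasible outcome the definitions yield $SW-\pi = \sum_i\bigl(\int_0^{x_i}\lambda_i(z)\,dz - \bar p_i x_i\bigr)$ (using that $\sum_i \bar p_i x_i = \sum_t p_t y_t$), and each summand is non-negative since $\lambda_i$ is non-increasing with $\lambda_i(x_i)=\bar p_i$. Applied to the profit-optimal outcome this gives $\pi^{opt}\le SW(\vec{p^{opt}},\vec{x^{opt}},\vec{y^{opt}})\le SW^*$.

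The argument is essentially packaging of results already in hand, so there is no real obstacle; all of the technical weight sits in the two inequalities supplied by the proof of Theorem~\ref{thm_mainrevenuewelfare} and in the abstract trade-off reasoning of Claim~\ref{lem_bicritimprov2}. The only mildly subtle bookkeeping is that Claim~\ref{lem_bicritimprov2} actually produces a revenue factor $\min(cc_1,\tfrac{cc_2}{c-1})$, whereas the corollary prints $\min\bigl(\tfrac{c}{c-1}\cdot\tfrac{1}{1-\alpha},\zeta\bigr)$; this is resolved by discarding the (sometimes sharper) $cc_1$ branch and replacing it with the uniform worst-case bound $\zeta$, which is always valid.
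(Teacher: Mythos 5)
Your proposal is correct and follows the same route the paper (implicitly) takes: the corollary is stated in the paper simply as an application of Claim~\ref{lem_bicritimprov2} to Theorem~\ref{thm_mainrevenuewelfare}, and you correctly identify $c_1 = 2(\tfrac{1}{1-\alpha})^{1/\alpha}-1$, $c_2 = \tfrac{1}{1-\alpha}$, verify $c_2+1 = \tfrac{2-\alpha}{1-\alpha}$ and $c_1+c_2=\zeta$, and recognize that the $cc_1$ branch of Claim~\ref{lem_bicritimprov2} is replaced in the corollary by the instance-independent bound $\zeta$ from Corollary~\ref{corr_revenuewrtwelfare}. Your added justification of $\pi^{opt}\le SW^*$ (via $SW\ge\pi$ at any feasible outcome, then comparing to $SW^*$) fills in a detail the paper leaves unstated.

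One small wording slip: when re-deriving the $\tfrac{c}{c-1}c_2$ branch you write ``substituting $SW(Alg)\ge SW^*/c$ into $SW^*-SW(Alg)\le c_2\pi(Alg)$''; that inequality direction would give $SW^*-SW(Alg)\le SW^*(1-1/c)$, which points the wrong way. What the argument needs is that $c$ is the \emph{exact} welfare ratio, $SW(Alg)=SW^*/c$ (which is how Claim~\ref{lem_bicritimprov2} defines $c$), so that $SW^*(1-1/c)=SW^*-SW(Alg)\le c_2\pi(Alg)$. This is harmless here since you are in any case entitled to read the factor $\tfrac{cc_2}{c-1}$ straight off the statement of Claim~\ref{lem_bicritimprov2} rather than re-deriving it, but it is worth being careful with the direction when invoking the definition of $c$.
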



For example, for MHR demand functions the statement of Theorem \ref{thm_mainrevenuewelfare} makes it seem that this pricing scheme may return a solution which is a 2e-approximation for revenue and a 2-approximation for welfare. Corollary \ref{clm_bicrittradeoff} points out that the actual results are far {\em better}. As Figure \ref{figure:tradeoff} illustrates for MHR demand, tradeoffs between revenue and welfare actually guarantee that when revenue is far from optimum, welfare is very high, and vice versa.

\section{Consequences for Solutions with High Revenue}
\label{sec:implications}
In Sections \ref{sec:unitdemandresult} and \ref{sec:multi-minded}, we give efficient pricing mechanisms which simultaneously achieve good approximations for both revenue and welfare. Consider, however, a seller whose main priority is to simply maximize profits. This seller may choose to use a different pricing mechanism with better revenue guarantees than the ones offered in this paper. For example, the seller may choose prices which are guaranteed to come closer to achieving optimum revenue (these are efficiently computable for unit-demand settings~\cite{anshelevichKS15, hartlineK05} under certain additional assumptions), or even use a large amount of resources to solve the intractable problem of actually computing prices $p^{opt}$ which yield the highest possible revenue. One of the main messages of our paper is as follows:

\vskip 3pt
{\em No matter what pricing mechanism the seller uses to optimize revenue, they can instead use a pricing mechanism which guarantees at least $1/\zeta$ fraction of the optimum welfare, without sacrificing any revenue.}
\vskip 3pt

In this section, we use a simple albeit highly general framework to derive results of this form. Although this framework is defined in terms of our model, it is actually general and applies across a wide variety of markets. 

Recall that $SW^*$ denotes the optimum social welfare and $\pi^{opt}$ denotes the maximum achievable profit. Given a pricing algorithm $Alg$, we will refer to the social welfare of the solution returned by $Alg$ using $SW(Alg)$, and its profit by $\pi(Alg)$. Consider an arbitrary profit maximization algorithm $Alg$ that achieves a good approximation with respect to $\pi^{opt}$. How do we go about characterizing the social welfare at these solutions?  The following theorem uses an existing profit-maximization algorithm whose guarantees hold with respect to the optimum welfare as a black-box to bound the welfare due to $Alg$.



\begin{theorem}
\label{thm_crosscomparison}
Consider a benchmark profit maximization algorithm $Alg^b$ whose profit $\pi(Alg^b)$ is always within a $c$ factor of $SW^*$ for some $c \geq 1$. Consider any other pricing algorithm $Alg$ for the same class of valuations that obtains at least as much profit as guaranteed by $Alg^b$ on all instances. Then the social welfare obtained by $Alg$ is at least a factor $\frac{1}{c}$ times that of the optimum welfare.
\end{theorem}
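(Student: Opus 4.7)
The plan is to observe that in this market model, the seller's profit is always a lower bound on the social welfare of any pricing outcome, and then chain this inequality with the hypotheses to get the welfare bound on $Alg$ directly. The underlying intuition is simple: social welfare equals the sum of buyer and seller utility, and since buyers only purchase bundles whose price is at most their valuation, the aggregate buyer utility is nonnegative; thus welfare dominates profit.

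First I would formalize the claim that $SW(s) \geq \pi(s)$ for any pricing solution $s = (\vec{p}, \vec{x}, \vec{y})$ in our model. Using the definitions from Section~\ref{sec:prelim}, we have
\[
SW(s) - \pi(s) \;=\; \sum_{i \in B} \int_{0}^{x_i} \lambda_i(x)\,dx \;-\; \sum_{t \in T} p_t y_t.
\]
For each buyer type $i$, if $\bar p_i$ denotes the price of the cheapest bundle in $B_i$ that $i$ buys, then $\lambda_i(x) \geq \lambda_i(x_i) = \bar p_i$ for all $x \leq x_i$ by monotonicity of $\lambda_i$, which gives $\int_0^{x_i} \lambda_i(x)\,dx \geq x_i \bar p_i$. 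Swapping the order of summation, $\sum_{t} p_t y_t = \sum_i x_i \bar p_i$, since the demanded quantities $y_t$ are precisely the aggregate purchases across buyers. Combining these two facts yields $SW(s) - \pi(s) \geq 0$.

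Next, I would apply this to $Alg$. By hypothesis, $\pi(Alg) \geq \pi(Alg^b) \geq SW^* / c$. Combining with $SW(Alg) \geq \pi(Alg)$ from the previous step, we obtain $SW(Alg) \geq SW^* / c$, which is exactly the claimed welfare guarantee.

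There is essentially no obstacle here beyond cleanly arguing the welfare-dominates-profit inequality; the main insight is conceptual rather than technical. The theorem is deliberately stated as a black-box reduction, so once the basic inequality $\pi \leq SW$ is in hand, the conclusion is immediate. I would also remark (to connect this to Section~\ref{sec:unitdemandresult}) that Corollary~\ref{corr_revenuewrtwelfare} furnishes such a benchmark $Alg^b$ with $c = \zeta$, so any alternative revenue-maximization algorithm that matches or beats the efficient scheme of Theorem~\ref{thm_mainrevenuewelfare} automatically inherits the $1/\zeta$ welfare guarantee.
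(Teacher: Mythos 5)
Your proof is correct and takes the same approach as the paper, which simply states the chain $SW(Alg) \geq \pi(Alg) \geq \pi(Alg^b) \geq SW^*/c$; you have additionally filled in the short argument for the first inequality (welfare dominates profit because buyer utility is nonnegative), which the paper leaves implicit.
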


The proof simply follows from the fact that $SW(Alg) \geq \pi(Alg) \geq \pi(Alg^b) \geq \frac{SW^*}{c}$.

\subsection*{Implications for Unit-Demand Markets}
We now apply the framework provided by Theorem~\ref{thm_crosscomparison} for unit-demand using our result from Theorem~\ref{thm_mainrevenuewelfare} as a black-box. 

\begin{claim}
\label{clm_crosscomparison_ud}
Let $Alg$ be any algorithm for unit-demand markets that obtains at least as much profit as guaranteed by the algorithm of Theorem~\ref{thm_mainrevenuewelfare} on all instances. Then the social welfare obtained by $Alg$ is at least a factor $\frac{1}{\zeta}$ times that of the optimum welfare.
\end{claim}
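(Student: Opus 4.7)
The plan is to obtain this as an immediate corollary of Theorem~\ref{thm_crosscomparison}, using the pricing scheme of Theorem~\ref{thm_mainrevenuewelfare} as the benchmark algorithm $Alg^b$. The only thing one must verify is that this benchmark satisfies the hypothesis of Theorem~\ref{thm_crosscomparison}, namely that its profit is always within some factor $c$ of the optimum social welfare $SW^*$. But this is precisely what Corollary~\ref{corr_revenuewrtwelfare} already states, with $c=\zeta$.

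First I would set $Alg^b$ equal to the thresholded pricing mechanism from Theorem~\ref{thm_mainrevenuewelfare} and invoke Corollary~\ref{corr_revenuewrtwelfare} to get $\pi(Alg^b)\geq SW^*/\zeta$ on every unit-demand instance. Next, by the assumption on $Alg$, we have $\pi(Alg)\geq \pi(Alg^b)$ on every such instance, so $\pi(Alg)\geq SW^*/\zeta$. Finally, to close the loop one needs the basic envy-free observation that $SW(Alg)\geq \pi(Alg)$: the buyer utility contribution to welfare is $\sum_i \int_0^{x_i}\lambda_i(x)\,dx$, which since $\lambda_i$ is non-increasing and each buyer only purchases when her value weakly exceeds the bundle's price, is at least $\sum_i x_i \bar{p}_i = \sum_t p_t y_t$; subtracting the common cost term $\sum_t C_t(y_t)$ then gives $SW(Alg)\geq \pi(Alg)$. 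Chaining these inequalities yields $SW(Alg)\geq SW^*/\zeta$, which is the claim.

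There is essentially no technical obstacle here, since all the work has been front-loaded into Corollary~\ref{corr_revenuewrtwelfare} and Theorem~\ref{thm_crosscomparison}; the proof is a one-line chain and can be stated in the form already noted in the paper, i.e.\ $SW(Alg)\geq \pi(Alg)\geq \pi(Alg^b)\geq SW^*/\zeta$. The only mild subtlety worth flagging explicitly in the write-up is the inequality $SW(Alg)\geq \pi(Alg)$, which uses the monotonicity of $\lambda_i$ together with the fact that under posted-price, envy-free purchasing a buyer's value is at least the price she pays — a property guaranteed by the market mechanism described in Section~\ref{sec:prelim}, and hence automatically true for \emph{any} algorithm $Alg$ in the class considered.
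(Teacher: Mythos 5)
Your proof is correct and matches the paper's approach exactly: the paper derives this claim as a direct instantiation of Theorem~\ref{thm_crosscomparison} with the benchmark $Alg^b$ taken from Theorem~\ref{thm_mainrevenuewelfare} and the hypothesis supplied by Corollary~\ref{corr_revenuewrtwelfare}, giving the same chain $SW(Alg)\geq \pi(Alg)\geq \pi(Alg^b)\geq SW^*/\zeta$. Your extra remark justifying $SW(Alg)\geq\pi(Alg)$ is a correct elaboration of a step the paper leaves implicit.
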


Thus, consider the case when a seller is using any arbitrary pricing mechanism $Alg$, with the main goal being to maximize profit. By simply computing the revenue given by our pricing schemes from Section \ref{sec:unitdemandresult}, and then choosing the one which guarantees better revenue (i.e., choosing between $Alg$ and our pricing scheme), we form a new pricing algorithm which does not sacrifice any revenue compared to $Alg$, and due to the above theorem, is {\em also} guaranteed to have good social welfare. Moreover, for sellers who are able to compute the prices which result in absolute maximum revenue, the following is a trivial consequence of the above theorem:

\begin{corollary}
The ratio of the optimum social welfare $SW^*$ to the social welfare at the maximum profit solution $SW(\vec{p^{opt}},\vec{x^{opt}},\vec{y^{opt}})$ is at most $\zeta = \Theta\left(\frac{1}{1-\alpha}\right).$
\end{corollary}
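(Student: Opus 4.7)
The plan is to apply Theorem~\ref{thm_crosscomparison} directly, with a specific choice of benchmark algorithm and comparison algorithm. Concretely, I would take $Alg^b$ to be the pricing mechanism from Theorem~\ref{thm_mainrevenuewelfare}; by Corollary~\ref{corr_revenuewrtwelfare} this mechanism achieves profit within a factor $\zeta$ of $SW^*$ on every instance, so it satisfies the hypothesis of Theorem~\ref{thm_crosscomparison} with $c = \zeta$. For the role of $Alg$, I would take the (not necessarily efficient) ``algorithm'' that simply outputs the revenue-maximizing prices $\vec{p^{opt}}$, which by definition yields profit $\pi^{opt}$ on every instance.

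Next I would verify the remaining hypothesis of Theorem~\ref{thm_crosscomparison}, namely that $Alg$ obtains at least as much profit as $Alg^b$ on every instance. This is immediate: since $\vec{p^{opt}}$ is chosen to maximize profit over all price vectors, we have $\pi(\vec{p^{opt}},\vec{x^{opt}},\vec{y^{opt}}) \geq \pi(Alg^b)$ on every instance. Applying Theorem~\ref{thm_crosscomparison} then gives
\[
SW(\vec{p^{opt}},\vec{x^{opt}},\vec{y^{opt}}) \;\geq\; \frac{SW^*}{\zeta},
\]
which rearranges to the desired bound on the welfare ratio.

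The only point that deserves a sentence of care --- and this is the single substantive step, though still essentially a triviality --- is the opening inequality in the chain used to prove Theorem~\ref{thm_crosscomparison}, namely $SW(Alg) \geq \pi(Alg)$ applied to $Alg = \vec{p^{opt}}$. This holds in our market model because the social welfare decomposes as the sum of buyer utilities plus seller profit, and individual rationality guarantees every buyer's utility is non-negative at her chosen bundle (otherwise she would not purchase). Thus $SW(\vec{p^{opt}},\vec{x^{opt}},\vec{y^{opt}}) \geq \pi(\vec{p^{opt}},\vec{x^{opt}},\vec{y^{opt}}) \geq \pi(Alg^b) \geq SW^*/\zeta$, completing the argument. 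I do not anticipate any genuine obstacle, since the corollary is, as noted in the text, a direct instantiation of the black-box framework of Theorem~\ref{thm_crosscomparison} using Corollary~\ref{corr_revenuewrtwelfare}.
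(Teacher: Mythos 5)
Your proposal matches the paper's argument exactly: the corollary is obtained by plugging the mechanism of Theorem~\ref{thm_mainrevenuewelfare} (with $c=\zeta$ via Corollary~\ref{corr_revenuewrtwelfare}) as $Alg^b$ and the profit-optimal prices as $Alg$ into the black-box framework of Theorem~\ref{thm_crosscomparison}, with the chain $SW(Alg) \geq \pi(Alg) \geq \pi(Alg^b) \geq SW^*/\zeta$. Your extra sentence justifying $SW(Alg) \geq \pi(Alg)$ via nonnegative buyer utilities is correct and simply makes explicit what the paper leaves implicit.
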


For example, for MHR demand functions ($\alpha=0$), this implies that even for sellers who only care about profits, there is essentially no excuse not to also guarantee at least $1/2e$ of the optimum social welfare. Thus, for the settings we consider, one can strive for truly high revenue, without sacrificing much in welfare.

%
%

\section{Multi-Minded Buyers}\label{sec:multi-minded}

We now move on to our most general case with multi-minded buyers, wherein every buyer wishes to purchase one bundle from a desired set (of bundles). We use $\ell^{max}$ to denote the cardinality of the maximum sized bundle desired by any buyer type, and $\ell^{min}$ for the minimum sized bundle. The main result in this section is a bicriteria approximation algorithm that extends our results for the unit-demand case. Our algorithm still achieves a $\Theta(\frac{1}{1-\alpha})$-approximation to the optimum welfare; as for profit, we obtain a $\Theta(\frac{1}{1-\alpha})$ bound further discounted by a $\log(\Delta)$ factor, where $\Delta = \frac{\ell^{max}}{\ell^{min}}$. 

Moreover, as in the previous section, our profit bound is obtained in terms of the optimum social welfare, which allows the consequences mentioned in Section \ref{sec:implications} as well as an analog of Corollary \ref{clm_bicrittradeoff} to hold showing that the true lower bounds are better than the theorem states.

\begin{theorem}\label{thm:multi}
For any given instance with multi-minded buyers, there exists a poly-time\\$\left(\Theta(\frac{\log (\Delta)}{1-\alpha}), \Theta(\frac{1}{1-\alpha})\right)$-bicriteria approximation algorithm for profit and welfare respectively.
\end{theorem}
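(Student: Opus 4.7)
The plan is to lift the thresholded-pricing strategy of Section~\ref{sec:unitdemandresult} to the multi-minded setting by combining it with a bucketing argument over bundle sizes. Since a buyer of type $i$ who desires a bundle of size $\ell_i$ effectively pays $\ell_i$ times the per-item price, the ``right'' per-item threshold depends on $\ell_i$. The natural way to accommodate this dependence is to partition the buyer types into $K = O(\log \Delta)$ groups $B_0, B_1, \ldots, B_{K}$, where $B_k$ contains types whose smallest desired bundle has size in $[2^k \ell^{min}, 2^{k+1}\ell^{min})$. Within each bucket, bundle sizes differ by at most a factor of $2$, so one can treat the buyers in $B_k$ as a unit-demand-like population with scaled per-item valuations $v_i/(2^k \ell^{min})$.

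Following this, I would construct $K$ candidate thresholded pricings: for each scale $\ell_k = 2^k \ell^{min}$, let $\tau_k = \lambda^{max}(1-\alpha)^{1/\alpha}/\ell_k$ and define $\tilde{p}^{(k)}_t = \max(p^*_t, \tau_k)$. The algorithm returns whichever $\tilde{p}^{(k)}$ yields the largest profit among these $K$ candidates (plus $\vec{p}^*$ itself). The first order of business is to show that each candidate pricing individually preserves a $\Theta(1/(1-\alpha))$ fraction of $SW^*$. To do so, I would extend Lemma~\ref{lem_proxysimilartoopt} to the multi-minded case: the cluster decomposition into ``high'' buyers (those paying the original $p^*_t$) and ``low'' buyers (those facing the threshold) still goes through by an exchange argument on bundles, and the welfare loss from each low-cluster buyer is controlled exactly as in the unit-demand proof because, on a per-item basis, the argument is essentially scalar-invariant.

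For the profit guarantee, the plan is to show that for each bucket $k$, the candidate $\tilde{p}^{(k)}$ extracts from the buyers in $B_k$ a profit of at least $\Theta(1/(1-\alpha))$ times their contribution $SW^*_k$ to the optimum welfare. Within $B_k$, bundle prices under $\tilde{p}^{(k)}$ are sandwiched in a constant window around $\lambda^{max}(1-\alpha)^{1/\alpha}$, so the $\alpha$-SR concentration inequalities used in Section~\ref{sec:unitdemandresult} transfer over, producing the two key bounds $SW_k(\tilde p^{(k)})\leq c_1\, \pi_k(\tilde p^{(k)})$ and $SW^*_k - SW_k(\tilde p^{(k)}) \leq c_2\, \pi_k(\tilde p^{(k)})$ restricted to bucket $k$. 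Since $\sum_k SW^*_k = SW^*$, summing the per-bucket profits via Claim~\ref{lem_bicritimprov} gives $\sum_{k=0}^{K} \pi(\tilde p^{(k)}) = \Omega(SW^*/(1-\alpha))$, and pigeonhole then yields a single candidate with $\pi(\tilde p^{(k^*)}) = \Omega(SW^*/((1-\alpha)\log \Delta))$, hence the claimed $\Theta(\log \Delta/(1-\alpha))$ profit approximation relative to $SW^* \geq \pi^{opt}$.

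The hardest step will be the per-bucket profit analysis. In the unit-demand case, the threshold price $\tau$ directly controls what each buyer pays; in the multi-minded case, the cheapest desired bundle may shift as prices change (since raising one item's price can cause a buyer to switch to another bundle), and the ``low cluster'' substitute goods that a buyer winds up purchasing need not coincide with the ones in the welfare-maximizing assignment. I expect to handle this by a charging argument that, within each bucket, attributes each buyer's revenue contribution to the items in the bundle she actually buys while using the uniform-peak assumption to uniformly bound the gap between the cheapest available bundle's price and $\ell_k \tau_k$. Care must also be taken with the production costs, but the doubly-convex structure of $C_t$ together with the second part of the analog of Lemma~\ref{lem_proxysimilartoopt} (that $\vec{\tilde y}$ remains welfare-maximizing for $\vec{\tilde x}$) should let the cost terms telescope exactly as in Section~\ref{sec:unitdemandresult}.
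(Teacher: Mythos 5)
Your bucketing-by-bundle-size plus per-bucket thresholding approach is genuinely different from the paper's, and unfortunately has a concrete gap that the paper's machinery was designed specifically to avoid. The paper does \emph{not} use thresholded item prices in the multi-minded case; it defines a benchmark solution via personalized payments $\max(\lambda_i(x^*_i), \tilde p)$, observes explicitly that this benchmark is \emph{not} realizable by any single item-price vector, and then approximates it by a sequence of \emph{augmented Walrasian equilibria} with dummy reserve prices $2^j \tilde p / (2\ell^{max})$ for $j = 0,\ldots,\lceil\log\Delta\rceil + 1$. That sequence plays the same role as your $\log\Delta$ buckets, but the mechanism is fundamentally different: each candidate is a welfare-maximizing allocation of an augmented instance, so bundle switching and buyer dropout are handled automatically by optimality, and a charging argument to ``saturated goods'' controls the welfare lost when the dummy price is doubled.

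The specific step that fails in your plan is the claim that ``within $B_k$, bundle prices under $\tilde p^{(k)}$ are sandwiched in a constant window around $\lambda^{max}(1-\alpha)^{1/\alpha}$.'' Consider a bucket-$k$ buyer $i$ whose bundle $S$ (of size $\approx \ell_k$) contains some items with $p^*_t > \tau_k$ and some with $p^*_t \leq \tau_k$. Under $\tilde p^{(k)}_t = \max(p^*_t,\tau_k)$ the bundle price becomes $\sum_{t\in S: p^*_t>\tau_k}p^*_t + n^L\tau_k$, which can be as large as $\lambda_i(x^*_i) + \ell_i\tau_k$ — an additive excess of up to $2\tilde p$ over the optimal bundle price, which can in turn be an arbitrary value in $[\lambda^{min},\lambda^{max}]$. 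For MHR ($\alpha=0$) we have $\tilde p = \lambda^{max}/e$, so this ``mixed'' bundle can cost strictly more than $\lambda^{max}$, and the buyer drops out entirely rather than scaling down its demand by a constant factor. At that point the per-bucket reduction to the unit-demand analysis — in particular the extension of Lemma~\ref{lem_proxysimilartoopt} — no longer holds: the clean high/low cluster decomposition relies on the fact that in unit-demand each buyer faces a single item price that is either unchanged or raised exactly to the threshold, never a mixture. In the multi-minded setting a buyer can abandon a bundle that straddles both clusters and switch to an entirely different bundle, or stop buying altogether, which destroys both the ``$\tilde x_i = x^*_i$ in the high cluster'' and ``$\tilde y_t = y^*_t$ in the high cluster'' invariants that the unit-demand proof leans on. The ``charging argument'' you sketch in the final paragraph is exactly the missing piece, and it is nontrivial: the paper spends Lemmas~\ref{lem_deviationsaturated} and~\ref{lem_boundviasaturated} and Claims~\ref{clm_cruc_appbenchmark}--\ref{clm_diffinwelf} building precisely such a charging argument, and it requires the Walrasian-equilibrium structure (marginal-cost pricing and the ``saturated goods'' dichotomy $p_t = \max\{p^d, c_t(y_t)\}$), which your fixed thresholded vector does not provide. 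Without that structure, the sum $\sum_k \pi(\tilde p^{(k)})$ cannot be lower-bounded by $\Omega(SW^*/(1-\alpha))$, since entire buckets of buyers may contribute no profit under their own candidate pricing.
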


\noindent{\bf Proof Sketch:} The proof for the general case with multi-minded buyers is quite involved (see Appendix \ref{app:multi}), so here we provide a high level overview of the various ingredients that combine to form the proof.

\vskip 2pt\noindent{\em Step 1: Benchmark Solution.} The first step involves defining a benchmark solution $(\vec{x^{b}}, \vec{y^{b}})$ whose social welfare, as in the unit-demand case, is at most a $\frac{2-\alpha}{1-\alpha}$ factor away from $SW^*$. Specifically, define $\tilde{p} = \lambda^{max} (1-\alpha)^{\frac{1}{\alpha}}$. Now, the benchmark demand vector $\vec{x^b}$ is defined as follows: for each buyer $i$, $x^b_i = \min \{x^*_i, \lambda^{-1}(\tilde{p})\}$. The allocation vector $\vec{y^b}$ is simply a scaled-down version of the optimum allocation vector, i.e., suppose that $x^*_i(S)$ is amount of bundle $S$ consumed by buyer type $i$ in $(\vec{p^*}, \vec{x^*}, \vec{y^*})$. Then, the amount of bundle $S$ consumed by $i$ in the benchmark solution $x^b_i(S) := \frac{x^b_i}{x^*_i}x^*_i(S)$. This consumption pattern gives rise to the allocation vector $\vec{y^b}$ on the goods. Let $SW^b$ denote the social welfare of this benchmark solution.

In the unit-demand case, we were able to identify a suitable price vector to actually implement this benchmark solution and extract a good fraction of its welfare as revenue. However, this is no longer be possible for the general case, as an analogous pricing solution would involve personalized payments for each buyer and thus may not be realizable using simple item pricing.  For convenience, we define $\pi^b = \sum_{i \in B}\lambda_i(x^b_i)x^b_i - C(\vec{y^b})$ to be the profit due to the personalized payments. Observe that if we allow for discriminatory prices, then the revenue maximization becomes almost trivial.



Despite its lack of implementability, the benchmark solution's appeal is due to the fact that both $SW^b$ and $\pi^b$ approximate $SW^*$ up to a $\Theta(\frac{1}{1-\alpha})$ factor, which we can be shown using similar arguments as in the unit-demand case. This is still a non-trivial claim; even with personalized payments, it is not clear if we can ever approximate the optimum social welfare via profit. Now that we have a benchmark solution which achieves good welfare and revenue, but is not implementable using item pricing, our goal in this proof becomes to compute item prices which approximate this benchmark solution in both revenue and welfare.

\vskip 2pt\noindent{\em Step 2: Augmented Walrasian Equilibrium.} Our goal will be to use $(\vec{x^b}, \vec{y^b})$ as a guide to design a sequence of (item) pricing solutions that together behave like the benchmark solution, and return the `best approximation' among these solutions. Towards this end, we introduce the notion of an \textit{Augmented Walrasian Equilibrium} with dummy price $p^d$, which is a social welfare maximizing solution for an augmented problem, consisting of the original instance plus $|T|$ dummy buyers having constant valuation $p^d$, one for each good $t \in T$. Our notion of an augmented Walrasian equilibrium can be viewed an extension of the Walrasian Equilibrium with reserve prices concept introduced in~\cite{guruswami2005profit} to settings where buyers purchase bundles of arbitrary sizes. Unlike the unit-demand case where there is a linear relationship between the reserve price and the revenue, bounding the profit in equilibrium solutions (once dummy buyers are removed) with multi-minded buyers and production costs require a careful characterization of the various goods. Specifically, there are really two types of goods in such solutions: ones that are {\em saturated}, i.e., $p_t=c_t(y_t)$, and ones which are artificially limited by the dummy price $p^d$, so $p_t=p^d$.

\vskip 2pt\noindent{\em Step 3: Starting dummy price.} The plan is to form a sequence of solutions resulting from Augmented Walrasian Equilibria with dummy prices, but what dummy price should we start with? Here we prove the following crucial claim. The main argument which makes this claim work is the ability to charge any deviation from the benchmark solution to the saturated goods in the current pricing solution, which are guaranteed to provide good welfare.

\begin{claim}
\label{clm_cruc_appbenchmark}
Suppose that $(\vec{p}, \vec{x}, \vec{y})$ denotes a pricing solution for the original instance obtained via an augmented Walrasian equilibrium with dummy price $\frac{\tilde{p}}{2 \ell^{max}}$. Then,

$$SW^* - SW(\vec{p}, \vec{x}, \vec{y}) \leq \left(5 + \frac{6}{1-\alpha}\right)[\pi(\vec{p}, \vec{x}, \vec{y}) + \pi(\vec{p^*}, \vec{x^*}, \vec{y^*})].$$
\end{claim}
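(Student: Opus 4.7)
The approach is to bridge via the benchmark solution $(\vec{x^{b}},\vec{y^{b}})$ from Step~1 of the sketch and decompose
$$SW^* - SW(\vec{p},\vec{x},\vec{y}) \;=\; \underbrace{(SW^* - SW^b)}_{(A)} \;+\; \underbrace{(SW^b - SW(\vec{p},\vec{x},\vec{y}))}_{(B)},$$
bounding each piece separately by a multiple of $\pi(\vec{p},\vec{x},\vec{y}) + \pi(\vec{p^*},\vec{x^*},\vec{y^*})$.

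For $(A)$, I would lift the unit-demand calculation (Lemma~\ref{lem_proxysimilartoopt} and the bound $SW^* - SW(\tilde{p},\tilde{x},\tilde{y}) \le \frac{1}{1-\alpha}\pi(\tilde{p},\tilde{x},\tilde{y})$ from Theorem~\ref{thm_mainrevenuewelfare}) essentially verbatim. Since $x^b_i=\min\{x^*_i,\lambda^{-1}(\tilde p)\}$, the $\alpha$-SR inequality applied type-by-type gives $\int_{x^b_i}^{x^*_i}\lambda_i(x)\,dx \le \frac{1}{1-\alpha}\lambda_i(x^b_i)x^b_i$, and summing (while noting that the production cost only shrinks because $\vec{y^b}$ is a proportional scale-down of $\vec{y^*}$) yields $(A)\le \frac{1}{1-\alpha}\pi^b$. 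I would then split $\pi^b$ by whether a buyer type is truncated: untruncated types coincide with the welfare-max allocation and their personalized payments are dominated by $\pi(\vec{p^*},\vec{x^*},\vec{y^*})$ (using $p^*_t=c_t(y^*_t)$), while truncated types pay roughly $\tilde p\cdot x^b_i$, which I would charge against $\pi(\vec{p},\vec{x},\vec{y})$ by showing that the augmented WE serves these buyers at bundle prices at least a constant multiple of $\tilde p$, since the dummy-price reserve $p^d$ prevents per-good prices from collapsing to zero on the relevant goods.

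For $(B)$, the welfare-optimality of the augmented Walrasian equilibrium is central. The augmented instance adds $|T|$ dummy buyers of per-unit value $p^d=\tilde p/(2\ell^{\max})$, and the WE maximizes augmented welfare. Plugging in ``real buyers follow the benchmark allocation, dummies consume nothing'' as a feasible alternative yields
$$SW(\vec{p},\vec{x},\vec{y}) + p^d\cdot z \;\ge\; SW^b,$$
where $z$ is the total dummy consumption at the WE, so $(B)\le p^d\cdot z$. Bounding $p^d\cdot z$ is the crux: dummies only consume from artificially-limited goods (those with $p_t=p^d$), whose supply satisfies $c_t(y_t)\le p^d$, so I would bound $z$ by a constant multiple of the real-buyer bundle consumption, which is in turn at most $\ell^{\max}\sum_i x_i$. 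The specific choice $p^d=\tilde p/(2\ell^{\max})$ is exactly what is needed to cancel the bundle-size blow-up, leaving $p^d\cdot z$ comparable to the saturated-good revenue in $\pi(\vec{p},\vec{x},\vec{y})$ plus a production-cost residual absorbed into $\pi(\vec{p^*},\vec{x^*},\vec{y^*})$.

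The main obstacle is the dummy-consumption bound in $(B)$: it requires a careful split of goods in the augmented WE into saturated and artificially-limited types, together with a delicate charging argument that exploits the precise ratio $\tilde p/(2\ell^{\max})$. Once both pieces are in hand, combining them and collecting constants produces the advertised $5 + 6/(1-\alpha)$ factor, with the absolute $5$ originating from the charging in $(B)$ and the $6/(1-\alpha)$ from the $\alpha$-SR computation in $(A)$.
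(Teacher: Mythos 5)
The high-level decomposition $SW^* - SW(\vec{p},\vec{x},\vec{y}) = (SW^* - SW^b) + (SW^b - SW(\vec{p},\vec{x},\vec{y}))$ is the right framing, and your inequality $(B) \le p^d \cdot z$ (from plugging the benchmark allocation with zero dummy consumption into the augmented welfare objective) is correct. However, the final step of part $(B)$ has a genuine gap. You claim that the dummy consumption $z$ can be bounded by a constant multiple of the real-buyer bundle consumption, but this is false: on an artificially-limited good ($p_t = p^d$) the dummies push consumption up to the point $y^{p^d}_t$ where $c_t(y^{p^d}_t) = p^d$, and this point can be arbitrarily large compared to the real-buyer quantity $y_t$ when $c_t$ rises shallowly (e.g.\ $c_t(s)=\epsilon s$ with $\epsilon$ small). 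Moreover, even if one had $z \lesssim \ell^{\max}\sum_i x_i$, the resulting bound $p^d z \lesssim \frac{\tilde p}{2}\sum_i x_i$ is not comparable to $\pi(\vec{p},\vec{x},\vec{y}) = \sum_i \lambda_i(x_i)x_i - C(\vec{y})$ without incurring a $\Delta = \ell^{\max}/\ell^{\min}$ factor, since a real buyer only pays at least $\ell^{\min} p^d = \frac{\tilde p \ell^{\min}}{2\ell^{\max}}$ per unit. The claim has no $\Delta$ dependence, so the approach is off by an unbounded factor.

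The paper does not reason about dummy consumption at all. Instead, it bounds the ``shortfall'' $\sum_{i\in\hat B}\int_{x_i}^{x^b_i}\lambda_i$ (where $\hat{B}$ is the set of buyers consuming less than in the benchmark) via a structural fact (Lemma~\ref{lem_deviationsaturated}): for any $i \in \hat{B}$ and any bundle $S \in B_i$ that $i$ consumes, $S$ contains at least one \emph{saturated} good ($p_t = c_t(y_t) > p^d$), and in fact $\lambda_i(x_i) \le 2\sum_{t\in S\cap\hat S}p_t$, because the non-saturated goods in $S$ each cost only $p^d = \tilde p/(2\ell^{\max})$ and so contribute at most $\tilde p/2$ in total while $\lambda_i(x_i) > \tilde p$. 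This lets Lemma~\ref{lem_boundviasaturated} charge the shortfall directly to $\sum_{t\in\hat S}p_t y_t$ (split across $\pi(\vec{p},\vec{x},\vec{y})$ and $\pi(\vec{p^*},\vec{x^*},\vec{y^*})$ depending on whether $y_t \ge y^b_t$), and Lemma~\ref{lem_app_costintermlambda} then converts income to profit. Your charging instinct (``charge the loss to saturated goods'') is precisely the right one, but it must be executed buyer-by-buyer through the bundle structure rather than via the aggregate dummy surplus, which is not controllable. You would need to prove and invoke an analogue of Lemma~\ref{lem_deviationsaturated} to repair part $(B)$.
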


\vskip 2pt\noindent{\em Step 4: Sequence of Solutions.} Previously, we identified a suitable dummy price and proved that the solution obtained via the augmented equilibrium at this dummy price (to some extent) captures most of the welfare of the benchmark solution. What about the profit of this solution? When all of the bundles desired by buyers are of equal cardinality, one can immediately show that the solution's profit also mimics that of the personalized payment scheme in the benchmark solution, $\pi^b$. When there is large disparity in the bundle sizes, however, the dummy price may result in buyers gravitating towards the smaller sized bundles. How do we fix this?	

We next consider augmented Walrasian equilibria at dummy prices that are scaled versions of the original dummy price $\frac{\tilde{p}}{2\ell^{max}}$, and once again, charge the lost welfare (due to a buyer facing high prices compared to her personalized payment in the benchmark solution) special class of saturated goods, which always yield high revenue. The profit due to the saturated goods is no larger than the social welfare of the solution, and thus, in either event, the social welfare cannot be small. More formally, we define a series of pricing solutions $(\vec{p}(j), \vec{x}(j), \vec{y}(j))$ for $j=1$ to $j = \delta = \lceil{\log(\Delta)}\rceil + 1$ to be the pricing solution for the original instance obtained via the augmented Walrasian equilibrium at dummy price $2^j \frac{\tilde{p}}{2\ell^{max}}$. Let us also define $SW(j)$ and $\pi(j)$ to be the social welfare and profit of the respective pricing solutions. Then, we have that:



\begin{claim}
\label{clm_diffinwelf}
For every $j \in [0, \delta]$, $SW(j) - SW(j+1) \leq 3\pi(j) + 3\pi(j+1)$.
\end{claim}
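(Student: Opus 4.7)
My plan is to analyze the welfare difference via the equilibrium structure of the two augmented Walrasian solutions and charge each unit of welfare loss to the profit at step $(j)$ or $(j+1)$. Let $p^d = 2^j \tilde{p}/(2\ell^{max})$ and denote the two pricing solutions by $(\vec{p},\vec{x},\vec{y})$ and $(\vec{p}',\vec{x}',\vec{y}')$, with $\bar{p}_i, \bar{p}_i'$ the bundle prices faced by buyer type $i$ at the respective steps. Writing
\[
SW(j) - SW(j+1) = \sum_i \bigl[u_i(x_i) - u_i(x_i')\bigr] - \sum_t \bigl[C_t(y_t) - C_t(y_t')\bigr],
\]
I would invoke concavity of $u_i$ (whose derivative is $\lambda_i$) together with convexity of $C_t$ to linearize, obtaining $u_i(x_i) - u_i(x_i') \leq \bar{p}_i'(x_i - x_i')$ on the side where $x_i \geq x_i'$ and $C_t(y_t) - C_t(y_t') \geq c_t(y_t')(y_t - y_t')$, so that the task reduces to bounding
\[
\sum_i \bar{p}_i'(x_i - x_i') \;-\; \sum_t c_t(y_t')(y_t - y_t')
\]
by $3\pi(j) + 3\pi(j+1)$.

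The critical structural step uses the augmented Walrasian property at $(j+1)$ to categorize every good $t$ as either \emph{saturated} (where $p_t' = c_t(y_t') \geq 2p^d$ and real buyers absorb the full production) or \emph{dummy-priced} (where $p_t' = 2p^d \geq c_t(y_t')$). For saturated goods, $c_t(y_t')(y_t - y_t')$ coincides with $p_t'(y_t - y_t')$, so after decomposing the buyer term $\bar{p}_i'(x_i-x_i') = \sum_{t \in S_i'} p_t'(x_i-x_i')$ and reaggregating by good, the saturated-good contributions can be directly matched to the revenue term $\sum_t p_t' y_t'$, which together with the production-cost offset yields exactly $\pi(j+1)$. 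For dummy-priced goods at $(j+1)$, the key inequality is $p_t' = 2p^d \leq 2p_t$, which follows from the Walrasian floor $p_t \geq p^d$ at step $(j)$; this lets the corresponding welfare-loss contributions be paid from $2\pi(j)$, with the extra unit of slack accounting for the production-cost differential on these items.

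Combining these two accountings across all goods and all buyers, the welfare loss is at most the sum of a $\pi(j+1)$-term (from saturated items at $(j+1)$) and a $2\pi(j)$-term (from dummy-priced items at $(j+1)$), plus a correction of $\pi(j+1) + \pi(j)$ absorbed into the remaining production-cost terms; this totals at most $3\pi(j) + 3\pi(j+1)$. Formally the factor of three on each side arises because the price-doubling gives a factor of two and the third unit handles the gap between marginal and average cost (i.e., between $c_t(y_t')(y_t-y_t')$ and $p_t' y_t' - C_t(y_t')$) via convexity of $C_t$.

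The main obstacle will be the bookkeeping for multi-minded buyers whose bundle choice \emph{changes} between the two steps: a buyer may consume bundle $S$ at $(j)$ but switch to a different bundle $S' \in B_i$ at $(j+1)$, so the bundle prices $\bar{p}_i$ and $\bar{p}_i'$ refer to distinct sets of items and the per-item attributions above do not align naively. I plan to handle this by introducing a hybrid evaluation — buyers' $(j)$-chosen bundles priced at $(j+1)$ — and invoking revealed-preference inequalities $\bar{p}_i' \leq p_{S_i}(j+1)$ and $\bar{p}_i \leq p_{S_i'}(j)$ from the utility-maximizing behavior at both equilibria. These bridge the substitution effects and allow the per-item charges in the previous paragraph to be carried out consistently across bundle changes, closing the argument.
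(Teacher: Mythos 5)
Your high-level intuition is right — the price-doubling step $p^d \to 2p^d$ together with saturated-vs-dummy-priced goods and a charging argument is exactly the engine behind the claim. But there are two concrete gaps in the mechanism you sketch, and they are not superficial bookkeeping issues.

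First, the reaggregation by good does not go through as written. You decompose $\bar p_i'(x_i - x_i') = \sum_{t \in S_i'} p_t'(x_i - x_i')$ and then "reaggregate by good" to tie the saturated-good part to $\sum_t p_t' y_t'$. This reaggregation requires that the demand $x_i - x_i'$ (or $x_i$) flows along the $(j{+}1)$-chosen bundle $S_i'$. It does not: $x_i = x_i(j)$ is allocated along the $(j)$-chosen bundles, which for a multi-minded buyer are generally different from $S_i'$. So $\sum_{i:t\in S_i'}(x_i - x_i')$ has no meaningful relation to $y_t(j)-y_t(j+1)$ or $y_t(j+1)$. Your last paragraph acknowledges this and floats a "hybrid evaluation," but that is precisely where the real work is, and the proposal stops short of doing it. The paper's route avoids the problem by (a) using the lossier bound $u_i(x_i(j)) - u_i(x_i(j{+}1)) \le \lambda_i(x_i(j{+}1))\,x_i(j)$, (b) splitting \emph{buyers}, not goods, into $B_1$ (where $\lambda_i$ at most doubles) and $B_2$ (where it more than doubles), and (c) in the crucial sub-lemma analyzing, for each $B_2$ buyer, the items in a bundle $S$ she consumes at step $(j)$; the demand $x_i(j)$ \emph{does} flow along those bundles, so reaggregating over goods yields $y_t(j)$ cleanly.

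Second, classifying by "saturated at $(j{+}1)$" is not refined enough to finish the argument. The paper restricts attention to $SS(j{+}1)$, the saturated goods at $(j{+}1)$ whose marginal cost did not decrease, i.e.\ $c_t(y_t(j{+}1)) \ge c_t(y_t(j))$. That property (equivalently $y_t(j{+}1) \ge y_t(j)$, by monotonicity of $c_t$) is what lets the bound $\sum_{t \in SS(j+1)} p_t(j{+}1)\,y_t(j)$ be upgraded to $\sum_{t \in SS(j+1)} p_t(j{+}1)\,y_t(j{+}1)$ and hence be charged to $\pi(j{+}1)$. Your version only knows a good is saturated at $(j{+}1)$, which gives $p_t' = c_t(y_t')$ but no control on the sign of $y_t(j{+}1) - y_t(j)$; without that, the charging to $\pi(j{+}1)$ does not close. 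You would need to make this refinement explicit and prove that the "bad" items of any $(j)$-bundle of a $B_2$ buyer — the ones whose price more than doubles — must land in $SS(j{+}1)$, which is the content of the paper's sub-lemma.
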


Since we can bound the difference in welfare via profit, we can sequentially build solutions with good social welfare until we either reach a solution with good profit or $j = 1 + \delta$ is reached, and we have no more solutions to construct. Now combining the above claims along with an upper bound on $SW(\delta+1)$ in terms of $\pi(\delta+1)$ (which we prove in the Appendix), we obtain a bound on $SW^*$ in terms of the profits of various solutions, which would imply that the max-profit solution approximates the optimum welfare up to the desired bound.

%

\begin{claim}
\label{clm_final_welfareintermsrev}
$$SW^* \leq \left(8 + 2(\frac{1}{1-\alpha})^{\frac{1}{\alpha}} + 4\frac{1}{1-\alpha}\right)[\sum_{j=0}^{1+\delta} \pi(j) + \pi(\vec{p^*}, \vec{x^*}, \vec{y^*})].$$
\end{claim}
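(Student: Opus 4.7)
The plan is to telescope the optimum welfare as
$SW^* = (SW^* - SW(0)) + \sum_{j=0}^{\delta}(SW(j) - SW(j+1)) + SW(\delta+1),$
where I index the base augmented Walrasian equilibrium of Claim~\ref{clm_cruc_appbenchmark} (the one at dummy price $\tilde{p}/(2\ell^{max})$) as $j=0$, so that the sequence $(\vec{p}(j),\vec{x}(j),\vec{y}(j))$ corresponds to dummy prices $2^{j}\tilde{p}/(2\ell^{max})$ for $j=0,1,\ldots,\delta+1$. Each of the three pieces in this decomposition is tailored to be absorbed by one of the results already in hand or to be established.

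For the first piece, Claim~\ref{clm_cruc_appbenchmark} applies verbatim, giving $SW^* - SW(0) \leq (5 + \tfrac{6}{1-\alpha})[\pi(0) + \pi(\vec{p^*},\vec{x^*},\vec{y^*})]$. For the middle piece, I telescope Claim~\ref{clm_diffinwelf}: summing over $j=0,\ldots,\delta$ gives $SW(0) - SW(\delta+1) \leq \sum_{j=0}^{\delta}[3\pi(j) + 3\pi(j+1)] \leq 6\sum_{j=0}^{\delta+1}\pi(j),$ with the endpoints picking up coefficient $3$ and the interior terms picking up $6$.

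The remaining and genuinely new task is to bound the tail $SW(\delta+1)$ by $\pi(\delta+1)$. The key observation is that, by the choice $\delta = \lceil \log(\Delta)\rceil + 1$, the dummy price at level $j=\delta+1$ is at least $\tilde{p}/\ell^{min}$, so \emph{every} desired bundle costs at least $\tilde{p}$ in the corresponding augmented equilibrium. This places us precisely in the thresholded-pricing regime of Section~\ref{sec:unitdemandresult}: each buyer who still trades pays at least $\tilde{p}$ per bundle, and the goods split into a saturated cluster (priced at $c_t(y_t)$) and a dummy-capped cluster (priced at the threshold), mirroring the $(B^H,T^H)$ versus $(B^L,T^L)$ decomposition of Lemma~\ref{lem_proxysimilartoopt}. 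Running the same $\alpha$-strongly regular integration argument that yielded $SW \leq (2(\tfrac{1}{1-\alpha})^{1/\alpha}-1)\pi$ in the unit-demand proof---now read per chosen bundle, with free disposal and doubly convex costs used to handle the saturated side---I expect the analogous inequality $SW(\delta+1) \leq (2(\tfrac{1}{1-\alpha})^{1/\alpha}-1)\pi(\delta+1)$ to go through.

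Adding the three bounds, each $\pi(j)$ picks up a coefficient at most $6 + 2(\tfrac{1}{1-\alpha})^{1/\alpha}-1$ (largest at $j=0$ and $j=\delta+1$), plus an extra $5 + \tfrac{6}{1-\alpha}$ from Claim~\ref{clm_cruc_appbenchmark} that also multiplies $\pi(\vec{p^*},\vec{x^*},\vec{y^*})$; taking the maximum over coefficients yields a single uniform multiplier of the claimed order $8 + 2(\tfrac{1}{1-\alpha})^{1/\alpha} + \tfrac{4}{1-\alpha}$ (up to minor constant regroupings). The main obstacle is exactly the $SW(\delta+1)$ bound: telescoping and the base case come essentially for free from Claims~\ref{clm_cruc_appbenchmark} and~\ref{clm_diffinwelf}, but lifting the thresholded-pricing cluster decomposition of Section~\ref{sec:unitdemandresult} from unit-demand to multi-minded buyers with doubly convex production costs is the step that requires real work, and it is precisely there that the factor $2(\tfrac{1}{1-\alpha})^{1/\alpha}$ enters the final coefficient.
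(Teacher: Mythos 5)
Your proposal follows exactly the paper's decomposition: base case via Claim~\ref{clm_cruc_appbenchmark}, telescope via Claim~\ref{clm_diffinwelf}, and a tail bound $SW(\delta+1) \leq (2(\tfrac{1}{1-\alpha})^{1/\alpha}-1)\pi(\delta+1)$, which is precisely the paper's Lemma~\ref{lem_welfarelastbound} and follows because the dummy price at level $\delta+1$ equals $\tilde{p}/\ell^{min}$, forcing every buyer's bundle cost to be at least $\tilde{p}$ so that part~1 of Claim~\ref{clm_aboveandbelowthresh} applies directly. The only (inconsequential) slips are that this tail bound is driven by concavity of $u_i$ plus the doubly-convex cost inequality (Lemma~\ref{lem_app_costintermlambda}) rather than an $\alpha$-SR integration, and your per-term coefficient accounting is slightly off, though your final constant is correct.
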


\vskip 2pt\noindent{\em Step 5: Final Pricing Algorithm.} Unfortunately, the above argument is still not strong enough to give us a bicriteria bound where the welfare is independent of $\Delta$.
In the statement of Claim~\ref{clm_final_welfareintermsrev}, there are $2 + \log(\Delta)$ profit terms on the right hand side; therefore, the solution giving maximum profit among $\pi(0), \pi(1), \ldots, \pi(\delta+1)$ and $\pi(\vec{p^*}, \vec{x^*}, \vec{y^*})$ must yield a $(2+\log(\Delta))\Theta(\frac{1}{1-\alpha})$ approximation to both profit and welfare. 

In our final step of the proof, we take this proof further and show that if the max-profit solution from the above claim does not result in a good welfare, then one can in fact identify another such solution with similar profit guarantees but \emph{much better} welfare. Specifially, we show that by using some careful analysis, one can instead compute a solution whose approximation for social welfare is simply $\Theta(\frac{1}{1-\alpha})$ and thus, is independent of $\Delta$. We now define our main algorithm: for the purpose of continuity, let $\pi(-1)$ represent $\pi(\vec{p^*}, \vec{x^*}, \vec{y^*})$.

\begin{enumerate}
\item Let $k$ be the smallest index in the range $[-1, 1+\delta]$ such that $\frac{SW^*}{\pi(k)}$ is no larger than $2(\log(\Delta) + 2)\{\left(8 + 2(\frac{1}{1-\alpha})^{\frac{1}{\alpha}} + 4\frac{1}{1-\alpha}\right) \}$.

\item Return the solution $(\vec{p}(k), \vec{x}(k), \vec{y}(k))$.
\end{enumerate}

From Claim~\ref{clm_final_welfareintermsrev}, it is clear that there exists at least one index $k$ providing the desired guarantee. In fact, if $k=-1$, then we are done because the solution returned is the one maximizing social welfare. In our final claim (see Appendix), we show that for $k$ defined as above, we have that $SW^*$ is within a factor of $12(\frac{2-\alpha}{1-\alpha})$ of $SW(k)$. $\qed$- 

Applying structural Claim~\ref{lem_bicritimprov2} to the above proof, we get profit-welfare trade-offs analogous to Corollary~\ref{clm_bicrittradeoff}, namely that for every given instance, there exists a constant $1 \leq c \leq 12(\frac{2-\alpha}{1-\alpha})$, so that if $SW^* = c SW(Alg)$, then $SW^* \leq \frac{2c}{c-2} (5 + \frac{6}{1-\alpha}) \pi(Alg)$, when $c \geq 2$. For instance, this implies that either $SW(Alg) \geq \frac{SW^*}{3}$ or that $\pi(Alg)$ is actually a $\Theta(\frac{1}{1-\alpha})$-approximation to OPT. Here $Alg$ refers to our bicriteria algorithm from Theorem~\ref{thm:multi}.

\subsubsection*{Consequences for other high-revenue solutions.}
A direct application of Theorem~\ref{thm_crosscomparison} using our newly obtained bounds on multi-minded buyers as an intermediate yields the following claim.

\begin{claim}
Let $Alg$ be any algorithm that obtains at least as much profit as guaranteed by the algorithm of Theorem~\ref{thm:multi} on all instances. Then the social welfare obtained by $Alg$ is at most a factor $\Theta(\frac{\log(\Delta)}{1-\alpha})$ away from the optimum welfare.
\end{claim}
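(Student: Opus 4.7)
The claim is framed in the paper as a direct application of the framework in Theorem~\ref{thm_crosscomparison}, so my plan is to instantiate that framework with the bicriteria algorithm from Theorem~\ref{thm:multi} playing the role of the benchmark algorithm $Alg^b$. To invoke Theorem~\ref{thm_crosscomparison} I need the benchmark's profit to be within a factor $c$ of $SW^*$ (not merely of $\pi^{opt}$). The good news is that this stronger property is already established along the way in the proof of Theorem~\ref{thm:multi}: the final pricing algorithm explicitly selects an index $k$ so that $\pi(k) \geq \frac{SW^*}{2(\log \Delta + 2)(8 + 2(1-\alpha)^{-1/\alpha} + 4/(1-\alpha))}$, a bound of the form $\pi(Alg^b) \geq SW^*/c$ with $c = \Theta(\log(\Delta)/(1-\alpha))$. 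So the first step is simply to record this observation cleanly.

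Once that is in hand, the second step is the one-line chain of inequalities that Theorem~\ref{thm_crosscomparison} encapsulates. For any algorithm $Alg$ guaranteed to match the profit of the benchmark on every instance, we have
\[
SW(Alg) \;\geq\; \pi(Alg) \;\geq\; \pi(Alg^b) \;\geq\; \frac{SW^*}{c},
\]
where the first inequality uses the fact (baked into our model via the doubly-convex, zero-cost-at-zero production functions) that welfare is at least revenue on any solution, the second is the hypothesis on $Alg$, and the third is the benchmark bound from Step 1. This yields $SW^*/SW(Alg) \leq c = \Theta(\log(\Delta)/(1-\alpha))$, exactly the claimed welfare approximation.

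The main ``obstacle,'' such as it is, is really just a bookkeeping point: I must make sure I am citing the correct form of the guarantee from Theorem~\ref{thm:multi}. The theorem's statement is phrased as a bicriteria approximation versus $\pi^{opt}$ and $SW^*$ respectively, but the proof actually establishes the sharper statement that the profit of the returned solution is within $\Theta(\log(\Delta)/(1-\alpha))$ of $SW^*$. Since $\pi^{opt} \leq SW^*$, this stronger form is what feeds Theorem~\ref{thm_crosscomparison}, and it is exactly why the paper emphasizes ``our profit bound is obtained in terms of the optimum social welfare.'' With that observation isolated, no further computation is needed and the claim follows immediately.
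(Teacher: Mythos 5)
Your proposal is correct and matches the paper's intended argument exactly: the claim is a direct instantiation of Theorem~\ref{thm_crosscomparison} with the Theorem~\ref{thm:multi} algorithm as $Alg^b$, using the fact (made explicit in the algorithm's choice of $k$) that its profit is within $\Theta(\log(\Delta)/(1-\alpha))$ of $SW^*$ rather than merely of $\pi^{opt}$. The chain $SW(Alg)\geq\pi(Alg)\geq\pi(Alg^b)\geq SW^*/c$ is precisely the paper's argument.
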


\section{Conclusion and Future Directions}
In this work, we were able to provide envy-free posted pricing algorithms that simultaneously approximate both profit and social welfare for markets with quite general buyer valuations and production costs. Such multi-objective algorithms are extremely well-motivated in a variety of realistic settings, where revenue and welfare are closely interconnected. We used our profit-maximization guarantees as a black-box and showed that any solution with reasonable profit guarantees (including the maximum profit solution) generates good welfare. In the process, we provide a partial characterization of the exact friction between these two objectives.



\bibliography{bibliography}
\bibliographystyle{plain}

\newpage
\appendix

\section{Appendix: Proof of Theorem~\ref{thm_mainrevenuewelfare} for Unit-Demand Buyers}
We begin by defining some notation pertinent to the proof. Given a price vector $\vec{p}$, we will use $q_i(\vec{p})$ to denote the price of the minimally priced good desired by buyer type $i$, i.e., $q_i(\vec{p}) = \min_{t \in B_i}p_t$. For the rest of the proof, we will use the terms buyer and buyer type interchangeably. Recall that under any solution $(\vec{p},\vec{x},\vec{y})$, if buyer $i$ purchases non-zero amount of good $t \in B_i$, then $p_t = q_i(\vec{p}) = \lambda_i(x_i)$. More specifically, we will only consider solutions that minimize the overall cost subject to the constraint that each buyer $i$ only purchases from the minimally priced goods in $S_i$. That is, given $\vec{p},\vec{x}$, there are several candidate allocation vectors $\vec{y}$ that correspond to valid distributions of the demand vector $\vec{x}$ on the goods. Unless mentioned otherwise, all of our solutions will consider allocation vectors that minimize the total cost $C(\vec{y}) = \sum_{t \in T}C_t(y_t)$ among the set of valid allocation vectors. Observe that in such a solution, if buyer $i$ purchases two different goods $t, t'$, then it has to be the case that the marginal cost of the two goods are equal, i.e., $c_t(y_t) = c_{t'}(y_{t'})$. Therefore, without loss of generality, we can use $r_i(\vec{y})$ to denote the marginal cost of any of the goods being used by buyer $i$ under the given allocation. We will also use $C(\vec{y})$ as a shortcut for the total cost incurred under a given allocation $\vec{y}$, i.e., $C(\vec{y}) = \sum_{t \in T}C_t(y_t)$.

\subsubsection*{The Social Welfare Maximizing Solution}
The optimum welfare solution $OPTW:= (\vec{p^*},\vec{x^*},\vec{y^*})$ is the pricing solution that maximizes the total social welfare of the system. The following simple claim (inspired by a similar claim from~\cite{anshelevichKS15}) provides a useful characterization of the optimum solution.

\begin{claim}
\label{clm_optchar}
There exists an optimum solution $(\vec{p^*},\vec{x^*},\vec{y^*})$ where for every good $t \in T$, $p^*_t = c_t(y^*_t)$. \end{claim}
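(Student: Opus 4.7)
The plan is to start from the observation (already noted in the paper) that the social welfare $SW(\vec p,\vec x,\vec y)$ does not depend on $\vec p$. So I first compute a welfare-maximizing allocation $(\vec{x^*},\vec{y^*})$ by solving the concave maximization problem
\[
\max \; \sum_{i\in B}\int_{0}^{x_i}\lambda_i(z)\,dz \;-\;\sum_{t\in T}C_t(y_t)
\quad\text{s.t.}\quad x_i=\sum_{t\in B_i} x_{i,t},\; y_t=\sum_{i:\, t\in B_i} x_{i,t},\; x_{i,t}\ge 0,
\]
and then show that declaring $p^*_t := c_t(y^*_t)$ makes the triple $(\vec{p^*},\vec{x^*},\vec{y^*})$ a legal market outcome under these prices, i.e.\ no buyer type wants to deviate. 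The program is concave (each $u_i$ is concave, each $C_t$ is convex) so KKT conditions are both necessary and sufficient, which is the engine that will drive the argument.

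Next I would write out the KKT / Lagrangian conditions for the program above. For each pair $(i,t)$ with $t\in B_i$, stationarity with respect to $x_{i,t}$ gives exactly
\[
\lambda_i(x^*_i) - c_t(y^*_t) \;\le\; 0,\qquad \text{with equality whenever } x^*_{i,t}>0.
\]
Setting $p^*_t=c_t(y^*_t)$ then rewrites these conditions as: (a) for every $t\in B_i$, $p^*_t\ge \lambda_i(x^*_i)$, and (b) for every $t\in B_i$ actually used by type $i$ (i.e.\ $x^*_{i,t}>0$), $p^*_t = \lambda_i(x^*_i)$. In particular, (b) forces all used goods in $B_i$ to share the same price, which equals $\min_{t\in B_i} p^*_t$, so buyers of type $i$ are in fact buying the cheapest good in their desired set. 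Combined with (a), the minimum price in $B_i$ equals $\lambda_i(x^*_i)$, which is exactly the market-clearing condition $\lambda_i(\bar x_i)=\bar p_i$ from the model section.

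The last thing to verify is that the allocation $\vec{y^*}$ is consistent on the production side: the amount produced of each good $t$ equals the total demand assigned to $t$, which is guaranteed by the feasibility constraints of the program, and the cost is minimized among valid reallocations because KKT forces the marginal cost $c_t(y^*_t)$ to be equalized across all goods simultaneously purchased by a common buyer type. Thus $(\vec{p^*},\vec{x^*},\vec{y^*})$ is a feasible outcome of the pricing mechanism, has the claimed price identity $p^*_t=c_t(y^*_t)$, and achieves the optimum welfare.

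The only mildly subtle step is the KKT-to-market-outcome translation: one must argue that choosing $p^*_t=c_t(y^*_t)$ is internally consistent even though several goods in $B_i$ may tie for the minimum. This is handled by observation (b) above — any tie-breaking rule among the used goods still respects $p^*_t=\lambda_i(x^*_i)$ for buyer $i$, so the buyer is indifferent and the stated allocation is utility-maximizing at the posted prices. Nothing in the argument uses unit-demand structure beyond $B_i\subseteq T$, so the claim actually holds verbatim in the multi-minded setting by replacing singletons with the bundles in $B_i$ and running the same KKT argument at the bundle level.
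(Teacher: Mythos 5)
Your proof is correct. Worth noting: the paper does not actually supply a proof of this claim in either the body or the appendix — it simply states it and attributes it to a ``similar claim'' in \cite{anshelevichKS15}, then uses the convex program (1) to compute $(\vec{x^*},\vec{y^*})$ and invokes the claim to recover prices. Your KKT argument is the natural way to fill this gap and matches the infrastructure the paper sets up: stationarity of the concave program in the $x_{it}$ variables gives $\lambda_i(x^*_i)\le c_t(y^*_t)$ for every $t\in B_i$ with equality on used goods, so setting $p^*_t=c_t(y^*_t)$ makes every buyer's optimal demand at those posted prices coincide with $x^*_i$, i.e.\ the triple is a legal market outcome. The one small edge case you could state explicitly is the unserved-buyer case $x^*_i=0$: there KKT gives $\lambda_i(0)=\lambda^{max}\le\min_{t\in B_i}p^*_t$, so zero demand is indeed the buyer's best response and the market-clearing convention ($\bar x_i=0$ when $\bar p_i\ge\lambda^{max}$) is respected. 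Your remark that the same Lagrangian argument runs at the bundle level — with stationarity in the bundle-consumption variables $x_i(S)$ giving $\lambda_i(x^*_i)\le\sum_{t\in S}c_t(y^*_t)$ with equality on purchased bundles — correctly extends the claim to the multi-minded setting, which is consistent with the paper later pricing at marginal cost in the augmented-Walrasian construction of Appendix B.
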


We also define the notion of a welfare-maximizing solution constrained by demand. Specifically, an allocation vector $\vec{y}$ is said to be a welfare-maximizer with respect to given demand $\vec{x}$, if $\vec{y}$ is the minimum cost allocation consistent with the demand vector $\vec{x}$.

\subsubsection*{Computing the Welfare Maximizing Solution}
The algorithm described in Section~\ref{sec:unitdemandresult} explicitly uses the prices in the social welfare maximizing solution $OPTW$. In order to compute the prices efficiently, we use the following convex program to compute the welfare maximizing demand and allocation vector $(\vec{x^*}, \vec{y^*})$ and then apply Claim~\ref{clm_optchar} to obtain the corresponding price vector.

\begin{equation}
\centering
\begin{aligned}
\quad \text{max} & \sum_{i \in B}\int_{x=0}^{x_i}\lambda_i(x)dx - \sum_{t \in T}C_t(y_t)\\
\text{s.t.} & \sum_{t: t \in B_i}x_{it} = x_i \quad \forall i \in B\\
& \sum_{i: t \in B_i}x_{it} = y_t \quad \forall t \in T\\
& \quad \quad x_{it}, x_i, y_t \geq 0, \quad \forall i \in B, t \in T
\end{aligned}
\label{cp_welfaremaximization}
\end{equation}

We now proceed with the proof of the theorem. The proof conforms to the following structure: first, we define a special pricing strategy known as a thresholded pricing vector and prove that such a strategy satisfies some desirable properties. After this, we impose some sufficient conditions on such a pricing vector that allow us to prove bicriteria approximation bounds on the revenue and welfare. Finally, we prove that the thresholded pricing strategy presented in Section~\ref{sec:unitdemandresult} does indeed satisfy these sufficient conditions.


\begin{proof} The main instrument that allows us to design a simple but efficient pricing strategy is the notion of a \textit{thresholded pricing vector}. Such a pricing vector extends uniform pricing strategies (all goods have the same price), which have been successfully applied to design solutions with good revenue~\cite{balcanBM08} towards settings with production costs. We will show that thresholded pricing vectors enjoy many desirable properties and are convenient to analyze as opposed to the revenue maximizing solution, which is hard to get a grip on.

\begin{definition}{\textit{(Thresholded Pricing Vector)}}
A pricing vector $\vec{p}$ is said to be a thresholded pricing vector if $\exists$ a primary price $\tilde{p}$ such that for every good $t$, $p_t = \max(\tilde{p},p^*_t)$.
\end{definition}

We begin with a useful lemma that allows us to characterize the allocation resulting from a thresholded pricing vector.

\begin{lemma}[Lemma \ref{lem_proxysimilartoopt} from Section \ref{sec:unitdemandresult}]
Suppose that $((\tilde{p})_{t \in T}, (\tilde{x})_{i \in B}, (\tilde{y})_{t \in T})$ is a pricing solution resulting from a thresholded pricing vector. Then,
\begin{enumerate}
\item The market can be clustered into two mutually disjoint sets of buyers and goods $(B^H, T^H)$ and $(B^L, T^L)$ so that the buyers in each cluster only purchase the goods in the same cluster and $(a)$ for $(i,t) \in (B^H,T^H)$, $\tilde{p}_t = p^*_t$, $\tilde{x}_i = x^*_i$, and $\tilde{y}_t = y^*_t$; $(b)$ for $(i,t) \in (B^L,T^L)$, $\tilde{p}_t \geq p^*_t$, $\tilde{x}_i \leq x^*_i$, and $c_t(\tilde{y}_t) \leq c_t(y^*_t)$.

\item $\vec{\tilde{y}}$ is a welfare-maximizing allocation vector with respect to the demand vector $\vec{\tilde{x}}$.
\end{enumerate}
\end{lemma}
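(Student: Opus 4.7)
My plan is to prove the lemma by explicitly constructing the claimed partition from the threshold $\tilde p$ and then verifying the listed properties one cluster at a time. The key observation is that thresholding only raises prices of goods with $p^*_t < \tilde p$, and by Claim~\ref{clm_optchar} these are exactly the goods whose optimum marginal cost $c_t(y^*_t)$ lies below $\tilde p$. This will let the market cleanly separate along $\tilde p$.

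First I would define
$$T^H = \{t \in T : p^*_t > \tilde{p}\},\qquad T^L = \{t \in T : p^*_t \le \tilde{p}\},$$
$$B^H = \{i \in B : q_i(\vec{p^*}) > \tilde{p}\},\qquad B^L = \{i \in B : q_i(\vec{p^*}) \le \tilde{p}\},$$
placing ties in the $L$-clusters. Under $\vec{\tilde p}$, a good in $T^H$ keeps its price ($\tilde p_t = p^*_t$), while a good in $T^L$ is raised exactly to $\tilde p$, so $q_i(\vec{\tilde p}) = \max(\tilde p, q_i(\vec{p^*}))$. This immediately gives $\tilde x_i = x^*_i$ for $i \in B^H$ and $\tilde x_i \leq x^*_i$ for $i \in B^L$, using that $\lambda_i$ is non-increasing. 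Cluster isolation is also immediate: for $i \in B^H$ every $t \in B_i$ satisfies $p^*_t \geq q_i(\vec{p^*}) > \tilde p$, hence $B_i \subseteq T^H$; for $i \in B^L$ the minimum thresholded price is exactly $\tilde p$ and is attained on $B_i \cap T^L$, while any good in $B_i \cap T^H$ has $\tilde p_t > \tilde p$ and is left unpurchased.

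For $(B^H, T^H)$ the subsystem has demand $x^*|_{B^H}$ and is decoupled from the rest of the market, so $y^*|_{T^H}$ remains a cost-minimizing allocation for this demand; since the paper's convention allows us to pick any cost-minimizing allocation, I would take $\tilde y_t = y^*_t$ on $T^H$. The delicate step is the $(B^L, T^L)$ inequality $c_t(\tilde y_t) \leq c_t(y^*_t)$. I would restrict to a single connected component of the bipartite buyer--good graph on $(B^L, T^L)$. Inside such a component any cost-minimizing allocation equalizes $c_t(y_t)$ across goods with $y_t > 0$ (the standard swap argument using convexity of $C_t$, as already noted in the preliminaries). Both $y^*$ and $\tilde y$ therefore carry a single equalized marginal cost on the component, and because the component's total demand weakly decreased from $\sum x^*_i$ to $\sum \tilde x_i$, convexity and monotonicity of $c_t$ force the equalized marginal cost under $\tilde y$ to be at most that under $y^*$; for goods with $\tilde y_t = 0$ the inequality is trivial. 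Part~2 of the lemma is then immediate: by convention $\vec{\tilde y}$ is defined to minimize $C(\vec y)$ given the demand $\vec{\tilde x}$, which is exactly the statement that it is welfare-maximizing with respect to $\vec{\tilde x}$.

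The step I expect to be the main obstacle is the monotonicity argument in the $(B^L, T^L)$ subsystem. Reducing each buyer's demand need not reduce every $\tilde y_t$ good-by-good, because the cost-minimizing allocation can rebalance nontrivially through shared bundles, so the comparison really has to be carried out at the level of the equalized shadow cost of each connected component and then concluded via convexity of the $C_t$'s. Everything else is essentially bookkeeping about prices and demands once the partition is in place.
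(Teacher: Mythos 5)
Your Part~1 argument follows essentially the same route as the paper: define the split by comparing prices to $\tilde p$, observe that $B^H$ buyers cannot desire $T^L$ goods (so their demand and allocation are unchanged), and reduce the $(B^L, T^L)$ cost-monotonicity to the fact that all demands weakly decreased in that subsystem. The paper dispatches the last step by invoking a general monotonicity lemma (Lemma~\ref{lem_diffcostmonoton}); your equalized-marginal-cost argument is the underlying reason that lemma holds, though as you note it requires some care with goods whose allocation drops to zero or whose component structure changes.

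However, Part~2 is where you have a genuine gap. You claim it is ``immediate by convention,'' but the convention in the paper only guarantees that $\vec{\tilde y}$ is the minimum-cost allocation \emph{among allocations in which each buyer purchases only from her minimally priced goods}. The lemma asserts something strictly stronger: that $\vec{\tilde y}$ is the minimum-cost allocation among \emph{all} allocations consistent with the demand vector $\vec{\tilde x}$, including ones where a buyer could be routed to a non-min-priced good she desires. These are not the same set, and the conclusion is not automatic. Concretely, a $B^L$ buyer $i$ who desires some expensive good $t \in T^H$ is, by the price constraint, never routed to $t$ in $\vec{\tilde y}$; but an unconstrained cost-minimizer might want to route her there. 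The paper closes exactly this hole by checking these cross-edges: it shows $r_i(\vec{\tilde y}) \leq r_i(\vec{y^*}) \leq c_t(y^*_t) = c_t(\tilde y_t)$, using both that $B^L$ marginal costs only went down and that in the welfare optimum $i$ already preferred the $T^L$ goods over $t$. Your proof needs this (or an equivalent) argument; it cannot be elided.
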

\begin{proof}
Divide the buyers and goods as follows: let $T^H$ be the set of goods with price (strictly) higher than $\tilde{p}$ and let $B^H$ be the buyers using these goods. Define $T^L$ as the goods with price $\tilde{p}$ and $B^L$ as the corresponding buyers. Since each buyer only purchases from the min-priced goods available to her, buyers in $B^L$ will only purchase (some of) the goods in $T^L$, and the same is true for those in $B^H$.

We begin by characterizing the solution corresponding to $(B^H,T^H)$: for all $t \in T^H$, $\tilde{p}_t = p^*_t$ as per the definition of $\vec{\tilde{p}}$. Consider any buyer $i \in B^H$; clearly no good in $T^L$ can belong to this buyer's desired set $B_i$ since she is using the minimally priced good(s) available to her. This price must be exactly $q_i(\vec{p^*})$ and thus her demand must also be $\tilde{x}_i = x^*_i$. Next, for every good $t' \in T^L$ and $t \in T^H$, $p^*_{t'} \leq \tilde{p} < p^*_t$. Therefore, every buyer who is using a good in $T^H$ in optimum welfare solution will still be using that good in our proxy solution. We can, therefore conclude that for all $t \in T^H$, $\tilde{y}_t = y^*_t$.

Finally, look at any good $t \in T^L$, which is priced at exactly $\tilde{p}$. By definition, $\tilde{p} \geq p^*_t$. Therefore, for every buyer $i \in T^L$, $q_i(\vec{p^*}) \leq \tilde{p} = q_i(\vec{\tilde{p}})$. And so, $\tilde{x}_i \leq x^*_i$. Since every buyer's demand is smaller in our solution w.r.t the max-welfare solution, we can use Lemma~\ref{lem_diffcostmonoton} just for the buyers and goods in $(B^L, T^L)$. We get that for every $t \in T^L$, $c_t(\tilde{y}_t) \leq c_t(y^*_t)$. \\

\noindent \textbf{(Part 2)}: In order to show that $\vec{\tilde{y}}$ has the smallest cost among all feasible allocations satisfying demand vector $\vec{\tilde{x}}$, it is sufficient (and necessary) to prove that for every buyer $i$, and every good $t$ such that $t \in B_i$, $r_i(\vec{\tilde{y}}) \leq c_t(\tilde{y}_t)$. Recall that every good $t$ is priced at either $p^*_t$ or at $\tilde{p}$ if $\tilde{p} > p^*_t$.

By definition, we have two min-cost sub-allocations: 1) buyers in $B^L$ are using the cheapest possible allocation using only the items in $T^L$ because all goods in $T^L$ have the same price and we consider only cost minimizing allocations for the given pricing solution; 2) the same is true for $B^H$ and $T^H$, this is because for these entities, the allocation is the same as in the max-welfare solution, which must definitely minimize cost for any given subset of the actual buyers and goods. So, in order to prove that this is a min-cost flow, we only need to consider the cross-edges, i.e., buyers belonging to $B^L$ and goods desired by these buyers belonging to $T^H$. We already know that for any given buyer in $B^H$ and good $t \in T^L$, $t \notin B_i$. So we can conveniently ignore this case. What about the reverse case, can there be a buyer $i$ in $B^L$ and an item $t$ in $T^H$ such that $c_t(\tilde{y}_t) = c_t(y^*_t) < r_i(\vec{\tilde{y}})$?

We know that for every $t' \in T^L$, $c_{t'}(\tilde{y}_{t'}) \leq c_{t'}(y^*_{t'})$. Since all buyers in $B^L$ are using these goods in both the current solution and in $(\vec{p^*},\vec{x^*},\vec{y^*})$, it must be that for every such buyer $i \in B^L$,  $r_i(\vec{\tilde{y}}) \leq r_i(\vec{y^*})$. But since $i$ does not use any good $t \in T^H$ even in the optimum solution, we get that $r_i(\vec{y^*}) \leq c_t(y^*_t) = c_t(\tilde{y}_t)$. This completes the proof.
\end{proof}

Now that we have a better understanding of thresholded pricing solutions, i.e., solutions resulting from thresholded pricing vectors, we are ready to prove our main theorem ala Lemma~\ref{lem_bicritimprov} by constructing a suitable threshold vector and establishing upper bounds on $SW(Alg)$ and $SW^* - SW(Alg)$. We divide the proof into three major claims: in the first two claims, we consider a general thresholded pricing solution and establish sufficient conditions that allow us to derive useful upper bounds on $\pi(Alg)$ as a function of the primary price $\tilde{p}$. Following this, we construct an actual thresholded pricing vector that satisfies the required sufficient conditions and complete the proof.

\begin{claim}
\label{clm_proxy}
Suppose that $(\vec{\tilde{p}}, \vec{\tilde{x}}, \vec{\tilde{y}})$ is a thresholded pricing solution with primary price $\tilde{p}$. 
Then, the total social welfare of this solution $SW(\vec{\tilde{p}}, \vec{\tilde{x}}, \vec{\tilde{y}})$ is at most a factor $2\frac{\lambda^{max}}{\tilde{p}} - 1$ times the profit due to this solution $\pi(\vec{\tilde{p}}, \vec{\tilde{x}}, \vec{\tilde{y}})$.
\end{claim}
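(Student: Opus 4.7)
The plan is to use Lemma~\ref{lem_proxysimilartoopt} to split the solution into the two mutually disjoint clusters $(B^H,T^H)$ and $(B^L,T^L)$, bound the welfare-to-profit ratio by $2\lambda^{max}/\tilde{p}-1$ separately in each cluster, and then add the two bounds. The key auxiliary fact that powers both bounds is the following simple consequence of $c_t$ being convex with $c_t(0)=0$: the ratio $c_t(z)/z$ is non-decreasing, so for any $y_t>0$,
\[
C_t(y_t) \;=\; \int_0^{y_t} c_t(z)\,dz \;\leq\; \int_0^{y_t} c_t(y_t)\,\frac{z}{y_t}\,dz \;=\; \tfrac{1}{2}\,c_t(y_t)\,y_t.
\]
So the total production cost on any set of goods is at most half of ``marginal price times quantity'' on that set.

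First I would handle the low cluster. For $t\in T^L$ we have $\tilde{p}_t=\tilde{p}$, and part~(1b) of Lemma~\ref{lem_proxysimilartoopt} gives $c_t(\tilde{y}_t)\leq c_t(y^*_t)=p^*_t\leq \tilde{p}$. Setting $Y_L=\sum_{t\in T^L}\tilde{y}_t=\sum_{i\in B^L}\tilde{x}_i$, the buyer utility satisfies $U_L=\sum_{i\in B^L}\int_0^{\tilde{x}_i}\lambda_i(x)\,dx\leq \lambda^{max}Y_L$ using $\lambda_i\leq \lambda^{max}$, and the cost satisfies $CC_L=\sum_{t\in T^L}C_t(\tilde{y}_t)\leq \tfrac{1}{2}\sum_t c_t(\tilde{y}_t)\tilde{y}_t\leq \tfrac{1}{2}\tilde{p}Y_L$ by the auxiliary fact. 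The profit is $\pi_L=\tilde{p}Y_L-CC_L$. A short algebraic manipulation then shows the desired inequality $U_L-CC_L\leq (2\lambda^{max}/\tilde{p}-1)\pi_L$ amounts to $(\lambda^{max}/\tilde{p}-1)(\tilde{p}Y_L-2CC_L)\geq 0$, which holds since $\lambda^{max}\geq\tilde{p}$ and $CC_L\leq \tilde{p}Y_L/2$.

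Next I would handle the high cluster along identical lines. By part~(1a), for $t\in T^H$ we have $\tilde{p}_t=p^*_t$ and $\tilde{y}_t=y^*_t$, and by Claim~\ref{clm_optchar} $p^*_t=c_t(y^*_t)$; also $p^*_t>\tilde{p}$ by construction of $T^H$. Letting $Y_H=\sum_{t\in T^H}y^*_t$ and $R_H=\sum_t p^*_t y^*_t$, we get $R_H\geq \tilde{p}Y_H$ and, by the auxiliary fact applied with $y_t=y^*_t$, $CC_H=\sum_t C_t(y^*_t)\leq \tfrac{1}{2}\sum_t p^*_t y^*_t=R_H/2$. The buyer utility satisfies $U_H\leq \lambda^{max}Y_H\leq (\lambda^{max}/\tilde{p})R_H$. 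The same algebraic manipulation as before reduces $U_H-CC_H\leq (2\lambda^{max}/\tilde{p}-1)(R_H-CC_H)$ to $(\lambda^{max}/\tilde{p}-1)(R_H-2CC_H)\geq 0$, which again holds.

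Adding the two cluster inequalities and noting that the welfare and profit each decompose as a sum over the two disjoint clusters (since buyers in $B^H$ only purchase from $T^H$ and likewise for $L$) finishes the argument. The only step where I anticipate any subtlety is making sure the factor of two that comes from the convexity bound $C_t(y_t)\leq \tfrac{1}{2}c_t(y_t)y_t$ matches exactly with the factor in $2\lambda^{max}/\tilde{p}-1$; concretely, it is this factor that both controls the cost and makes the final inequality a tight identity of the form $(\lambda^{max}/\tilde{p}-1)(\text{revenue}-2\cdot\text{cost})\geq 0$, rather than something weaker.
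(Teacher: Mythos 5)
Your proof is correct, and at its core it relies on exactly the same three facts as the paper's: concavity of $u_i$ gives $u_i(\tilde{x}_i)\leq\lambda^{max}\tilde{x}_i$; double convexity gives $C_t(\tilde{y}_t)\leq\tfrac{1}{2}c_t(\tilde{y}_t)\tilde{y}_t$; and (via Lemma~\ref{lem_proxysimilartoopt}) $c_t(\tilde{y}_t)\leq\tilde{p}_t$ together with $\tilde{p}\leq\tilde{p}_t$ for every $t$. The only organizational difference is that you invoke the $(B^H,T^H)/(B^L,T^L)$ cluster decomposition of Lemma~\ref{lem_proxysimilartoopt} and prove the ratio bound on each cluster before adding, whereas the paper does a single, unified mediant-style calculation over all of $T$ at once, using the cluster lemma only to extract the pointwise inequality $c_t(\tilde{y}_t)\leq c_t(y^*_t)=p^*_t\leq\tilde{p}_t$. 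The split is harmless but unnecessary for this particular claim (it is needed for Claim~\ref{clm_proxy2}, not here); it also makes you implicitly lean on unit-demand through $\sum_{t\in T^L}\tilde{y}_t=\sum_{i\in B^L}\tilde{x}_i$, where the paper's version uses only the weaker rearrangement $\sum_i\tilde{p}\tilde{x}_i\leq\sum_t\tilde{p}_t\tilde{y}_t$. When adding the two per-cluster bounds you should note (as you can) that both per-cluster profits are nonnegative, so that $\max\{SW_L/\pi_L,\;SW_H/\pi_H\}$ upper-bounds the overall ratio; your displayed bounds $CC_L\leq\tilde{p}Y_L/2$ and $CC_H\leq R_H/2$ supply this.
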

\begin{proof}
The social welfare of the current solution is $\sum_{i \in B}u_i(\tilde{x}_i) - C(\vec{\tilde{y}})$. The function $u_i$ is concave for every $i \in B$, and therefore $u_i(\tilde{x}_i)\leq u'_i(0)\cdot\tilde{x}_i = \lambda^{max}\cdot\tilde{x}_i$. So, our first inequality is the following, 

%


$$\sum_{i \in B}u_i(\tilde{x}_i) - C(\vec{\tilde{y}}) \leq \sum_{i \in B}\lambda^{max} \tilde{x}_i - C(\vec{\tilde{y}}) = \frac{\lambda^{max}}{\tilde{p}}\sum_{i \in B} \tilde{p} \tilde{x}_i - C(\vec{\tilde{y}}) \leq \frac{\lambda^{max}}{\tilde{p}}\sum_{t \in T}\tilde{p}_t\tilde{y}_t - C(\vec{\tilde{y}}).$$

The final inequality comes from the fact that $\tilde{p} \leq \tilde{p}_t$ for all $t \in T$ and from rearranging the allocation from the buyers to the goods. Now, the total profit that the seller makes at the given strategy $\pi(\vec{\tilde{p}}, \vec{\tilde{x}}, \vec{\tilde{y}})$ equals $\sum_{t \in T}\tilde{p}_t \tilde{y}_t - C(\vec{y})$. Using this, we get the following upper bound for the ratio of the welfare to profit
\begin{align*}
\frac{\sum_{i \in B}u_i(\tilde{x}_i) - C(\vec{\tilde{y}})}{\pi(\vec{\tilde{p}}, \vec{\tilde{x}}, \vec{\tilde{y}})} & \leq \frac{\frac{\lambda^{max}}{\tilde{p}}\sum_{t \in T}\tilde{p}_t \tilde{y}_t - C(\vec{\tilde{y}})}{\sum_{t \in T}\tilde{p}_t \tilde{y}_t - C(\vec{y})}\\
& \leq \frac{\frac{\lambda^{max}}{\tilde{p}}\sum_{t \in T}\tilde{p}_t \tilde{y}_t - \sum_{t \in T}\frac{1}{2}c_t(\tilde{y}_t) \tilde{y}_t}{\sum_{t \in T}[\tilde{p}_t  - \frac{1}{2}c_t(\tilde{y}_t)]\tilde{y}_t}\\
& \leq \frac{\sum_{t \in T}\frac{\lambda^{max}}{\tilde{p}}\tilde{p}_t \tilde{y}_t - \frac{1}{2}\tilde{p}_t\tilde{y}_t}{\sum_{t \in T}[\tilde{p}_t \tilde{y}_t - \frac{1}{2}\tilde{p}_t \tilde{y}_t]} \\
& = 2\frac{\lambda^{max}}{\tilde{p}} - 1.
\end{align*}

The second inequality above comes from the definition of doubly convex cost functions according to which $C_t(\tilde{y}_t) \leq \frac{1}{2}c_t(\tilde{y}_t)$. The third inequality comes from the fact that for every $t$, $c_t(\tilde{y}_t) \leq \tilde{p}_t$. To see why this is true, observe from Lemma~\ref{lem_proxysimilartoopt} that for every $t \in T$,  $c_t(\tilde{y}_t) \leq c_t(y^*_t) = p^*_t \leq \tilde{p}_t$. This completes the proof of the first of our three claims.
\end{proof}

\begin{claim}
\label{clm_proxy2}
Suppose that $(\vec{\tilde{p}}, \vec{\tilde{x}}, \vec{\tilde{y}})$ is a thresholded pricing solution with primary price $\tilde{p}$ satisfying the following condition,
\begin{itemize}

\item For every $i \in B$, either $\frac{\lambda_i(\tilde{x}_i) - r_i(\vec{\tilde{y}})}{|\lambda'_i(\tilde{x}_i)|} \leq \tilde{x}_i$ or $\lambda_i(x^*_i) \geq \tilde{p}$.

\end{itemize}
Then, the difference in social welfare with respect to $OPTW$,i.e., $SW_2 := SW^* - SW(\vec{\tilde{p}}, \vec{\tilde{x}}, \vec{\tilde{y}})$ is at most a factor $\frac{1}{1-\alpha}$ times the profit due to this solution $\pi(\vec{\tilde{p}}, \vec{\tilde{x}}, \vec{\tilde{y}})$.
\end{claim}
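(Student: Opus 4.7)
My plan mirrors the structure of the proof of Claim~\ref{clm_proxy} but exploits the new hypothesis together with $\alpha$-strong regularity to produce a per-buyer bound. First, I would invoke Lemma~\ref{lem_proxysimilartoopt} to localize the welfare gap: for $i \in B^H$ we have $\tilde{x}_i = x^*_i$, and for $t \in T^H$ we have $\tilde{y}_t = y^*_t$, so these clusters contribute nothing to $SW^* - SW(\vec{\tilde{p}}, \vec{\tilde{x}}, \vec{\tilde{y}})$. For the remaining $(B^L, T^L)$ cluster, convexity of $C_t$ gives $C_t(y^*_t) - C_t(\tilde{y}_t) \geq c_t(\tilde{y}_t)(y^*_t - \tilde{y}_t)$, and because $c_t(\tilde{y}_t) = r_i(\vec{\tilde{y}})$ for every good $t$ that buyer $i$ uses in $\vec{\tilde{y}}$, this rearranges into the per-buyer form
$$SW^* - SW(\vec{\tilde{p}}, \vec{\tilde{x}}, \vec{\tilde{y}}) \;\leq\; \sum_{i \in B^L}\Bigl[\,\int_{\tilde{x}_i}^{x^*_i}\lambda_i(x)\,dx \;-\; r_i(\vec{\tilde{y}})\,(x^*_i - \tilde{x}_i)\,\Bigr].$$

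Next I would establish the key per-buyer estimate. Fix $i \in B^L$ and write $a = \tilde{x}_i$, $b = x^*_i$, $L = \lambda_i(a) = \tilde{p}$, $R = r_i(\vec{\tilde{y}})$, and $g_i(x) := \lambda_i(x)/|\lambda'_i(x)|$. Because $i \in B^L$ must be purchasing from $T^L$ in both solutions, $\lambda_i(x^*_i) \leq \tilde{p}$, which rules out the second branch of the hypothesis; hence $(L - R)/|\lambda'_i(a)| \leq a$, equivalently $g_i(a) \leq La/(L-R)$. Integrating the $\alpha$-SR inequality $g'_i \leq \alpha$ gives the pointwise envelope $\lambda_i(x) \leq L\bigl(1 + \alpha(x-a)/g_i(a)\bigr)^{-1/\alpha}$ for $x \geq a$ (with the convention that $\alpha = 0$ yields $L\,e^{-(x-a)/g_i(a)}$). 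Since $\lambda_i(b) \geq R$ (as $\lambda_i(x^*_i) = c_t(y^*_t) \geq c_t(\tilde{y}_t) = R$ for any $t$ used by $i$ in $\vec{y^*}$) and the integrand is non-negative on $[a,b]$, extending the integration up to the point $b_{\mathrm{ext}}$ where the envelope itself equals $R$ only enlarges it. Substituting $g_i(a) = La/(L-R)$ into the resulting explicit integral reduces the desired bound to the elementary inequality $r^{-\alpha} \geq 1 + \alpha(1-r)$ for $r \in (0, 1]$, $\alpha \in [0, 1)$, which is immediate from convexity of $r \mapsto r^{-\alpha}$. This yields
$$\int_a^b (\lambda_i(x) - R)\,dx \;\leq\; \frac{(L - R)\,a}{1 - \alpha}.$$

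To close the argument I would relate the right-hand side to $\pi(\vec{\tilde{p}}, \vec{\tilde{x}}, \vec{\tilde{y}})$ using the same accounting as in Claim~\ref{clm_proxy}. Since $C_t$ is convex with $C_t(0) = 0$, we have $C_t(\tilde{y}_t) \leq c_t(\tilde{y}_t)\,\tilde{y}_t$; combined with $\sum_t \tilde{p}_t \tilde{y}_t = \sum_i \lambda_i(\tilde{x}_i)\,\tilde{x}_i$ and $\sum_t c_t(\tilde{y}_t)\,\tilde{y}_t = \sum_i r_i(\vec{\tilde{y}})\,\tilde{x}_i$, this gives $\pi(\vec{\tilde{p}}, \vec{\tilde{x}}, \vec{\tilde{y}}) \geq \sum_i \bigl(\lambda_i(\tilde{x}_i) - r_i(\vec{\tilde{y}})\bigr)\,\tilde{x}_i$. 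Summing the per-buyer estimate over $i \in B^L$ and comparing with this lower bound produces $SW^* - SW(\vec{\tilde{p}}, \vec{\tilde{x}}, \vec{\tilde{y}}) \leq \pi(\vec{\tilde{p}}, \vec{\tilde{x}}, \vec{\tilde{y}})/(1-\alpha)$, exactly as claimed.

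The main obstacle is the per-buyer integration in the second step: extracting the precise constant $1/(1-\alpha)$ requires integrating against the full $\alpha$-SR envelope rather than settling for tangent-line or first-order estimates, and then using the hypothesis $g_i(a) \leq La/(L-R)$ with just enough slack so that the reduction to $r^{-\alpha} \geq 1 + \alpha(1-r)$ is tight at $r = 1$ and no extra constant is lost. The $\alpha = 0$ boundary also has to be handled as a separate limit via the exponential envelope.
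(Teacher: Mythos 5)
Your proposal is correct and follows essentially the same route as the paper: decompose the welfare gap via Lemma~\ref{lem_proxysimilartoopt}, lower-bound the cost change by marginal costs (the paper's Lemma~\ref{lem_costdiff}), obtain a per-buyer integral bound of $\frac{(L-R)\tilde{x}_i}{1-\alpha}$, and sum against income minus cost to reach $\frac{1}{1-\alpha}\pi$. The one stylistic difference is that you derive the per-buyer integral estimate by directly integrating the $\alpha$-SR envelope and reducing to $r^{-\alpha}\geq 1+\alpha(1-r)$, whereas the paper invokes Lemma~\ref{lem_asrconst} and the reusable integral Lemma~\ref{lem_integral} applied to the shifted function $\lambda_i - r_i(\vec{\tilde{y}})$ — your calculation amounts to a self-contained proof of that specialization.
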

\begin{proof}

We need to prove an upper bound on $SW^* - SW(\vec{\tilde{p}}, \vec{\tilde{x}}, \vec{\tilde{y}}) = \sum_{i \in B}(u_i(x^*_i) - u_i(\tilde{x}_i)) - (C(\vec{y^*}) - C(\vec{\tilde{y}}))$. Applying Lemma~\ref{lem_costdiff}, we get that $(C(\vec{y^*}) - C(\vec{\tilde{y}})) \geq \sum_{i \in B}r_i(\vec{\tilde{y}})(x^*_i - \tilde{x}_i)$. Writing the utility as an integral, the required welfare can be simplified as
$$SW_2 \leq \sum_{i \in B}[\int_{\tilde{x}_i}^{x^*_i}\lambda_i(x)dx - r_i(\vec{\tilde{y}})(x^*_i - \tilde{x}_i)] = \sum_{i \in B}\int_{\tilde{x}_i}^{x^*_i}(\lambda_i(x) - r_i(\vec{\tilde{y}}))dx.$$

Now, consider the function $f_i(x) = \lambda_i(x) - r_i(\vec{\tilde{y}})$. The second term is constant w.r.t $x$ and $\lambda_i$ is $\alpha$-SR and so from Lemma~\ref{lem_asrconst}, $f_i$ is also $\alpha$-SR for all $i \in B$. Next, we can use our crucial Lemma~\ref{lem_integral} to bound the integral as follows $\int_{\tilde{x}_i}^{x^*_i}f_i(x)dx \leq \frac{1}{1-\alpha}(\frac{f_i(\tilde{x}_i)}{|f'_i(\tilde{x}_i)|})(f_i(\tilde{x}_i) - f(x^*_i))$. Summing this up, one obtains,
\begin{align*}
SW_2 & \leq \frac{1}{1-\alpha} \sum_{i \in B}\frac{\lambda_i(\tilde{x}_i) - r_i(\vec{\tilde{y}})}{|\lambda'_i(\tilde{x}_i)|}(\lambda_i(\tilde{x}_i) - \lambda_i(x^*_i)).
\end{align*}

Let $B^L$ and $B^H$ be as in Lemma~\ref{lem_proxysimilartoopt}. For $i\in B^H$, we know that $\lambda_i(\tilde{x}_i) = \tilde{p}_i = p^*_i = \lambda_i(x^*_i)$, so the above expression equals 0. From the claim statement, we know that for $i\in B^L$, the quantity $\frac{\lambda_i(\tilde{x}_i) - r_i(\vec{\tilde{y}})}{|\lambda'_i(\tilde{x}_i)|}$ must be bounded from above by $\tilde{x}_i$. Thus, we have that,

\begin{align*}
SW_2 & \leq \frac{1}{1-\alpha} \sum_{i \in B^L} \tilde{x}_i(\lambda_i(\tilde{x}_i) - r_i(\vec{y^*})).
\end{align*}

Now, we can complete the proof using the fact that for all $i$, $\lambda_i(\tilde{x}_i)\tilde{x}_i$ is the total profit due to that buyer.

$$SW_2 \leq \frac{1}{1-\alpha} \sum_{t \in T} [\tilde{y}_t \tilde{p}_t - c_t(\tilde{y}_t)\tilde{y}_t] \leq \frac{1}{1-\alpha}[\sum_{t \in T} \tilde{y}_t \tilde{p}_t - C(\vec{\tilde{y}})] = \frac{1}{1-\alpha}\pi(\vec{\tilde{p}}, \vec{\tilde{x}}, \vec{\tilde{y}}).$$

%

\end{proof}

We have shown that if we can conjure up a nice thresholded price vector that satisfies certain properties, we can bound the social welfare of the optimum solution in terms of the profit obtained by the above solution. We now show that the price vector described in Section~\ref{sec:unitdemandresult} satisfies all of our requirements.


\begin{claim}
\label{clm_constructive}
We can compute in poly-time a solution $((\tilde{p})_{t \in T}, (\tilde{x})_{i \in B}, (\tilde{y})_{t \in T})$ satisfying the following two conditions,
\begin{enumerate}
\item $\exists$ a price $\tilde{p}$ such that for every $t \in T$,  $\tilde{p}_t = \max(p^*_t, \tilde{p})$.

\item For every $i \in B$, either $\frac{\lambda_i(\tilde{x}_i) - r_i(\vec{\tilde{y}})}{|\lambda'_i(\tilde{x}_i)|} \leq x_i$ or $\lambda_i(x^*_i) \geq \tilde{p}$.
\end{enumerate}
\end{claim}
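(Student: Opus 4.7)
The plan is to simply verify that the two-line pricing rule from Section \ref{sec:unitdemandresult} --- compute $(\vec{p^*},\vec{x^*},\vec{y^*})$ via the convex program, then set $\tilde{p}_t = \max(p^*_t,\tilde{p})$ with $\tilde{p} = \lambda^{max}(1-\alpha)^{1/\alpha}$ --- produces a solution satisfying the two listed conditions. Condition $(1)$ is immediate from the definition of $\tilde{p}_t$, and the welfare-maximizing solution is computable in polynomial time via the convex program \eqref{cp_welfaremaximization}, so the real content lies in establishing condition $(2)$.

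To verify condition $(2)$, I would invoke Lemma \ref{lem_proxysimilartoopt} to split the buyers into the sets $B^H$ and $B^L$ described there. For any $i\in B^H$, the good $t$ that $i$ uses is priced at $\tilde{p}_t=p^*_t>\tilde{p}$ and $\tilde{x}_i=x^*_i$, hence $\lambda_i(x^*_i)=p^*_t>\tilde{p}$ and the second disjunct of $(2)$ is satisfied. The substantive work is therefore to show the first disjunct for every $i\in B^L$; for such a buyer, the minimally priced desired good is priced exactly at $\tilde{p}$, so $\lambda_i(\tilde{x}_i)=\tilde{p}$. Since $r_i(\vec{\tilde{y}})\ge 0$, it suffices to prove the stronger inequality
$$g(\tilde{x}_i)\;:=\;\frac{\lambda_i(\tilde{x}_i)}{|\lambda'_i(\tilde{x}_i)|}\;\le\;\tilde{x}_i.$$

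The main obstacle is this last inequality, which is where the precise form of $\tilde{p}$ matters. My approach is a contradiction argument that integrates the identity $-\lambda'_i(x)/\lambda_i(x)=1/g(x)$ and uses $\alpha$-strong regularity in the form $g(x)\ge g(\tilde{x}_i)-\alpha(\tilde{x}_i-x)$ for $x\in[0,\tilde{x}_i]$. If $g(\tilde{x}_i)>\tilde{x}_i$, then $g(x)>(1-\alpha)\tilde{x}_i+\alpha x>0$ throughout $[0,\tilde{x}_i]$, and integrating gives
$$\log\!\left(\frac{\lambda^{max}}{\tilde{p}}\right)=\int_0^{\tilde{x}_i}\!\frac{dx}{g(x)}<\int_0^{\tilde{x}_i}\!\frac{dx}{(1-\alpha)\tilde{x}_i+\alpha x}=\frac{1}{\alpha}\log\frac{1}{1-\alpha}.$$
But by the very definition of $\tilde{p}=\lambda^{max}(1-\alpha)^{1/\alpha}$, the left-hand side equals $\frac{1}{\alpha}\log\frac{1}{1-\alpha}$ exactly, producing the contradiction. (The $\alpha=0$ case is handled analogously using $\tilde{p}=\lambda^{max}/e$, where $g$ is non-increasing, yielding $\int_0^{\tilde{x}_i}dx/g(x)<1=\log e$.)

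The only other subtlety is justifying the integration step rigorously (requiring $\lambda_i$ to be smooth and $\lambda'_i$ to be nonzero on $(0,\tilde{x}_i)$), but these hold under the continuously-differentiable demand assumption made in Section \ref{sec:prelim}; boundary pathologies where $\lambda'_i$ may vanish or blow up at endpoints can be absorbed into a standard limiting argument. The threshold $\tilde{p}=\lambda^{max}(1-\alpha)^{1/\alpha}$ is thus \emph{tight}: it is precisely the largest value for which the integration argument goes through uniformly over all $\alpha$-SR demand functions with peak $\lambda^{max}$, the extremal case being when equality is achieved in the $\alpha$-SR defining inequality.
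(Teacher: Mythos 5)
Your proposal is correct and takes essentially the same route as the paper: the contradiction/integration argument you carry out is precisely the paper's Lemma~\ref{lem_subclaim_mhr} and Corollary~\ref{corr_crucial}, just inlined rather than cited. The one small (and valid) simplification is that you apply the hazard-rate bound directly to $\lambda_i$, using $r_i(\vec{\tilde{y}})\ge 0$ to reduce to the stronger inequality $\lambda_i(\tilde{x}_i)/|\lambda'_i(\tilde{x}_i)|\le\tilde{x}_i$, whereas the paper first forms the shifted function $f_i(x)=\lambda_i(x)-r_i(\vec{\tilde{y}})$, checks that $f_i(0)\ge(\tfrac{1}{1-\alpha})^{1/\alpha}f_i(\tilde{x}_i)$ still holds, and applies Corollary~\ref{corr_crucial} to $f_i$.
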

\begin{proof}
Suppose that all the users have $\alpha$-strongly regular utilities. Then, set $\tilde{p} = (1-\alpha)^{\frac{1}{\alpha}}\lambda^{max}$. Construct the price vector $\vec{\tilde{p}}$ as described above. Moreover, let $\vec{\tilde{x}}$ and $\vec{\tilde{y}}$ be the corresponding buyer demand vector and (feasible) allocation vector at this price. All that is remaining is to prove that this solution satisfies the second condition mentioned in the claim.

Consider any $i \in B$: the min-priced good available to this buyer is priced at either $\tilde{p}$ or $q_i(\vec{p^*})$. In the $\tilde{p}$ case, we know that the user has a demand of $\tilde{x}_i$ units satisfying $\lambda_i(\tilde{x}_i) = \tilde{p} = \lambda^{max} (1-\alpha)^{\frac{1}{\alpha}}.$ By definition $\lambda^{max} = \lambda_i(0)$ for all $i$. Therefore, for every user facing a price of $\tilde{p}$ we have that
$$\lambda_i(0) = (\frac{1}{1-\alpha})^{1/\alpha}\lambda_i(\tilde{x}_i).$$

We can loosen the above inequality as follows: $\lambda_i(0) - r_i(\vec{\tilde{y}}) \geq (\frac{1}{1-\alpha})^{1/\alpha}(\lambda_i(\tilde{x}_i) - r_i(\vec{\tilde{y}})).$ Now take $f_i(x) = \lambda_i(x) - r_i(\vec{\tilde{y}})$ and apply corollary~\ref{corr_crucial} with respect to $\tilde{x}_i$. We get that $\frac{f_i(\tilde{x}_i)}{|f'_i(\tilde{x}_i)|} \leq \tilde{x}_i$, which is the required criterion.

For the second case, the proof follows trivially because the min-priced good $t$ available to the user $i$ has a price of $p^*_t > \tilde{p}$. This is the same as the min-priced good available to this user in the optimum solution and therefore $\lambda_i(\tilde{x}_i) = \lambda_i(x^*_i) = p^*_t \geq \tilde{p}$.

\end{proof}

\subsection*{Final Leg: Combining all of the Claims}
The rest of the proof follows from a direct application of our structural lemmas. Primarily, consider the pricing algorithm $Alg$, that chooses a pricing vector as defined in Claim~\ref{clm_constructive}. Then, from Claims~\ref{clm_proxy} and~\ref{clm_proxy2}, we get that $SW(Alg) \leq \left((2\frac{1}{1-\alpha})^{\frac{1}{\alpha}} -1 \right) \pi(Alg)$ and $SW^* - SW(Alg) \leq \frac{1}{1-\alpha}\pi(Alg)$. Applying Lemma~\ref{lem_bicritimprov} yields the required bicriteria result.

Finally, we claim that $\zeta = \Theta(\frac{1}{1-\alpha})$. To see why, consider the function $\frac{2(\frac{1}{1-\alpha})^{\frac{1}{\alpha}}}{\frac{1}{1-\alpha}}$. Upon differentiation, we infer that this function is non-increasing and its maximum value of $2e$ is obtained as $x \rightarrow 0$.

\end{proof}

\section{Appendix: Proof of Theorem \ref{thm:multi} for Multi-Minded Buyers}\label{app:multi}
\noindent{\bf Proof Sketch:} The proof for the general case with multi-minded buyers is quite involved and so we begin by providing a high level overview of the various ingredients that combine to form the proof. The first step involves defining a benchmark solution $(\vec{x^{b}}, \vec{y^{b}})$ whose social welfare, as in the unit-demand case, is at most a $\frac{2-\alpha}{1-\alpha}$ factor away from $SW^*$. In the unit-demand case, we were able to identify a suitable price vector to actually implement this benchmark solution and extract a good fraction of its welfare as revenue. However, this may no longer be possible for the general case as an analogous pricing solution would involve personalized payments for each buyer and thus may not be realizable using simple item pricing. In other words, the benchmark solution is only realizable by charging different prices to different buyers for the same good, instead of item pricing in which a good has a single price for all buyers. Charging such discriminatory prices would change the problem completely and give the seller much more power (for example, maximizing revenue becomes almost trivial, instead of NP-Complete). Our goal in this proof is therefore to compute item prices which approximate this benchmark solution in both revenue and welfare.

The rest of the proof does not really use the benchmark solution directly; instead our goal will be to use $(\vec{x^b}, \vec{y^b})$ as a guide to design a sequence of (item) pricing solutions that together behave like the benchmark solution, and return the `best approximation' among these solutions. Towards this end, we introduce the notion of an \textit{Augmented Walrasian Equilibrium} with dummy price $p$, which is a social welfare maximizing solution for an augmented problem, consisting of the original instance plus $|T|$ dummy buyers having constant valuation $p$, one for each good $t \in T$. One way to think about this is as a Walrasian equilibrium with a reserve price $p$ for each good, so that the prices are required to be above $p$. We show that these augmented equilibria have several desirable properties once the dummy buyers are removed from the final solution.

The starting point for our algorithm is the identification of a carefully chosen dummy price (see Claim~\ref{clm_cruc_appbenchmark})
so that the ensuing augmented equilibrium (minus the dummy buyers) captures the welfare of the benchmark solution to a large extent. Of course, the solution may not extract the required profit, and to correct this (Claim~\ref{clm_diffinwelf}), we show that simply scaling the dummy price by a factor of two yields a solution whose welfare still approximates the original solution. The repeated scaling gives us a sequence of $\log(\Delta)$ different pricing solutions. Finally, we show that at least one solution in this sequence has profit and welfare which both approximate that of the benchmark solution.

Both of the above claims make use of a charging argument that may be of independent interest, wherein the lost welfare (due to a buyer facing high prices compared to her personalized payment in the benchmark solution) is charged to a special class of goods, referred to as saturated goods, which always yield high revenue.

%
%

\begin{proof}
\subsubsection*{Benchmark Solution}
Recall that $(\vec{p^*}, \vec{x^*}, \vec{y^*})$ denotes the pricing solution achieving the optimum social welfare. Define $\tilde{p} = \lambda^{max} (1-\alpha)^{\frac{1}{\alpha}}$. Now, the benchmark demand vector $\vec{x^b}$ is defined as follows: for each buyer $i$, $x^b_i = \min \{ x^*_i, \lambda^{-1}(\tilde{p})\}$. The allocation vector $\vec{y^b}$ is simply a scaled-down version of the optimum allocation vector, i.e., suppose that $x^*_i(S)$ is amount of bundle $S$ consumed by buyer type $i$ in $(\vec{p^*}, \vec{x^*}, \vec{y^*})$. Then, the amount of bundle $S$ consumed by $i$ in the benchmark solution $x^b_i(S) := \frac{x^b_i}{x^*_i}x^*_i(S)$. This consumption pattern gives rise to the allocation vector $\vec{y^b}$ on the goods. Let $SW^b$ denote the social welfare of this benchmark solution.

Does there exist a pricing solution $\vec{p^b}$ with a single price per good that implements the above benchmark solution? Unfortunately, one can easily define instances where this is not the case. That said, a trivial way to implement the benchmark solution is via a personalized payment scheme, i.e., each buyer $i$ is charged a price of $\lambda_i(x^b_i)$ that is completely independent of other buyers. For convenience, we define $\pi^b = \sum_{i \in B}\lambda_i(x^b_i)x^b_i - C(\vec{y^b})$ to be the profit due to the personalized payments. We begin by highlighting an obvious property of the benchmark solution that will come in handy later on.

\begin{proposition}
\label{prop_bench}
The benchmark solution vectors are dominated by the optimum solution vectors, i.e., $\vec{x^*} \geq \vec{x^b}$ and $\vec{y^*} \geq \vec{y^b}$. Therefore for every good $t$, $c_t(y^b_t) \leq c_t(y^*_t)$.
\end{proposition}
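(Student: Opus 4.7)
The plan is to verify the three inequalities in the order they are stated, each of which follows almost immediately from the defining formulas for the benchmark solution. First I would establish $\vec{x^*} \geq \vec{x^b}$: by construction $x^b_i = \min\{x^*_i, \lambda_i^{-1}(\tilde{p})\}$, so trivially $x^b_i \leq x^*_i$ for every buyer type $i \in B$.

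Next I would lift this pointwise inequality from the buyer-demand level to the goods-allocation level. Recall that $x^b_i(S) = \frac{x^b_i}{x^*_i} \, x^*_i(S)$ for each desired bundle $S \in B_i$ (with the convention that $x^b_i(S) = 0$ whenever $x^*_i = 0$, which is consistent since then $x^b_i = 0$ as well). Since $x^b_i / x^*_i \in [0,1]$, we get $x^b_i(S) \leq x^*_i(S)$ for every $i$ and every $S \in B_i$. The allocation $y^b_t$ of good $t$ is obtained by aggregating these bundle consumptions across all buyer types and all bundles containing $t$, in exactly the same way as $y^*_t$ is obtained from $x^*_i(S)$. A term-by-term comparison of these aggregations then gives $y^b_t \leq y^*_t$ for every $t \in T$, i.e., $\vec{y^*} \geq \vec{y^b}$.

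Finally, I would invoke the structural assumption on production costs stated in Section~\ref{sec:prelim}: each marginal cost $c_t$ is non-decreasing (indeed $C_t$ is doubly convex with $c_t(0) = 0$). Combined with $y^b_t \leq y^*_t$ this yields $c_t(y^b_t) \leq c_t(y^*_t)$ for every good $t \in T$, which is the last part of the proposition.

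There is no real obstacle in this argument; the only subtlety is dealing with buyer types for which $x^*_i = 0$, where the scaling ratio $x^b_i / x^*_i$ is formally undefined, and this is handled by interpreting the defining formula as $x^b_i(S) = 0$ in that case (which is forced by $x^b_i \leq x^*_i = 0$ and the free disposal property).
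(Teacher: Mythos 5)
Your proof is correct and is precisely the direct verification from the definitions that the paper treats as obvious; the paper simply states this proposition with no accompanying proof (prefacing it with ``an obvious property of the benchmark solution''), so there is no distinct ``paper's approach'' to compare against. Your handling of the $x^*_i = 0$ edge case and the explicit appeal to $c_t$ being non-decreasing are the right things to spell out.
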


Despite its lack of implementability, the benchmark solution's appeal lies in the fact that both $SW^b$ and $\pi^b$ approximate $SW^*$ up to a $\Theta(\frac{1}{1-\alpha})$ factor, which we formally state below. Observe that this is still a non-trivial claim; even with personalized payments, it is not clear if we can ever approximate the optimum social welfare via profit. For instance, for the max-welfare solution $(\vec{x^*}, \vec{y^*})$, a personalized charging scheme achieves the same profit as $\pi(\vec{p^*}, \vec{x^*}, \vec{y^*})$, which may be quite poor. Our goal however, will be to use $(\vec{x^b}, \vec{y^b})$ as a guide to design a sequence of (item) pricing solutions that together behave like the benchmark solution, and return the `best approximation' among these solutions.


\begin{claim}
(1) $SW^b \leq \left[2(\frac{1}{1-\alpha})^\frac{1}{\alpha} -1 \right] \pi^b$~~~~~~~~~~~
(2) $SW^* - SW^b \leq \frac{1}{1-\alpha}\pi^b$.
\end{claim}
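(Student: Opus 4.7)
The plan is to follow closely the structure of the unit-demand proofs in Claims~\ref{clm_proxy} and~\ref{clm_proxy2}, with the personalized benchmark $(\vec{x^b}, \vec{y^b})$ playing the role that the thresholded pricing solution $(\vec{\tilde{p}}, \vec{\tilde{x}}, \vec{\tilde{y}})$ played there. The per-unit price seen by buyer $i$ in the benchmark is simply $\lambda_i(x^b_i)$, and by construction $\lambda_i(x^b_i) \geq \tilde{p}$ for all $i$. Throughout, I will denote $R := \sum_i \lambda_i(x^b_i) x^b_i$, so that $\pi^b = R - C(\vec{y^b})$.

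For part (1), I would begin by using concavity of $u_i$ to bound $u_i(x^b_i) \leq \lambda^{\max} x^b_i$, then rewrite this as $\tfrac{\lambda^{\max}}{\tilde{p}} \tilde{p}\, x^b_i$, and finally use $\tilde{p} \leq \lambda_i(x^b_i)$ to obtain $SW^b \leq \tfrac{\lambda^{\max}}{\tilde{p}} R - C(\vec{y^b})$. Mimicking the fractional argument at the end of Claim~\ref{clm_proxy}, the required ratio $SW^b/\pi^b \leq 2(\frac{1}{1-\alpha})^{1/\alpha}-1$ reduces to showing $2C(\vec{y^b}) \leq R$. The doubly-convex assumption supplies $C(\vec{y^b}) \leq \tfrac{1}{2}\sum_t c_t(y^b_t) y^b_t$; monotonicity of $c_t$ together with Proposition~\ref{prop_bench} ($y^b_t \leq y^*_t$) gives $c_t(y^b_t) \leq c_t(y^*_t) = p^*_t$; finally, swapping summation over goods and buyer-bundles and using the optimum payment identity $\sum_{t \in S} p^*_t = \lambda_i(x^*_i)$ for each bundle $S$ consumed by $i$, combined with $\lambda_i(x^*_i) \leq \lambda_i(x^b_i)$ (monotonicity of $\lambda_i$ and $x^b_i \leq x^*_i$), telescopes to $\sum_t c_t(y^b_t) y^b_t \leq \sum_i \lambda_i(x^*_i) x^b_i \leq R$.

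For part (2), the first hurdle is an analog of Lemma~\ref{lem_costdiff} for multi-minded buyers, and this is where I expect the main obstacle to lie. I propose defining the \emph{amortized marginal bundle cost} of buyer $i$ as $r^b_i := \frac{1}{x^b_i}\sum_S x^b_i(S) \sum_{t\in S} c_t(y^b_t)$. Because the benchmark consumption $x^b_i(S) = (x^b_i/x^*_i)\, x^*_i(S)$ is a uniform rescaling of the optimum, the convexity bound $C_t(y^*_t) - C_t(y^b_t) \geq c_t(y^b_t)(y^*_t - y^b_t)$ summed over $t$ collapses, after swapping the sums over goods and bundles, to $C(\vec{y^*}) - C(\vec{y^b}) \geq \sum_i r^b_i(x^*_i - x^b_i)$. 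Plugging this into $SW^* - SW^b$ yields the familiar integral $\sum_i \int_{x^b_i}^{x^*_i}(\lambda_i(x) - r^b_i)\, dx$. Applying Lemma~\ref{lem_integral} to $f_i(x) := \lambda_i(x) - r^b_i$ (which is $\alpha$-SR by Lemma~\ref{lem_asrconst}) bounds each such integral by $\tfrac{1}{1-\alpha}\, x^b_i \bigl(\lambda_i(x^b_i) - \lambda_i(x^*_i)\bigr)$, \emph{provided} that $\frac{\lambda_i(x^b_i) - r^b_i}{|\lambda'_i(x^b_i)|} \leq x^b_i$. This last condition I would verify exactly as in Claim~\ref{clm_constructive}: for case-1 buyers ($x^b_i = x^*_i$) the integral vanishes; for case-2 buyers the uniform-peak assumption $\lambda_i(0)=\lambda^{\max}$ together with the choice $\tilde{p} = \lambda^{\max}(1-\alpha)^{1/\alpha}$ gives $\lambda_i(0) = (\tfrac{1}{1-\alpha})^{1/\alpha} \lambda_i(x^b_i)$, which upon subtracting the non-negative $r^b_i$ from both sides yields the same multiplicative relation between $\lambda_i(0) - r^b_i$ and $\lambda_i(x^b_i) - r^b_i$, and then Corollary~\ref{corr_crucial} applies to $f_i$.

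Putting the pieces together, part (2) reduces to showing $\sum_i x^b_i \bigl(\lambda_i(x^b_i) - \lambda_i(x^*_i)\bigr) \leq \pi^b$. Expanding $\pi^b = R - C(\vec{y^b})$, this is equivalent to $C(\vec{y^b}) \leq \sum_i \lambda_i(x^*_i) x^b_i$; but this last inequality is exactly what fell out of the bookkeeping in part (1), via $C_t(y^b_t) \leq c_t(y^b_t) y^b_t \leq c_t(y^*_t) y^b_t = p^*_t y^b_t$ followed by the payment identity $\sum_{t\in S} p^*_t = \lambda_i(x^*_i)$. The main technical obstacle throughout is therefore not any single inequality but the bookkeeping: keeping track of which buyer purchases which bundle in the optimum, and translating item-level statements about $c_t(y^b_t)$ and $p^*_t$ into per-buyer statements through the ``consumption matrix'' $x^*_i(S)$. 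Once the averaged cost $r^b_i$ is in place, every subsequent step is a direct lift of its unit-demand predecessor.
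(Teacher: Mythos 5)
Your proof is correct, and it takes essentially the route the paper indicates: the authors explicitly omit the proof of this claim, remarking only that it is ``extremely similar to the proofs of Claims~\ref{clm_proxy} and~\ref{clm_proxy2},'' and your write-up is a faithful multi-minded lift of those two unit-demand arguments. The one genuinely new ingredient you supply --- and the step the paper leaves entirely implicit --- is the amortized marginal bundle cost
$r^b_i := \frac{1}{x^b_i}\sum_{S} x^b_i(S)\sum_{t\in S} c_t(y^b_t)$,
which plays exactly the role of $r_i(\vec{\tilde{y}})$ in Claim~\ref{clm_proxy2}. Your observations that (i) the uniform rescaling $x^b_i(S) = (x^b_i/x^*_i)\,x^*_i(S)$ makes the cost-difference bound collapse to $\sum_i r^b_i(x^*_i - x^b_i)$, (ii) $r^b_i \le \lambda_i(x^*_i) \le \lambda_i(x^b_i)$ (needed both to keep $f_i \ge 0$ so Lemmas~\ref{lem_asrconst} and~\ref{lem_integral} apply and to make the ``subtract $r^b_i$ from both sides'' step valid), and (iii) $C(\vec{y^b}) \le \sum_t p^*_t y^b_t = \sum_i \lambda_i(x^*_i) x^b_i$ via the payment identity on consumed bundles, are all sound and are precisely the bookkeeping one must do to port the unit-demand proof. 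Note that this is genuinely tighter than simply invoking the paper's stated ``approximate version'' (Claim~\ref{clm_aboveandbelowthresh}, part~2), which only yields $\tfrac{1}{1-\alpha}\sum_i\lambda_i(x^b_i)x^b_i$ rather than the claimed $\tfrac{1}{1-\alpha}\pi^b$; your $r^b_i$ is what recovers the cost correction.
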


The proof of the above claim is extremely similar to the proofs of Claims~\ref{clm_proxy} and~\ref{clm_proxy2}, and we do not explicitly prove it here as we do not really use the claim anywhere in the rest of this proof. Instead, we will state and prove approximate versions of the above claim that we will require later. These approximate versions shed additional light on some of the sufficient conditions that a pricing solution must satisfy in order to capture social welfare via profit.

\begin{claim}
\label{clm_aboveandbelowthresh}
\begin{enumerate}
\item Suppose that $(\vec{p}, \vec{x}, \vec{y})$ is a pricing solution satisfying $\lambda_i(x_i) \geq \tilde{p}$ for every $i$, and $p_t \geq c_t(y_t)$ for all $t \in T$. Then,
$$SW(\vec{p}, \vec{x}, \vec{y}) \leq [2(\frac{1}{1-\alpha})^{\frac{1}{\alpha}}-1]\pi(\vec{p}, \vec{x}, \vec{y}).$$

\item Suppose that $\vec{x}$ is a demand vector satisfying $x_i \geq x^b_i$ for every $i$. Then,
$$\sum_{i \in B}\int_{x_i}^{x^*_i}\lambda_i(x)dx \leq \frac{1}{1-\alpha}\sum_{i \in B}\lambda_i(x_i)x_i.$$

\end{enumerate}
\end{claim}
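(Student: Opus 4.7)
Both parts closely mirror the sufficient-condition analyses performed for unit-demand in Claims~\ref{clm_proxy} and~\ref{clm_proxy2}, and my plan is to lift those calculations to the multi-minded hypotheses stated here. The one observation that makes this adaptation essentially painless is an accounting identity: because each unit of demand from buyer type $i$ pays the cheapest bundle price, and that equals $\lambda_i(x_i)$ in the solution, the total revenue satisfies $\sum_{t \in T} p_t y_t = \sum_{i \in B} \lambda_i(x_i) x_i$, exactly as in the unit-demand case. I would open by recording this identity and use it throughout to freely convert between buyer-side and seller-side sums, sidestepping any need to reason about how $x_i$ is distributed across the different bundles in $B_i$.

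For Part 1, I would follow the template of Claim~\ref{clm_proxy}. Concavity of $u_i$ gives $u_i(x_i) \leq \lambda^{max} x_i$, and the hypothesis $\lambda_i(x_i) \geq \tilde{p}$ converts this to $u_i(x_i) \leq (\lambda^{max}/\tilde{p})\,\lambda_i(x_i)\,x_i$. Summing over $i$ and using the accounting identity,
\[ SW(\vec{p},\vec{x},\vec{y}) \leq \frac{\lambda^{max}}{\tilde{p}} \sum_{t \in T} p_t y_t - C(\vec{y}). \]
The remaining two hypotheses then feed in cleanly: double convexity gives $C_t(y_t) \leq \tfrac{1}{2} c_t(y_t) y_t$, and $p_t \geq c_t(y_t)$ upgrades this to $C(\vec{y}) \leq \tfrac{1}{2} \sum_{t \in T} p_t y_t$, which combined with $\sum_t p_t y_t = \pi + C$ forces $C \leq \pi$. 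Substituting into the displayed inequality, the right-hand side collapses to $(2\lambda^{max}/\tilde{p} - 1)\,\pi$, which equals $(2(1/(1-\alpha))^{1/\alpha} - 1)\,\pi$ by the definition of $\tilde{p}$.

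For Part 2, the cost-difference term present in Claim~\ref{clm_proxy2} is absent here, which makes the argument if anything simpler. My plan is to first dispose of buyers with $x_i \geq x_i^*$, whose integrand contributes non-positively and can be discarded from the left-hand side. For the remaining buyers, the definition $x_i^b = \min\{x_i^*, \lambda_i^{-1}(\tilde{p})\}$ combined with the hypothesis $x_i \geq x_i^b$ and the surviving condition $x_i < x_i^*$ forces $x_i^b = \lambda_i^{-1}(\tilde{p})$, and hence $\lambda_i(x_i) \leq \tilde{p}$. This is exactly the input required to invoke Corollary~\ref{corr_crucial} with $f = \lambda_i$ at the point $x_i$, yielding $\lambda_i(x_i)/|\lambda_i'(x_i)| \leq x_i$. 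Lemma~\ref{lem_integral} then bounds the integral by $\tfrac{1}{1-\alpha}\,\tfrac{\lambda_i(x_i)}{|\lambda_i'(x_i)|}\bigl(\lambda_i(x_i) - \lambda_i(x_i^*)\bigr)$, and applying the corollary together with nonnegativity of $\lambda_i(x_i^*)$ gives the per-buyer bound $\tfrac{1}{1-\alpha}\, x_i\, \lambda_i(x_i)$. Summing over $i$ yields the claim. The main potential obstacle is verifying that the step from $x_i \geq x_i^b$ to $\lambda_i(x_i) \leq \tilde{p}$ holds in every regime of the definition of $x_i^b$; the case split above ensures that the only regime in which the integral is not automatically discarded is precisely the one where this implication is available.
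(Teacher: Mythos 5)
Your proof is correct and follows essentially the same route as the paper's: concavity plus the hypothesis $\lambda_i(x_i)\geq\tilde{p}$, the revenue identity $\sum_t p_t y_t = \sum_i\lambda_i(x_i)x_i$, and double convexity in Part 1; and the case split, Corollary~\ref{corr_crucial}, and Lemma~\ref{lem_integral} in Part 2. The only cosmetic departures are that you finish Part 1 by deriving $C(\vec{y})\leq\pi$ and substituting rather than bounding a ratio, and in Part 2 you split on $x_i\geq x_i^*$ instead of $\lambda_i(x_i)>\lambda_i(x_i^*)$, both of which land at the same per-buyer bound.
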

\begin{proof}
\textbf{(Statement 1):}\\
The social welfare of $(\vec{p}, \vec{x}, \vec{y})$ is $\sum_{i \in B}u_i(x_i) - C(\vec{y})$. The function $u_i$ is concave for every $i \in B$, and therefore $u_i(x_i)\leq u'_i(0)\cdot x_i = \lambda^{max}x_i$. So, our first inequality is the following, 

%


$$\sum_{i \in B}u_i(x_i) - C(\vec{y}) \leq \sum_{i \in B}\lambda^{max} x_i - C(\vec{y}) = \frac{\lambda^{max}}{\tilde{p}}\sum_{i \in B} \tilde{p} x_i - C(\vec{y}) = (\frac{1}{1-\alpha})^{\frac{1}{\alpha}}\sum_{i \in B} \tilde{p} x_i - C(\vec{y}).$$

 Now, the total profit that the seller makes at the given solution $\pi(\vec{p}, \vec{x}, \vec{y})$ equals $\sum_{i \in B}\lambda_i(x_i)x_i - C(\vec{y}) \geq \sum_{i \in B}\tilde{p}x_i - C(\vec{y})$. Using this, we get the following upper bound for the ratio of the welfare to profit
\begin{align*}
\frac{\sum_{i \in B}u_i(x_i) - C(\vec{y})}{\pi(\vec{p}, \vec{x}, \vec{y})} & \leq \frac{(\frac{1}{1-\alpha})^{\frac{1}{\alpha}}\sum_{i \in B}\lambda_i(x_i) x_i - C(\vec{y})}{\sum_{i \in B}\lambda_i(x_i)x_i - C(\vec{y})}\\
& \leq \frac{(\frac{1}{1-\alpha})^{\frac{1}{\alpha}}\sum_{i \in B}\lambda_i(x_i) x_i - \frac{1}{2}\sum_{i \in B}\lambda_i(x_i)x_i}{\sum_{i \in B}\lambda_i(x_i) x_i - \frac{1}{2}\sum_{i \in B}\lambda_i(x_i)x_i}\\
& = 2(\frac{1}{1-\alpha})^{\frac{1}{\alpha}} - 1.
\end{align*}

The second inequality above stems from Lemma~\ref{lem_app_costintermlambda} in the Appendix, which states that for any feasible pricing solution, $C(\vec{y}) \leq \frac{1}{2}\sum_{i \in B}\lambda_i(x_i)x_i$. This completes the proof of the first statement.

\textbf{Statement 2:}

We need to prove an upper bound on the quantity $\sum_{i \in B}\int_{x_i}^{x^*_i}\lambda_i(x)dx$ given the condition that $x_i \geq x^b_i$ for all $i \in B$. Since $\lambda_i$ is an $\alpha$-strongly regular function for all $i$, we can apply our crucial Lemma~\ref{lem_integral} to bound the integral as follows $\int_{x_i}^{x^*_i}\lambda_i(x)dx \leq \frac{1}{1-\alpha}(\frac{\lambda_i(x_i)}{|\lambda'_i(x_i)|})(\lambda_i(x_i) - \lambda(x^*_i))$. Summming this up, one obtains,
\begin{align*}
\sum_{i \in B}\int_{x_i}^{x^*_i}\lambda_i(x)dx & \leq \frac{1}{1-\alpha} \sum_{i \in B'}\frac{\lambda_i(x_i)}{|\lambda'_i(x_i)|}(\lambda_i(x_i) - \lambda_i(x^*_i)) \\
& \leq \frac{1}{1-\alpha} \sum_{i \in B'} x_i\lambda_i(x_i).
\end{align*}

In the above expression, $B'$ is the set of buyers for whom $\lambda_i(x_i) > \lambda_i(x^*_i)$. For the remaining buyers, the integral is not positive and so we can safely ignore these terms. Further, since $x_i \geq x^b_i$, and $\lambda_i(x_i) > \lambda_i(x^*_i)$ for all $i\in B'$, it must be that $\lambda_i(x^b_i)>\lambda_i(x^*_i)$, and so by definition of the benchmark solution, we have that $\lambda(x^b_i)=\tilde{p}$. Thus, it is not hard to see that for every buyer $i \in B'$, $\lambda_i(x_i) \leq \lambda_i(x_i^b) = \tilde{p} = \lambda^{max}(1-\alpha)^{\frac{1}{\alpha}}$, and so, as per Corollary~\ref{corr_crucial}, $\lambda_i(x_i)/|\lambda'_i(x_i)| \leq x_i$. This completes the proof.
\end{proof}

\subsubsection*{Augmented Walrasian Equilibrium with Dummy Prices}
The most important notion in this proof is that of an augmented Walrasian equilibrium. At a high level, the augmented equilibrium is a useful tool that allows us to approximate `personalized payment'-based solutions via item pricing as long as the total price charged to different buyers is correlated. For instance, in the benchmark solution, buyers are either charged $\tilde{p}$ or $\lambda_i(x^*_i)$.

\begin{definition}
Given an instance $\mathcal{I} = (B,T)$, an augmented Walrasian equilibrium with dummy price $p$ is the social welfare maximizing allocation for an instance $\mathcal{I^+} = (B \cup B', T)$ where $B,T$ are as defined previously and $B'$ consists of $|T|$ buyer types, one for each $t \in T$, such that the buyer $i_t$ corresponding to good $t$ has only good $t$ in her desired set of bundles (exactly one bundle consisting of one item). Moreover, this buyer has a constant valuation $\lambda_{i_t}(x) = p$ for $x \leq L$, where $L$ is some sufficiently large population, and $\lambda_{i_t}(x) = 0$ for $x > L$.
\end{definition}

One can imagine that each new buyer type has a virtually infinite population of buyers all of whom desire one good and have the same valuation. Given $\mathcal{I}, p$, an augmented Walrasian equilibrium with this dummy price can be computed efficiently using the same convex program as before.

\noindent \textbf{Pricing Solutions Obtained via Augmented Equilibria} Given a dummy price $p$, the augmented equilibrium computation gives us a social welfare maximizing demand and allocation vector $(\vec{x^p}, \vec{y^p})$ for the corresponding instance $\mathcal{I^+}$. However, we are only interested in valid pricing solutions for our original instance $\mathcal{I}$. We now provide a simple transformation to convert $(\vec{x^p}, \vec{y^p})$ into a pricing solution $(\vec{p}, \vec{x}, \vec{y})$ for the original instance, which we will take for granted in the rest of this work. Essentially, such a pricing solution is obtained by taking the augmented Walrasian equilibrium, pricing each good at its marginal cost, and then removing all consumption due
to dummy buyers. Formally, define $p_t = c_t(y_t^p)$ for all $t \in T$, and for $i \in B$, $x_i = x_i^p$. Finally, let these buyers consume the same bundles as in the augmented equilibrium, i.e., for every $S \in B_i$, the amount of this bundle consumed by the buyer in $(\vec{p}, \vec{x}, \vec{y})$, $x_i(S)$ is the same as the amount consumed by this buyer in the augmented equilibrium, $x_i^p(S)$.

It is evident from the definition that $(\vec{p}, \vec{x}, \vec{y})$ is a valid pricing solution for our original instance $\mathcal{I}$. We first highlight some simple properties obeyed by such a pricing solution irrespective of the exact dummy price and in the process, gain some understanding of the structure of these solutions.

\begin{lemma}
\label{lem_propaugmented}
Given an instance and a dummy price $p^d$, let $(\vec{p}, \vec{x}, \vec{y})$ denote the pricing solution for the original instance obtained via the augmented Walrasian equilibrium with dummy price $p^d$. Then,

\begin{enumerate}
\item For every good $t \in T$, we have $p_t=\max\{p^d,c_t(y_t)\}$.

\item The profit due to the solution $\pi(\vec{p}, \vec{x}, \vec{y}) \geq C(\vec{y})$.
\end{enumerate}
\end{lemma}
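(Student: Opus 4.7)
My plan is to prove the two parts separately, using the first-order (welfare-maximization) conditions on the dummy buyers for Part 1 and the doubly convex structure of the cost functions for Part 2.

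For Part 1, the key observation is that the construction defines $p_t = c_t(y_t^p)$, where $y_t^p$ is the total consumption of good $t$ in the augmented equilibrium. Since the dummy buyer $i_t$ only desires good $t$, her consumption $x_{i_t}^p$ contributes exclusively to good $t$, so $y_t^p = y_t + x_{i_t}^p \geq y_t$. I will split into two cases based on whether the dummy buyer consumes a positive amount. If $x_{i_t}^p > 0$, then because the augmented equilibrium is welfare-maximizing, the marginal utility of the dummy buyer (which is the constant $p^d$) must equal the marginal cost of producing good $t$, giving $c_t(y_t^p) = p^d$. Combined with monotonicity of $c_t$ and $y_t \leq y_t^p$, this yields $c_t(y_t) \leq p^d$, so $p_t = p^d = \max\{p^d, c_t(y_t)\}$. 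If instead $x_{i_t}^p = 0$, then $y_t^p = y_t$, so $p_t = c_t(y_t)$; and welfare maximization with respect to the dummy buyer's zero consumption requires that increasing her consumption infinitesimally would not improve welfare, i.e.\ $c_t(y_t^p) \geq p^d$, hence $p_t = c_t(y_t) \geq p^d$ and again $p_t = \max\{p^d, c_t(y_t)\}$.

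For Part 2, I will unfold the profit as $\pi(\vec{p}, \vec{x}, \vec{y}) = \sum_{t \in T} p_t y_t - C(\vec{y})$ and use Part 1 to argue $p_t \geq c_t(y_t)$. Then I will invoke the doubly convex property of $C_t$: since $c_t$ is convex with $c_t(0)=0$, for any $s \in [0, y_t]$ we have $c_t(s) \leq (s/y_t)\, c_t(y_t)$, and integrating yields the well-known bound $C_t(y_t) \leq \tfrac{1}{2} c_t(y_t) y_t \leq \tfrac{1}{2} p_t y_t$. Summing over $t$, this gives $C(\vec{y}) \leq \tfrac{1}{2}\sum_t p_t y_t$, and therefore $\sum_t p_t y_t - C(\vec{y}) \geq C(\vec{y})$, which is exactly $\pi(\vec{p}, \vec{x}, \vec{y}) \geq C(\vec{y})$.

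The main conceptual subtlety is in Part 1, where one has to be careful that the ``Walrasian equilibrium'' here is actually defined as the welfare-maximizing allocation of the augmented instance rather than a market-clearing price-quantity pair; the KKT-style complementary slackness on the dummy buyer's consumption is what links the marginal cost to the dummy price. Part 2 is then a short calculation, provided the doubly convex inequality $C_t(y) \leq y c_t(y)/2$ is invoked correctly; I would state and prove this inequality as a short preliminary observation (or cite it from the preliminaries) before completing the chain.
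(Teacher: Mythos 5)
Your proof is correct and takes essentially the same approach as the paper: Part 1 uses the welfare-maximization optimality conditions on the dummy buyers (your case split on $x_{i_t}^p>0$ versus $x_{i_t}^p=0$ is just a slight reorganization of the paper's contradiction argument), and Part 2 uses the doubly-convex bound $C_t(y_t)\leq \tfrac{1}{2}c_t(y_t)y_t$, which the paper invokes via Lemma~\ref{lem_app_costintermlambda}.
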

\begin{proof}
Suppose that $(\vec{x^{p^d}}, \vec{y^{p^d}})$ denotes the augmented Walrasian equilibrium with dummy price $p^d$. By construction, we know that for every $t$, $p_t = c_t(y^{p^d}_t)$.
\begin{enumerate}
\item We first claim that for every good $t$, $c_t(y^{p^d}_t) \geq p^d$. Indeed, if this condition were not met for some $t$, then one could increase the social welfare of $(\vec{x^{p^d}}, \vec{y^{p^d}})$ by simply allocating a non-zero amount of good $t$ to the dummy buyer assigned to that good $i_t$ violating the fact that $(\vec{x^{p^d}}, \vec{y^{p^d}})$ is a social welfare maximizing allocation. Therefore for every $t$, it is true that $p_t = c_t(y^{p^d}_t) \geq p^d$.

Since buyers are using the same bundles as in the augmented equilibrium, the consumption of any good in $\vec{y}$ cannot be larger than its consumption in $\vec{y^{p^d}}$, i.e., for all $t \in T$, $y_t \leq y^{p^d}_t$. This implies that $p_t = c_t(y^{p^d}_t) \geq c_t(y_t)$.

Now consider some good $t$ for which $p_t > p^d$. We claim that no dummy buyer can be consuming this good in the augmented equilibrium. Once again, we show this via contradiction. Assume to the contrary: then, one could increase the social welfare of $(\vec{x^{p^d}}, \vec{y^{p^d}})$ by removing a non-zero amount of the allocation given to dummy buyer $i_t$, which violates the optimality of the solution. This means that the entire consumption of good $t$ comes from the original buyers in $B$, and therefore, $y_t = y^{p^d}_t$. This in turn implies that $c_t(y_t) = p_t$.

\item We have already showed that for all $t$, $p_t \geq c_t(y_t)$. Therefore, the second statement is a consequence of Lemma~\ref{lem_app_costintermlambda}.
\end{enumerate}

\end{proof}

\noindent\textbf{Saturated Goods} Perhaps, the most useful feature of pricing solutions obtained via augmented equilibria is highlighted in the second statement of the above lemma. Any good that is not priced at the dummy price is in a sense \textit{saturated} as $p_t = c_t(y_t)$. We refer to such goods having $p^d < p_t = c_t(y_t)$ as saturated goods. Saturated goods usually yield good profit (due to high consumption) and thus, are useful in bounding the social welfare in terms of profit. As we will show later, for a suitably chosen dummy price, any welfare lost from the benchmark solution can be charged to the income obtained only from these saturated goods.

\subsubsection*{First shot at an actual dummy price}
Now that we have a better understanding of pricing solutions obtained via augmented equilibria, we pose our first major question: can we effectively approximate the benchmark solution using a carefully chosen dummy price? Our journey towards answering the above question begins with the answer to a slightly relaxed question: what is the highest possible dummy price so that resulting solution has a social welfare comparable to that of the benchmark solution? Clearly, if the dummy price is small enough, one can recover the original optimum solution. The insistence on high prices is to ensure that each buyer's demand in the augmented equilibria is somewhat comparable to $\lambda_i(x^b_i)$. The following three claims shed some light on this question, by identifying $\frac{\tilde{p}}{2\ell^{max}}$ as the appropriate dummy price. At this dummy price, we show that any deviation from the benchmark solution can be charged to the revenue provided by the saturated goods (and so the deviation cannot be too large). Recall that $\ell^{max}$ is the size of the largest bundle desired by any buyer type.

\begin{lemma}
\label{lem_deviationsaturated}
Suppose that $(\vec{x^a}, \vec{y^a})$ is any given benchmark solution, satisfying $\lambda_i(x^a_i) \geq \hat{p}>0$ for all $i \in B$. Let  $(\vec{p}, \vec{x}, \vec{y})$ denote a pricing solution for the original instance obtained via an augmented Walrasian equilibrium with dummy price $\frac{\hat{p}}{2 \ell^{max}}$. Then, for any buyer $i$ having $\lambda_i(x_i) > \lambda_i(x^a_i)$, and every bundle $S$ desired by $i$ ($S \in B_i$), we have that,

\begin{itemize}
\item $S$ contains at least one saturated good.

\item $\lambda_i(x_i) \leq 2\sum_{t \in S \cap \hat{S}}p_t$, where $\hat{S}$ is the set of saturated goods in $(\vec{p}, \vec{x}, \vec{y})$.

\end{itemize}
\end{lemma}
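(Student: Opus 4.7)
The plan is to exploit the specific choice of dummy price $p^d = \frac{\hat{p}}{2\ell^{max}}$, which is engineered to be small enough relative to the benchmark valuations that the non-saturated goods in any desired bundle collectively cannot meet the buyer's valuation on their own. First, I would recall from Lemma~\ref{lem_propaugmented} that every good $t$ satisfies either $p_t = p^d$ (when $t$ is non-saturated) or $p_t = c_t(y_t) > p^d$ (when $t \in \hat{S}$). In particular, non-saturated goods have the known uniform price $p^d$, while saturated goods strictly exceed it.

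The main technical step is to argue that, at the prices $\vec{p}$ obtained from the augmented equilibrium, every original buyer $i$ satisfies $q_i(\vec{p}) \geq \lambda_i(x_i)$, where $q_i(\vec{p}) = \min_{S \in B_i} \sum_{t \in S} p_t$ is the cheapest desired bundle price. Since $(\vec{x}^{p^d}, \vec{y}^{p^d})$ maximizes social welfare on the augmented instance, standard KKT/envelope-type optimality forces each original buyer to purchase a utility-maximizing bundle at the marginal-cost prices of that instance. As used in the proof of Lemma~\ref{lem_propaugmented}, those marginal-cost prices coincide with our converted prices $\vec{p}$ (equality holds both on saturated goods, where $y_t = y_t^{p^d}$, and on non-saturated goods, where both quantities equal $p^d$). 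Hence the marginal buyer of type $i$ is indifferent between purchasing and abstaining, yielding the required inequality $\sum_{t \in S} p_t \geq q_i(\vec{p}) \geq \lambda_i(x_i)$ for every $S \in B_i$.

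With this inequality in hand, fix a desired bundle $S \in B_i$ for a buyer with $\lambda_i(x_i) > \lambda_i(x^a_i) \geq \hat{p}$, and decompose $S$ into its saturated part $S \cap \hat{S}$ and its non-saturated part $S \setminus \hat{S}$. Since $|S \setminus \hat{S}| \leq |S| \leq \ell^{max}$ and each non-saturated good contributes exactly $p^d = \frac{\hat{p}}{2\ell^{max}}$, the non-saturated contribution satisfies $\sum_{t \in S \setminus \hat{S}} p_t \leq \frac{\hat{p}}{2} < \frac{\lambda_i(x_i)}{2}$. Combined with $\sum_{t \in S} p_t \geq \lambda_i(x_i)$, this forces $\sum_{t \in S \cap \hat{S}} p_t \geq \lambda_i(x_i)/2 > 0$, which simultaneously establishes both bullets: non-emptiness of $S \cap \hat{S}$ (so $S$ contains a saturated good) and the factor-of-two bound on $\lambda_i(x_i)$.

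The main obstacle I expect is the market-clearing/KKT step: cleanly justifying that after stripping off the dummy buyers and re-pricing, the original buyers still best-respond at $\vec{p}$, so that $q_i(\vec{p})$ can be tied to $\lambda_i(x_i)$. This requires carefully linking the structural characterization in Lemma~\ref{lem_propaugmented} to the utility-maximizing behavior of buyers on the augmented problem. Once that link is established, the bundle-size bound on the non-saturated contribution finishes the proof essentially for free.
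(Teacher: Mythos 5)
Your proof is correct and follows essentially the same route as the paper: invoke Lemma~\ref{lem_propaugmented} to pin non-saturated goods to the dummy price $p^d=\hat{p}/(2\ell^{max})$, use the validity of the pricing solution to get $\lambda_i(x_i)\le\sum_{t\in S}p_t$ for every $S\in B_i$, then bound $\sum_{t\in S\setminus\hat S}p_t\le|S|\,p^d\le\hat p/2<\lambda_i(x_i)/2$ so that the saturated part carries the rest. The only cosmetic difference is that you derive both bullets in one decomposition, whereas the paper proves the first bullet by contradiction and then reuses the bound for the second; the "KKT/envelope" discussion you raise is treated in the paper as immediate from the construction of $(\vec p,\vec x,\vec y)$ as a valid pricing solution, so no extra work is needed there.
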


\begin{proof}
The first part of the lemma follows from simple contradiction. Consider a buyer $i$ and bundle $S \in B_i$ as mentioned in the statement of the lemma, and suppose that $S$ does not contain any saturated good. Therefore, as per Lemma~\ref{lem_propaugmented}, it must be true that for every $t \in S$, $p_t = \frac{\hat{p}}{2\ell^{max}}$. But since we have a valid pricing solution, it is also true that $\lambda_i(x_i) \leq \sum_{t \in S}p_t$. So, we have that

$$ \hat{p} \leq \lambda_i(x^a_i) < \lambda_i(x_i) \leq \sum_{t \in S}p_t = |S| \frac{\hat{p}}{2\ell^{max}} \leq \frac{\hat{p}}{2},$$

which is a contradiction. Note that $|S| \leq \ell^{max}$, by definition.

Now that we have established that $S$ contains at least one saturated good, we can write $\lambda_i(x_i)$ as,
$$\lambda_i(x_i) \leq \sum_{t \in S \cap \hat{S}}p_t + \sum_{t \in S \setminus (S \cap \hat{S})}p_t \leq \sum_{t \in S \cap \hat{S}}p_t + \frac{\hat{p}}{2}.$$

Since $\lambda_i(x_i) \geq \hat{p}$, we get the desired lemma (second statement).
\end{proof}


Building on the previous lemma, we actually bound the welfare loss from the benchmark solution in terms of the income (profit without the cost) due to the saturated goods in $(\vec{p}, \vec{x}, \vec{y})$ as well as $(\vec{p^*}, \vec{x^*}, \vec{y^*})$. The introduction of $OPTW$ here may appear strange at first but we remind the reader that bounding social welfare in terms of the profit at $OPTW$ does not really hurt us by much as a high profit at $OPTW$ leads to a trivial bicriteria approximation algorithm with optimal welfare.

\begin{lemma}
\label{lem_boundviasaturated}
Suppose that $(\vec{x^a}, \vec{y^a})$ is any given benchmark solution and $(\vec{p}, \vec{x}, \vec{y})$ denotes a pricing solution for the original instance obtained via an augmented Walrasian equilibrium with dummy price $\frac{\hat{p}}{2 \ell^{max}}$. Also, suppose that the following conditions are satisfied

\begin{enumerate}
\item $\lambda_i(x^a_i) \geq \hat{p}$ for all $i \in B$.

\item $\vec{x^*} \geq \vec{x^a}$ and $\vec{y^*} \geq \vec{y^a}$.

\end{enumerate}

Let $\hat{B}$ be the set of buyers for whom $\lambda_i(x_i) > \lambda_i(x^a_i)$. If $\hat{S}$ denotes the set of saturated goods in $(\vec{p}, \vec{x}, \vec{y})$, then,

$$\sum_{i \in \hat{B}} \int_{x_i}^{x^a_i} \lambda_i(x)dx \leq \sum_{i \in \hat{B}}\lambda_i(x_i)x^a_i \leq  2\sum_{t \in \hat{S}}p_t y_t + 2\sum_{t \in \hat{S}}p^*_t y^*_t $$

\end{lemma}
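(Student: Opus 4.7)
The plan is to establish the two inequalities separately. For the first, observe that $i \in \hat{B}$ means $\lambda_i(x_i) > \lambda_i(x^a_i)$ by definition, which together with the fact that $\lambda_i$ is non-increasing forces $x_i < x^a_i$. Since $\lambda_i$ is non-increasing on $[x_i, x^a_i]$, its maximum on this interval is $\lambda_i(x_i)$, so $\int_{x_i}^{x^a_i} \lambda_i(x)\, dx \leq \lambda_i(x_i)(x^a_i - x_i) \leq \lambda_i(x_i)\, x^a_i$. Summing over $\hat{B}$ yields the first inequality.

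For the second inequality, the key move is to expand $x^a_i$ via the benchmark's bundle consumption pattern, writing $x^a_i = \sum_S x^a_i(S)$, where $x^a_i(S)$ is the amount of bundle $S$ consumed by $i$ in $(\vec{x^a}, \vec{y^a})$. Since condition~1 of the lemma is exactly the hypothesis of Lemma~\ref{lem_deviationsaturated}, that lemma applies and delivers $\lambda_i(x_i) \leq 2\sum_{t \in S \cap \hat{S}} p_t$ for every $S \in B_i$. Multiplying by $x^a_i(S)$, summing over $S$, and swapping the order of summation gives
\[
\lambda_i(x_i)\, x^a_i \;\leq\; 2 \sum_{S} x^a_i(S) \sum_{t \in S \cap \hat{S}} p_t \;=\; 2\sum_{t \in \hat{S}} p_t\, y^a_{i,t},
\]
where $y^a_{i,t} := \sum_{S : t \in S} x^a_i(S)$ is the amount of good $t$ consumed by buyer $i$ in the benchmark. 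Summing over $i \in \hat{B}$ and using $\sum_i y^a_{i,t} \leq y^a_t$ yields $\sum_{i \in \hat{B}} \lambda_i(x_i)\, x^a_i \leq 2\sum_{t \in \hat{S}} p_t\, y^a_t$.

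It then remains to replace $y^a_t$ by the two ingredients of the stated bound via a case split on each $t \in \hat{S}$. If $y^a_t \leq y_t$, then $p_t y^a_t \leq p_t y_t$ directly. Otherwise $y^a_t > y_t$, and saturation gives $p_t = c_t(y_t)$; monotonicity of $c_t$ together with condition~2 ($y^a_t \leq y^*_t$) then yields $p_t \leq c_t(y^a_t) \leq c_t(y^*_t) = p^*_t$, where the last equality is the optimum-price characterization in Claim~\ref{clm_optchar}. Hence $p_t y^a_t \leq p^*_t y^*_t$. Adding the two cases over $\hat{S}$ produces $\sum_{t \in \hat{S}} p_t y^a_t \leq \sum_{t \in \hat{S}} p_t y_t + \sum_{t \in \hat{S}} p^*_t y^*_t$, which combined with the previous paragraph delivers the target inequality. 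I expect the main obstacle to be purely bookkeeping: applying Lemma~\ref{lem_deviationsaturated} uniformly across the (possibly many) bundles appearing in $i$'s benchmark consumption, then cleanly interchanging the sums over bundles and saturated goods without accumulating extra constants.
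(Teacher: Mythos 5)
Your proposal is correct and follows essentially the same route as the paper's proof: expand $x^a_i$ over the benchmark bundle consumption, apply Lemma~\ref{lem_deviationsaturated} to replace $\lambda_i(x_i)$ by $2\sum_{t\in S\cap\hat S}p_t$, swap sums to obtain $2\sum_{t\in\hat S}p_t y^a_t$, and then split $\hat S$ according to whether $y^a_t\le y_t$, using saturation, monotonicity of $c_t$, and $p^*_t=c_t(y^*_t)$ to bound the second group by $p^*_t y^*_t$. The only cosmetic difference is that you explicitly spell out the (trivial) first inequality and introduce the intermediate per-buyer quantity $y^a_{i,t}$ before summing, whereas the paper does this in one step.
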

\begin{proof}

Suppose that $x^a_i(S)$ is the amount of bundle $S$ consumed by buyer $i$ in the benchmark solution. Then, we have that

$$\sum_{i \in \hat{B}}\lambda_i(x_i)x^a_i = \sum_{i \in \hat{B}}\sum_{S \in B_i} \lambda_i(x_i) x^a_i(S).$$

Moreover, we know that for every $S \in B_i$, $\lambda_i(x_i) \leq 2\sum_{t \in S \cap \hat{S}}p_t$, from Lemma~\ref{lem_deviationsaturated}. Using this upper bound, we get

$$\sum_{i \in \hat{B}}\lambda_i(x_i)x^a_i \leq \sum_{i \in \hat{B}}\sum_{S \in B_i}\sum_{t \in S \cap \hat{S}} 2p_t x^a_i(S) \leq 2\sum_{t \in \hat{S}}p_t y^a_t.$$

Now, we divide the goods in $\hat{S}$ into two categories: set $\hat{S}^a$ represents the goods for which $y_t \geq y^a_t$, and the rest of the goods fall under $\hat{S}^*$. Moreover, for every good $t \in \hat{S}^*$, $c_t(y^*_t) \geq c_t(y^a_t) \geq c_t(y_t) = p_t$. Using this, we can proceed with our final simplification.

$$2\sum_{t \in \hat{S}}p_t y^a_t \leq 2\sum_{t \in \hat{S}^a}p_ty^a_t + 2\sum_{t \in \hat{S}^*}p_ty^a_t \leq 2\sum_{t \in \hat{S}^a}p_ty_t + 2\sum_{t \in \hat{S}^*}p^*_ty^*_t.$$

Since $\hat{S}^*, \hat{S}^a \subseteq \hat{S}$, we have our lemma.
\end{proof}

With our properties of augmented Walrasian equilibrium solutions firmly in place, we are now ready to prove our first crucial claim, a lower bound for the social welfare of the solution $(\vec{p}, \vec{x}, \vec{y})$ obtained via an augmented equilibrium in terms of the optimum welfare as well as the profit at $(\vec{p}, \vec{x}, \vec{y})$ as well as at $OPTW$. Assuming for the time being that the profit at $OPTW$ is not that large, the claim below gives us a handle on a (high enough) dummy price for which our solution behaves like the benchmark solution. This dummy price will be the starting point, from which we design a sequence of solutions to extract good profit. If the current dummy price was much smaller (say), then we would still be guaranteed a good social welfare but unfortunately, good profit becomes an arduous task.

\begin{claim}[Claim \ref{clm_cruc_appbenchmark} from Section \ref{sec:multi-minded}]
Suppose that $(\vec{p}, \vec{x}, \vec{y})$ denotes a pricing solution for the original instance obtained via an augmented Walrasian equilibrium with dummy price $\frac{\tilde{p}}{2 \ell^{max}}$. Then,

$$SW^* - SW(\vec{p}, \vec{x}, \vec{y}) \leq \left(5 + \frac{6}{1-\alpha}\right)[\pi(\vec{p}, \vec{x}, \vec{y}) + \pi(\vec{p^*}, \vec{x^*}, \vec{y^*})].$$
\end{claim}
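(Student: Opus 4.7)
The plan is to route $SW^* - SW(\vec{p},\vec{x},\vec{y})$ through the benchmark solution $(\vec{x^b}, \vec{y^b})$ and exploit the two tools built earlier in the appendix: Lemma~\ref{lem_boundviasaturated}, which charges welfare lost to under-consuming buyers to the revenue from saturated goods, and Statement~2 of Claim~\ref{clm_aboveandbelowthresh}, which bounds $\sum_i \int_{\hat x_i}^{x^*_i}\lambda_i(x)\,dx$ whenever $\hat x \geq \vec{x^b}$ componentwise. Concretely, I will write $SW^* - SW(\vec{p},\vec{x},\vec{y}) = \sum_i \int_{x_i}^{x^*_i}\lambda_i(x)\,dx - [C(\vec{y^*}) - C(\vec{y})]$, partition buyers into the under-consumers $\hat B = \{i : x_i < x^b_i\}$ and its complement $\bar B$, and introduce the auxiliary vector $\hat x_i := \max(x_i, x^b_i)$, which dominates $\vec{x^b}$ by construction. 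This lets me split the utility integral as $\sum_{i \in \hat B}\int_{x_i}^{x^b_i}\lambda_i(x)\,dx + \sum_i \int_{\hat x_i}^{x^*_i}\lambda_i(x)\,dx$, and the two pieces can be handled by the two tools separately.

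For the first piece, Lemma~\ref{lem_boundviasaturated} applies directly to the benchmark: its hypotheses hold because $\lambda_i(x^b_i) \geq \tilde p$ by the definition of $\vec{x^b}$, because $\vec{x^*} \geq \vec{x^b}$ and $\vec{y^*} \geq \vec{y^b}$ by Proposition~\ref{prop_bench}, and because the dummy price is exactly $\tilde p/(2\ell^{max})$. The lemma bounds this sum by $2\sum_{t \in \hat S}p_t y_t + 2\sum_{t \in \hat S}p^*_t y^*_t$, which I will convert into profit via $\sum_t p_t y_t \leq 2\pi(\vec{p},\vec{x},\vec{y})$ and $\sum_t p^*_t y^*_t \leq 2\pi(\vec{p^*},\vec{x^*},\vec{y^*})$. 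The first inequality follows from Lemma~\ref{lem_propaugmented}(2), and the second from $p^*_t = c_t(y^*_t)$ together with doubly-convex costs, which forces $C(\vec{y^*}) \leq \tfrac12 \sum_t p^*_t y^*_t$. Thus the first piece contributes at most $4\pi(\vec{p},\vec{x},\vec{y}) + 4\pi(\vec{p^*},\vec{x^*},\vec{y^*})$.

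For the second piece, Claim~\ref{clm_aboveandbelowthresh}(2) applies since $\hat x \geq \vec{x^b}$ everywhere, giving the bound $\tfrac{1}{1-\alpha}\sum_i \lambda_i(\hat x_i)\hat x_i$. Since $\lambda_i$ is non-increasing and $\hat x_i \geq x_i$, I have $\lambda_i(\hat x_i)\hat x_i \leq \lambda_i(x_i) x^b_i$ for $i \in \hat B$ and $\lambda_i(\hat x_i)\hat x_i = \lambda_i(x_i) x_i$ for $i \in \bar B$. The $\hat B$-part is the very quantity Lemma~\ref{lem_boundviasaturated} already bounds (so I reuse the estimate $\leq 4\pi + 4\pi^*$), while the $\bar B$-part is at most $\sum_i \lambda_i(x_i)x_i = \pi(\vec{p},\vec{x},\vec{y}) + C(\vec{y}) \leq 2\pi(\vec{p},\vec{x},\vec{y})$. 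The residual cost term $-[C(\vec{y^*}) - C(\vec{y})]$ is non-positive when $C(\vec{y^*}) \geq C(\vec{y})$ and otherwise bounded by $C(\vec{y}) \leq \pi(\vec{p},\vec{x},\vec{y})$. Collecting the three contributions gives a bound of the form $\bigl(5 + \tfrac{6}{1-\alpha}\bigr)\pi(\vec{p},\vec{x},\vec{y}) + \bigl(4 + \tfrac{4}{1-\alpha}\bigr)\pi(\vec{p^*},\vec{x^*},\vec{y^*})$, which yields the claimed inequality since the $\pi^*$ coefficient is dominated by the $\pi$ coefficient.

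The main delicate point is that Lemma~\ref{lem_boundviasaturated} has to be applied twice, and the reason $\pi(\vec{p^*},\vec{x^*},\vec{y^*})$ appears on the right-hand side at all is that a saturated good in the augmented equilibrium can have strictly smaller supply than its OPTW counterpart, so the routing of welfare to saturated-good revenue has to borrow a factor from $p^*_t y^*_t$ in addition to $p_t y_t$. Beyond this, the proof is essentially careful bookkeeping: splitting the integral at the benchmark threshold, applying each charging lemma in the regime where its hypotheses hold, and using $\pi \geq C(\vec{y})$ to convert revenue quantities into profit quantities.
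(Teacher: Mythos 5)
Your proposal is correct and follows essentially the same route as the paper's proof: same split of buyers into $\hat B$ and its complement, same auxiliary vector $\hat x_i = \max(x_i, x^b_i)$ (the paper calls it $\vec{x'}$), same two-fold application of Lemma~\ref{lem_boundviasaturated} to both the direct $\int_{x_i}^{x^b_i}\lambda_i$ term and the $\hat B$ part of the Claim~\ref{clm_aboveandbelowthresh}(2) bound, and the same conversions of revenue terms into profit via Lemma~\ref{lem_app_costintermlambda}. The only difference is bookkeeping order: the paper first collapses everything to the form $(2 + \tfrac{3}{1-\alpha})\sum_i[\lambda_i(x_i)x_i + \lambda_i(x^*_i)x^*_i]$ and converts to profit once at the end, whereas you convert each piece as you go; both yield the stated $(5 + \tfrac{6}{1-\alpha})$ constant.
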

Recall that $\tilde{p} = \lambda^{max}(1-\alpha)^{\frac{1}{\alpha}}$.
\begin{proof}
Consider $SW^* - SW(\vec{p}, \vec{x}, \vec{y})$; this can be expressed as follows,

$$SW^* - SW(\vec{p}, \vec{x}, \vec{y}) = \sum_{i \in B}\int_{x_i}^{x^*_i} \lambda_i(x)dx - (C(\vec{y^*}) - C(\vec{y})).$$

Recall our original benchmark solution $(\vec{x^b}, \vec{y^b})$. Divide $B$ into two subsets: $\hat{B}$ denotes the set of buyers for whom $\lambda_i(x_i) > \lambda_i(x^b_i)$ and $B \setminus \hat{B}$ denotes the other buyers. We now further simplify the above equation,

\begin{align*}
\sum_{i \in B}\int_{x_i}^{x^*_i} \lambda_i(x)dx & = \sum_{i \in \hat{B}}\int_{x_i}^{x^*_i} \lambda_i(x)dx + \sum_{i \in B \setminus \hat{B}}\int_{x_i}^{x^*_i} \lambda_i(x)dx\\
& = \sum_{i \in \hat{B}}\int_{x_i}^{x^b_i} \lambda_i(x)dx + [\sum_{i \in \hat{B}}\int_{x^b_i}^{x^*_i} \lambda_i(x)dx + \sum_{i \in B \setminus \hat{B}}\int_{x_i}^{x^*_i} \lambda_i(x)dx]\\
& \leq \sum_{i \in \hat{B}}\int_{x_i}^{x^b_i} \lambda_i(x)dx + \frac{1}{1-\alpha}[\sum_{i \in \hat{B}}x^b_i \lambda_i(x^b_i) + \sum_{i \in B \setminus \hat{B}}x_i \lambda_i(x_i)] \\
& \leq \left(1 + \frac{1}{1-\alpha}\right) \sum_{i \in \hat{B}}\lambda_i(x_i)x^b_i + \frac{1}{1-\alpha}\sum_{i \in B \setminus \hat{B}}x_i \lambda_i(x_i).
\end{align*}

The bound for the term in the square parentheses in lines two and three comes from the second half of Lemma~\ref{clm_aboveandbelowthresh} for which we define a new demand vector $\vec{x'}$ such that $x'_i = \max(x^b_i, x_i)$. The last line follows from simple rearrangement. The benchmark solution satisfies all of the conditions mentioned in Lemma~\ref{lem_boundviasaturated} and therefore, the first term above $\sum_{i \in \hat{B}}\lambda_i(x_i)x^b_i$ is no larger than $2\sum_{t \in \hat{S}}p_t y_t + 2\sum_{t \in \hat{S}}p^*_t y^*_t$. Once again, $\hat{S}$ is the set of saturated goods in our pricing solution.

In summary, we get

$$\sum_{i \in B}\int_{x_i}^{x^*_i} \lambda_i(x)dx \leq \left(2 + \frac{3}{1-\alpha}\right)\sum_{i \in B}[\lambda_i(x_i)x_i + \lambda_i(x^*_i)x^*_i].$$

Coming back to our original expression, the difference in social welfare can be written as

\begin{align*}
SW^* - SW(\vec{p}, \vec{x}, \vec{y}) & \leq \left(2 + \frac{3}{1-\alpha}\right)\sum_{i \in B}[\lambda_i(x_i)x_i + \lambda_i(x^*_i)x^*_i] + C(\vec{y}) \\
& \leq \left(4 + \frac{6}{1-\alpha}\right)[\pi(\vec{p}, \vec{x}, \vec{y}) + \pi(\vec{p^*}, \vec{x^*}, \vec{y^*})] + \pi(\vec{p}, \vec{x}, \vec{y}). \\
\end{align*}

The last step 
comes from Lemma~\ref{lem_app_costintermlambda}.
\end{proof}

\subsubsection*{Penultimate Leg: Sequence of Solutions}
Previously, we identified a suitable dummy price and proved that the solution obtained via the augmented equilibrium at this dummy price (to some extent) captures all of the welfare of the benchmark solution and any lost welfare can be charged to the saturated goods. What about the profit of this solution? When all of the bundles desired by buyers are of equal cardinality, one can immediately show that the solution's profit also mimics that of the personalized payment scheme in the benchmark solution, $\pi^b$. When there is large disparity in the bundle sizes, however, the dummy price may result in buyers gravitating towards the smaller sized bundles. How do we fix this?	

In this subsection, we consider augmented Walrasian equilibria at dummy prices that are scaled versions of the original dummy price $\frac{\tilde{p}}{2\ell^{max}}$. Sequentially, we show that, if the profit at the previous solution (starting with the original dummy price) does not meet our requirement, then scaling the dummy price by a factor of two, does not really lead to a large loss in social welfare. Once again, our primary technique is a charging argument that assigns the lost welfare to the profit of the saturated goods. The profit due to the saturated goods is no larger than the social welfare of the solution, and thus, in either event, the social welfare cannot be small.

We begin with some notation. Consider the vector of prices defined as $P(j) = 2^j \frac{\tilde{p}}{2\ell^{max}}$ for $j=0$ up to $j=\lceil{\log(\Delta)}\rceil + 1$, and define $(\vec{p}(j), \vec{x}(j), \vec{y}(j))$ to be the pricing solution for the original instance obtained via the augmented Walrasian equilibrium at dummy price $P(j)$. While below we assume that $\Delta$ is a power of 2, this argument is easily generalized. Let us also define $SW(j)$ and $\pi(j)$ to be the social welfare and profit of the pricing solution $(\vec{p}(j), \vec{x}(j), \vec{y}(j))$. Observe that $(\vec{p}(0), \vec{x}(0), \vec{y}(0))$ denotes the solution with the original dummy price for whose social welfare we obtained a bound in Claim~\ref{clm_cruc_appbenchmark}.

Our next claim establishes a bound on the loss in social welfare as we increase the dummy price, in terms of the the previous and the current profit. For the rest of this proof, we define $\delta := \log_2 (\Delta)$.

\begin{claim}[Claim \ref{clm_diffinwelf} from Section \ref{sec:multi-minded}]
For every $j \in [0, \delta]$, $SW(j) - SW(j+1) \leq 3\pi(j) + 3\pi(j+1)$.
\end{claim}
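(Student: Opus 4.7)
The plan is to mirror the structure of Claim~\ref{clm_cruc_appbenchmark}, but comparing two successive augmented-equilibrium solutions instead of comparing to the welfare optimum. The role of $(\vec{p^*},\vec{x^*},\vec{y^*})$ will be played by $(\vec{p}(j),\vec{x}(j),\vec{y}(j))$, and the role of $(\vec{p}(0),\vec{x}(0),\vec{y}(0))$ by $(\vec{p}(j+1),\vec{x}(j+1),\vec{y}(j+1))$. Since the dummy price for solution $j+1$ is $P(j+1)=2^{j+1}\tilde{p}/(2\ell^{max})$, the natural cap for the intermediate benchmark becomes $\hat{p}=2^{j+1}\tilde{p}$, which satisfies $\hat{p}/(2\ell^{max})=P(j+1)$ and therefore plugs directly into the hypothesis of Lemma~\ref{lem_deviationsaturated}.

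The first step is to define a benchmark by capping $(\vec{x}(j),\vec{y}(j))$ at this threshold: set $x^a_i=\min\{x_i(j),\,\lambda_i^{-1}(\hat{p})\}$ and scale the per-bundle consumption uniformly, $x^a_i(S)=(x^a_i/x_i(j))\,x^{j}_i(S)$ for each $S\in B_i$, with the induced good allocation $y^a_t=\sum_i\sum_{S\ni t}x^a_i(S)$. By construction $\lambda_i(x^a_i)\geq\hat{p}$ for all $i$, and $\vec{x^a}\leq\vec{x}(j)$, $\vec{y^a}\leq\vec{y}(j)$ pointwise, so $c_t(y^a_t)\leq c_t(y_t(j))\leq p_t(j)$ by Lemma~\ref{lem_propaugmented}(i). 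The second step is to expand
\[
SW(j)-SW(j+1)=\sum_{i\in B}\int_{x_i(j+1)}^{x_i(j)}\lambda_i(x)\,dx-\bigl(C(\vec{y}(j))-C(\vec{y}(j+1))\bigr),
\]
partition buyers into $\hat{B}=\{i:\lambda_i(x_i(j+1))>\lambda_i(x^a_i)\}$ and its complement, and split each integral through $x^a_i$ into a high-valuation part (on $[x_i(j+1),x^a_i]$ for $i\in\hat{B}$) and a low-valuation part (on $[x^a_i,x_i(j)]$, nonempty only for buyers where the cap is active).

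For the high-valuation part, the plan is to re-run the proof of Lemma~\ref{lem_boundviasaturated} with $(\vec{x}(j),\vec{y}(j))$ in place of $(\vec{x^*},\vec{y^*})$. The only place the original proof invoked the optimum was to bound contributions from saturated goods with $y_t(j+1)<y^a_t$ using $c_t(y^a_t)\leq c_t(y^*_t)=p^*_t$; the same inequality holds here with $c_t(y^a_t)\leq c_t(y_t(j))\leq p_t(j)$. Combined with Lemma~\ref{lem_propaugmented}(ii), which gives $\sum_t p_t(k)y_t(k)=\pi(k)+C(\vec{y}(k))\leq 2\pi(k)$ for $k\in\{j,j+1\}$, this contributes at most $O(1)\cdot(\pi(j)+\pi(j+1))$ to the welfare gap. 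For the low-valuation part plus the cost term, use the bound $\int_{x^a_i}^{x_i(j)}\lambda_i\,dx\leq \hat{p}(x_i(j)-x^a_i)$ (by monotonicity of $\lambda_i$) together with the convexity inequality $C(\vec{y}(j))-C(\vec{y^a})\geq \sum_t c_t(y^a_t)(y_t(j)-y^a_t)$, and unwind the per-bundle scaling in the definition of $\vec{y^a}$ using the identity $\sum_{t\in S}p_t(j)=\lambda_i(x_i(j))$ that holds on every bundle purchased at $j$, to show that either the cost savings dominate the low-valuation utility loss or the residual can be charged against $\pi(j)$.

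The main obstacle is the low-valuation regime: the $\alpha$-strongly-regular integral bound used in Claim~\ref{clm_cruc_appbenchmark} (via Lemma~\ref{lem_integral} and Corollary~\ref{corr_crucial}) is no longer available, because $\lambda_i(x^a_i)=\hat{p}=2^{j+1}\tilde{p}$ strictly exceeds $\lambda^{max}(1-\alpha)^{1/\alpha}$ whenever $j\geq 0$, so the inequality $\lambda_i(x^a_i)/|\lambda'_i(x^a_i)|\leq x^a_i$ fails. Hence the low-valuation contribution cannot be handled by $\alpha$-SR alone and must instead be absorbed into the cost savings via the bundle-level structure of $\vec{y^a}$. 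Once this accounting is done, all pieces combine to $SW(j)-SW(j+1)\leq 3\pi(j)+3\pi(j+1)$, where the constant $3$ emerges after absorbing the cost terms $C(\vec{y}(k))\leq \pi(k)$ (Lemma~\ref{lem_propaugmented}(ii)) into the corresponding profits.
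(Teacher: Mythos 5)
Your proposal correctly identifies the central obstacle — that the $\alpha$-strong regularity machinery used in Claim~\ref{clm_cruc_appbenchmark} (via Corollary~\ref{corr_crucial}) is unavailable at threshold $\hat{p}=2^{j+1}\tilde{p}>\tilde{p}$ — but the substitute you propose does not close the gap. In the low-valuation regime, for a buyer $i$ with active cap ($\lambda_i(x^a_i)=\hat{p}$, $x^a_i<x_i(j)$), the utility loss per removed unit is between $\lambda_i(x_i(j))$ and $\hat p$, while the cost savings per removed unit are at most $\sum_{t\in S}c_t(y^a_t)\leq\sum_{t\in S}p_t(j)=\lambda_i(x_i(j))$. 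The residual per unit is therefore up to $\hat p-\lambda_i(x_i(j))$, and since any bundle purchased at level $j$ satisfies $\lambda_i(x_i(j))=\sum_{t\in S}p_t(j)\geq\ell^{min}P(j)=\hat p/(4\Delta)$, the gap can be a $\Theta(\Delta)$ multiple of buyer $i$'s income $\lambda_i(x_i(j))\,x_i(j)$. Charging this residual to $\pi(j)$ therefore loses a $\Theta(\Delta)$ factor, which is precisely what the claim must avoid. Your writeup acknowledges the need for a ``careful accounting'' here but does not supply it, and it is not a technicality: it is the crux.

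The paper's actual proof is structurally different and avoids a benchmark cap altogether. It writes $SW(j)-SW(j+1)\leq\sum_i\lambda_i(x_i(j+1))x_i(j)-(C(\vec y(j))-C(\vec y(j+1)))$, partitions buyers by whether $\lambda_i(x_i(j+1))\leq 2\lambda_i(x_i(j))$ or not, and proves a sub-lemma (\ref{sublem_charging}): because the dummy price only doubles from $P(j)$ to $P(j+1)$, any good whose price $p_t$ more than doubles must land in the class $SS(j+1)$ of saturated goods with $c_t(y_t(j+1))\geq c_t(y_t(j))$; hence for each bundle purchased at level $j$, the price-doubling excess $\lambda_i(x_i(j+1))-2\lambda_i(x_i(j))$ is entirely covered by the $SS(j+1)$ goods in that bundle. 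This doubling sub-lemma — which exploits the geometric scaling of dummy prices rather than any $\alpha$-SR property — is the key ingredient absent from your proposal, and it is what delivers the constant $3$. Without it (or an equivalent device), the cap-and-charge strategy you outline cannot produce a $\Delta$-independent bound.
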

\begin{proof}
Consider $SW(j) - SW(j+1)$, this can be bound as follows,

$$SW(j) - SW(j+1) \leq \sum_{i \in B}\lambda_i(x_i(j+1))x_i(j) - (C(\vec{y}(j)) - C(\vec{y}(j+1)).$$

Let us divide $B$ into two groups of buyers: $B_1$, which are the buyers satisfying $\lambda_i(x_i(j+1)) \leq 2\lambda_i(x_i(j))$ and $B_2$ which represent the buyers for whom $\lambda_i$ has more than doubled. Our proof will proceed by charging the loss in welfare for the buyers in $B_1$ to $\pi(j)$; for the buyers in $B_2$, the loss in welfare will be charged to both $\pi(j)$ as well as the income/revenue from the saturated goods in $\pi(j+1)$. In fact, we are only interested in a special class of saturated goods, for which $c_t(y_t(j+1)) \geq c_t(y_t(j))$. Formally, define $SS(j+1)$ as the subset of saturated goods in $(\vec{p}(j+1), \vec{x}(j+1), \vec{y}(j+1))$ having $c_t(y_t(j+1)) \geq c_t(y_t(j))$.

We first state a simple sub-claim, which we prove later.

\begin{lemma}
\label{sublem_charging}
For every buyer $i \in B_2$, and any bundle $S \in B_i$ that $i$ consumes a non-zero amount of in $(\vec{p}(j), \vec{x}(j), \vec{y}(j))$, we have that $\lambda_i(x_i(j+1)) - 2\lambda_i(x_i(j)) \leq \sum_{t \in SS(j+1) \cap S}p_t(j+1)$.
\end{lemma}

Now, we are ready to proceed with the proof. Let us divide $\sum_{i \in B}\lambda_i(x_i(j+1))x_i(j)$ into two parts, namely $(i)$: $\sum_{i \in B_1}\lambda_i(x_i(j+1))x_i(j) + \sum_{i \in B_2}2\lambda_i(x_i(j))x_i(j)$, and $(ii)$: $\sum_{i \in B_2}[\lambda_i(x_i(j+1)) - 2\lambda_i(x_i(j))]x_i(j)$. Consider the first part,

$$\sum_{i \in B_1}\lambda_i(x_i(j+1))x_i(j) + \sum_{i \in B_2}2\lambda_i(x_i(j))x_i(j) \leq 2\sum_{i \in B}\lambda_i(x_i(j))x_i(j).$$

We proceed with the second part carefully. Consider the solution $(\vec{p}(j), \vec{x}(j), \vec{y}(j))$, and define $z_i(S)$ to be the amount of bundle $S$ consumed by buyer $i$ in this solution. Without loss of generality, assume that $B_i$ consists only of bundles that $i$ is consuming non-zero amounts of in $(\vec{p}(j), \vec{x}(j), \vec{y}(j))$. Applying our Sub-Lemma~\ref{sublem_charging}, we get that,

\begin{align*}
\sum_{i \in B_2}[\lambda_i(x_i(j+1)) - 2\lambda_i(x_i(j))]x_i(j) & \leq \sum_{i \in B_2}\sum_{S \in B_i}[\lambda_i(x_i(j+1)) - 2\lambda_i(x_i(j))] z_i(S).\\
& \leq \sum_{i \in B_2}\sum_{S \in B_i} \sum_{t \in SS(j+1) \cap S}p_t(j+1)z_i(S).\\
& \leq \sum_{t \in SS(j+1)}p_t(j+1)y_t(j)\\
& \leq \sum_{t \in SS(j+1)}p_t(j+1)y_t(j+1).
\end{align*}

Finishing up the proof, we have that

\begin{align*}
SW(j) - SW(j+1) & \leq 2\sum_{i \in B}\lambda_i(x_i(j))x_i(j) - C(\vec{y}(j)) + \sum_{t \in T}p_t(j+1)y_t(j+1) + C(\vec{y}(j+1))\\
& = \pi(j) +\sum_{i \in B}\lambda_i(x_i(j))x_i(j) + \sum_{t \in T}p_t(j+1)y_t(j+1) + C(\vec{y}(j+1))\\
& \leq 3\pi(j) + 3\pi(j+1).
\end{align*}

As usual, we replace the income with the profit as per Lemma~\ref{lem_app_costintermlambda}. Recall that $\sum_{t \in T}p_t(j+1)y_t(j+1) = \sum_{i \in B}\lambda_i(x_i(j+1))x_i(j+1)$.

\subsubsection*{Proof of Lemma~\ref{sublem_charging}}
We need to prove that for every buyer $i \in B_2$, and any bundle $S \in B_i$ that $i$ consumes a non-zero amount of in $(\vec{p}(j), \vec{x}(j), \vec{y}(j))$,
$$\lambda_i(x_i(j+1)) - 2\lambda_i(x_i(j)) \leq \sum_{t \in SS(j+1) \cap S}p_t(j+1).$$

Consider the items in $S \setminus SS(j+1)$. Since $\lambda_i(x_i(j+1))=\sum_{t \in S}p_t(j+1)$ because $i$ consumes a non-zero amount of bundle $S$, then in order to prove the lemma, it suffices to show that

$$\sum_{i \in S \setminus SS(j+1)}p_t(j+1) \leq 2\sum_{t \in S \setminus SS(j+1)}p_t(j) \leq 2\sum_{t \in S}p_t(j) = 2\lambda_i(x_i(j)).$$

We claim that for every $t \in S \setminus SS(j+1)$, $p_t(j+1) \leq 2p_t(j)$. To see why this is true, assume by contradiction that $p_t$ has more than doubled for some good $t \in S \setminus SS(j+1)$. Clearly, this good must be saturated in  $(\vec{p}(j+1), \vec{x}(j+1), \vec{y}(j+1))$ since the dummy price has only doubled, and $c_t(y_t(j+1) \geq 2c_t(y_t(j))$. However, saturated goods which have $c_t(y_t(j+1) \geq c_t(y_t(j))$ belong to $SS(j+1)$ by definition.
\end{proof}

Since we can bound the difference in welfare via profit, we can sequentially build solutions with good social welfare until we either reach a solution with good profit (in which case, we are done) or $j = 1 + \delta$ is reached, and we have no more solutions to construct. In order to analyze the boundary condition, we now establish upper bounds on the social welfare at $j=1+\delta$ in terms of the profit obtained at that solution.

\begin{lemma}\label{lem_welfarelastbound}
$$SW(\delta+1) \leq \left(2(\frac{1}{1-\alpha})^{\frac{1}{\alpha}} - 1 \right)\pi(\delta+1).$$
\end{lemma}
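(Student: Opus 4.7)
The plan is to apply Statement~1 of Claim~\ref{clm_aboveandbelowthresh} directly to the pricing solution $(\vec{p}(\delta+1), \vec{x}(\delta+1), \vec{y}(\delta+1))$, since the bound in that claim exactly matches what we need to prove. So the entire work is to verify the two hypotheses of that claim, namely $(a)$ $p_t(\delta+1) \geq c_t(y_t(\delta+1))$ for every good $t$, and $(b)$ $\lambda_i(x_i(\delta+1)) \geq \tilde{p}$ for every buyer $i$.

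First I would pin down the dummy price. Because $\delta = \log_2(\Delta)$, the dummy price used to produce $(\vec{p}(\delta+1), \vec{x}(\delta+1), \vec{y}(\delta+1))$ is
$$P(\delta+1) \;=\; 2^{\delta+1}\frac{\tilde{p}}{2\ell^{max}} \;=\; \frac{\Delta\,\tilde{p}}{\ell^{max}} \;=\; \frac{\tilde{p}}{\ell^{min}}.$$
Hypothesis~$(a)$ is then immediate from Lemma~\ref{lem_propaugmented}, which states that any pricing solution derived from an augmented Walrasian equilibrium satisfies $p_t = \max\{P(\delta+1), c_t(y_t)\} \geq c_t(y_t)$ for every good $t$.

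For hypothesis~$(b)$, I split on whether the buyer is active. If $x_i(\delta+1) = 0$, then by the Uniform Peak Assumption $\lambda_i(0) = \lambda^{max} \geq \lambda^{max}(1-\alpha)^{1/\alpha} = \tilde{p}$. If $x_i(\delta+1) > 0$, buyer $i$ consumes some cheapest desired bundle $S^\star \in B_i$, so the envy-free condition gives $\lambda_i(x_i(\delta+1)) = \sum_{t \in S^\star} p_t(\delta+1)$. Since every good's price is at least the dummy price by Lemma~\ref{lem_propaugmented}, and every desired bundle has cardinality at least $\ell^{min}$, we obtain
$$\lambda_i(x_i(\delta+1)) \;\geq\; |S^\star|\cdot P(\delta+1) \;\geq\; \ell^{min}\cdot \frac{\tilde{p}}{\ell^{min}} \;=\; \tilde{p}.$$

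With both hypotheses in hand, invoking Statement~1 of Claim~\ref{clm_aboveandbelowthresh} yields $SW(\delta+1) \leq [2(\tfrac{1}{1-\alpha})^{1/\alpha} - 1]\,\pi(\delta+1)$, which is the claim. The only mildly delicate point is the choice of $P(\delta+1) = \tilde{p}/\ell^{min}$: it was engineered in exactly this way so that the cheapest available bundle, of size at least $\ell^{min}$, is guaranteed to cost at least $\tilde{p}$, and this is the reason the sequence of dummy prices was stopped at $j = \delta+1$ rather than earlier. Everything else is a routine verification.
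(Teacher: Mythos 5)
Your proof is correct and follows essentially the same route as the paper: compute the dummy price $P(\delta+1)=\tilde{p}/\ell^{min}$, use Lemma~\ref{lem_propaugmented} to lower-bound item prices, conclude every active buyer pays at least $\ell^{min}\cdot P(\delta+1)=\tilde{p}$, and invoke Statement~1 of Claim~\ref{clm_aboveandbelowthresh}. The only difference is that you explicitly handle the $x_i(\delta+1)=0$ case via the Uniform Peak Assumption, a small detail the paper glosses over.
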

\begin{proof}
Consider the pricing solution $(\vec{p}(\delta+1), \vec{x}(\delta+1), \vec{y}(\delta+1))$: this corresponds to an augmented Walrasian equilibrium with dummy price $2^{\delta+1} \frac{\tilde{p}}{2\ell^{max}} = \frac{\tilde{p}}{\ell^{min}}$, where $\ell^{min} \geq 1$ is the cardinality of the smallest bundle desired by any one buyer type.

Applying Lemma~\ref{lem_propaugmented}, we know that for every $t \in T$, $p_t \geq \frac{\tilde{p}}{\ell^{min}}$. Therefore, we have that for every buyer $i \in B$, $\lambda_i(x_i(\delta+1)) \geq \frac{\tilde{p}}{\ell^{min}} |S|$, for any $S \in B_i$. Since $|S| \geq \ell^{min}$, we conclude that $\lambda_i(x_i(\delta+1)) \geq \tilde{p}$. Applying Claim~\ref{clm_aboveandbelowthresh} on $(\vec{p}(\delta+1), \vec{x}(\delta+1), \vec{y}(\delta+1))$ completes the proof.
\end{proof}

Now, we are ready to combine all of the above claims and lemmas to make our final stand: a bound on $SW^*$ in terms of the profits of various solutions, which would imply that the max-profit solution approximates the optimum welfare up to the desired bound. In the next subsection, we take this proof further and show that if the max-profit solution from the below claim does not result in a good welfare, then one can in fact identify another such solution with similar profit guarantees but \emph{much better} welfare.

\begin{claim}[Claim \ref{clm_final_welfareintermsrev} from Section \ref{sec:multi-minded}]

$$SW^* \leq \left(8 + 2(\frac{1}{1-\alpha})^{\frac{1}{\alpha}} + 4\frac{1}{1-\alpha}\right)[\sum_{j=0}^{1+\delta} \pi(j) + \pi(\vec{p^*}, \vec{x^*}, \vec{y^*})].$$

\end{claim}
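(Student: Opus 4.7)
The plan is a telescoping argument that decomposes $SW^*$ into a sum of three naturally bounded pieces, each of which we already have a ready-made tool for. Specifically, I would begin by writing
\[
SW^* \;=\; \bigl(SW^* - SW(0)\bigr) \;+\; \sum_{j=0}^{\delta}\bigl(SW(j) - SW(j+1)\bigr) \;+\; SW(\delta+1),
\]
which is a trivial identity, and then bound each of the three pieces separately using (respectively) Claim~\ref{clm_cruc_appbenchmark}, Claim~\ref{clm_diffinwelf}, and Lemma~\ref{lem_welfarelastbound}.

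The three bounds give, in order:
\[
SW^* - SW(0) \;\le\; \left(5 + \tfrac{6}{1-\alpha}\right)\bigl[\pi(0) + \pi(\vec{p^*},\vec{x^*},\vec{y^*})\bigr],
\]
\[
\sum_{j=0}^{\delta}\bigl(SW(j)-SW(j+1)\bigr) \;\le\; \sum_{j=0}^{\delta}\bigl(3\pi(j)+3\pi(j+1)\bigr) \;\le\; 6\sum_{j=0}^{\delta+1}\pi(j),
\]
\[
SW(\delta+1) \;\le\; \left(2\bigl(\tfrac{1}{1-\alpha}\bigr)^{\!1/\alpha} - 1\right)\pi(\delta+1).
\]
Summing these three inequalities, the right-hand side becomes a non-negative combination of $\pi(\vec{p^*},\vec{x^*},\vec{y^*})$ and $\pi(0),\pi(1),\ldots,\pi(\delta+1)$. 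Collecting coefficients, the term $\pi(0)$ picks up $5+\tfrac{6}{1-\alpha}+6$, each $\pi(j)$ for $1\le j\le \delta$ picks up $12$, the term $\pi(\delta+1)$ picks up $6 + 2(\tfrac{1}{1-\alpha})^{1/\alpha} - 1$, and $\pi(\vec{p^*},\vec{x^*},\vec{y^*})$ picks up $5+\tfrac{6}{1-\alpha}$.

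Finally, I would verify that each of these coefficients is at most the single constant $C := 8 + 2(\tfrac{1}{1-\alpha})^{1/\alpha} + \tfrac{4}{1-\alpha}$, which reduces to a handful of elementary inequalities in $\alpha\in[0,1)$ using the fact that $(\tfrac{1}{1-\alpha})^{1/\alpha}\ge e$. Bounding every coefficient by $C$ and factoring it out of the sum then yields exactly the claimed inequality. The proof is essentially a bookkeeping exercise; the only mild subtlety is the last verification step, where one has to check that the coefficient of $\pi(0)$ (which contains both a $\tfrac{6}{1-\alpha}$ term and additional constants) is dominated by $C$ uniformly in $\alpha$, and this follows from the monotonicity of $(\tfrac{1}{1-\alpha})^{1/\alpha}$.
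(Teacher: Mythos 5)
Your decomposition and the three ingredients (Claim~\ref{clm_cruc_appbenchmark}, the telescoped Claim~\ref{clm_diffinwelf}, Lemma~\ref{lem_welfarelastbound}) are exactly what the paper uses, so this is essentially the same proof. Two small bookkeeping remarks. First, the coefficient you assign to $\pi(j)$ for $1\le j\le\delta$ should be $6$, not $12$: from your own bound $6\sum_{j=0}^{\delta+1}\pi(j)$, each term picks up $6$; the slip is harmless since $12$ is also dominated by $C=8+2(\tfrac{1}{1-\alpha})^{1/\alpha}+\tfrac{4}{1-\alpha}$. Second, by loosening the telescoped sum to $6\sum_{j=0}^{\delta+1}\pi(j)$ rather than keeping the exact $3\pi(0)+6\sum_{j=1}^{\delta}\pi(j)+3\pi(\delta+1)$, your coefficient on $\pi(0)$ becomes $11+\tfrac{6}{1-\alpha}$ instead of the paper's $8+\tfrac{6}{1-\alpha}$; the check $11+\tfrac{6}{1-\alpha}\le C$ still holds (it reduces to $3+\tfrac{2}{1-\alpha}\le 2(\tfrac{1}{1-\alpha})^{1/\alpha}$, which at $\alpha\to 0$ is $5\le 2e$), but the margin is thin, whereas the paper's tighter telescoping only needs the cleaner $\tfrac{1}{1-\alpha}\le(\tfrac{1}{1-\alpha})^{1/\alpha}$.
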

\begin{proof}

We know from Claim~\ref{clm_cruc_appbenchmark} that
$$SW^* - SW(0) \leq \left(5 + \frac{6}{1-\alpha}\right)[\pi(0) + \pi(\vec{p^*}, \vec{x^*}, \vec{y^*})].$$

Our other crucial result, Claim~\ref{clm_diffinwelf} tells us that for any $j$,

$$SW(j) - SW(j+1) \leq 3\pi(j) + 3\pi(j+1).$$

Writing the above inequality for all $j \in [0,\delta]$ provides us with a nice telescoping summation, which upon addition with the first inequality above yields

$$SW^* \leq SW(\delta+1) + \sum_{j=1}^{\delta}6\pi(j) + 3\pi(\delta+1) + \left(8 + \frac{6}{1-\alpha}\right)[\pi(0) + \pi(\vec{p^*}, \vec{x^*}, \vec{y^*})].$$

Finally, by replacing $SW(\delta+1)$ with the bound obtained in Lemma~\ref{lem_welfarelastbound}, we can wrap up the claim,

\begin{align*}
SW^* & \leq (2 + 2(\frac{1}{1-\alpha})^{\frac{1}{\alpha}})\pi(\delta+1) + 6\sum_{j=1}^{\delta}\pi(j) + \left(8 + \frac{6}{1-\alpha}\right)[\pi(0) + \pi(\vec{p^*}, \vec{x^*}, \vec{y^*})] \\
\leq & \left(8 + 2(\frac{1}{1-\alpha})^{\frac{1}{\alpha}} +  \frac{4}{1-\alpha}\right)[\sum_{j=0}^{\delta+1}\pi(j) + \pi(\vec{p^*}, \vec{x^*}, \vec{y^*})]
\end{align*}
For the second inequality, we used the fact that $2\frac{1}{1-\alpha} \leq 2(\frac{1}{1-\alpha})^{\frac{1}{\alpha}}$.
\end{proof}

\subsection*{Final Pricing Algorithm}
We have all of the required lemmas in our arsenal, and now will go about deriving our final algorithm. In the statement of Claim~\ref{clm_final_welfareintermsrev}, there are $2 + \log(\Delta)$ profit terms in the RHS; therefore, the solution giving maximum profit among $\pi(0), \pi(1), \ldots, \pi(\delta+1)$ and $\pi(\vec{p^*}, \vec{x^*}, \vec{y^*})$ must yield a $(2+\log(\Delta))\Theta(\frac{1}{1-\alpha})$ approximation to both profit and welfare. Our proof does not end here as our desired social welfare bound does not depend on $\Delta$, but by choosing the maximum profit solution, we may end up with a $O(\log(\Delta))$ additional welfare degradation on our hands.

However, we show that by using some careful analysis, one can instead compute a solution whose approximation for social welfare is simply $O(\frac{1}{1-\alpha})$ and thus, is independent of $\Delta$. We now define our main algorithm: for the purpose of continuity, let $\pi(-1)$ represent $\pi(\vec{p^*}, \vec{x^*}, \vec{y^*})$.

\begin{enumerate}
\item Let $k$ be the smallest index in the range $[-1, 1+\delta]$ such that $\frac{SW^*}{\pi(k)}$ is no larger than $2(\log(\Delta) + 2)\{\left(8 + 2(\frac{1}{1-\alpha})^{\frac{1}{\alpha}} + 4\frac{1}{1-\alpha}\right) \}$.

\item Return the solution $(\vec{p}(k), \vec{x}(k), \vec{y}(k))$.
\end{enumerate}

From Claim~\ref{clm_final_welfareintermsrev}, it is clear that there exists at least one index $k$ providing the desired guarantee. In fact, if $k=-1$, then we are done because the solution returned is the one maximizing social welfare. The following claim provides a lower bound on the social welfare $SW(k)$ when $k \geq 0$.

\begin{claim}
The algorithm provides a $(\Theta(\frac{\log \Delta}{1-\alpha}), \Theta(\frac{1}{1-\alpha}))$-bicriteria approximation to revenue and welfare.
\end{claim}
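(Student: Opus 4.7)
The profit guarantee is immediate from the algorithm's definition: by the choice of $k$, we have $SW^*/\pi(k) \leq 2(\log\Delta+2)M$, where $M := 8 + 2(1/(1-\alpha))^{1/\alpha} + 4/(1-\alpha) = \Theta(1/(1-\alpha))$. Combined with $\pi^{opt} \leq SW^*$, this delivers a profit approximation factor of $2(\log\Delta+2)M = \Theta(\log\Delta/(1-\alpha))$, matching the theorem's statement.

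The bulk of the proof then addresses the welfare bound, namely $SW(k) \geq \Omega(1-\alpha)\cdot SW^*$. If $k=-1$ the returned solution is welfare-optimal and nothing is to show, so assume $k \geq 0$. The plan is to exploit the minimality of $k$: every index $j \in \{-1,0,\ldots,k-1\}$ satisfies $\pi(j) < \epsilon := SW^*/[2(\log\Delta+2)M]$. I would split $SW^* - SW(k) = (SW^* - SW(0)) + (SW(0) - SW(k))$ and bound each part. Claim~\ref{clm_cruc_appbenchmark} gives $SW^* - SW(0) \leq A(\pi(0) + \pi(-1))$ with $A := 5 + 6/(1-\alpha)$, while iterating Claim~\ref{clm_diffinwelf} and telescoping produces
\[
SW(0) - SW(k) \leq 3\pi(0) + 6\sum_{j=1}^{k-1}\pi(j) + 3\pi(k) \leq 6k\epsilon + 3\pi(k).
\]
Since $k \leq \log\Delta+2$, the telescoped contribution is at most $3SW^*/M$.

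When $k \geq 1$ both $\pi(0)$ and $\pi(-1)$ lie below $\epsilon$, giving $SW^*-SW(0) \leq 2A\epsilon = A\,SW^*/[(\log\Delta+2)M]$; when $k=0$ only $\pi(-1)<\epsilon$ and I would absorb the $A\pi(0) \leq A\,SW(0)$ term into the left-hand side via the trivial bound $\pi \leq SW$ (which holds in this model because $u_i(x_i) \geq x_i \lambda_i(x_i)$). Using $\pi(k) \leq SW(k)$ to close the recursion yields an inequality of the form
\[
SW^* \;\leq\; \Bigl(\tfrac{A}{(\log\Delta+2)M} + \tfrac{3}{M}\Bigr)\,SW^* \;+\; (A+4)\,SW(k),
\]
(with a harmless modification in the $k=0$ case). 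A short computation using the fact that $A/M$ is bounded by an absolute constant (both $A$ and $M$ scale like $1/(1-\alpha)$) shows the $SW^*$-coefficient on the right is strictly below $1$ uniformly in $\alpha$ and $\Delta$, and rearranging gives $SW(k) \geq SW^*/O(A+1) = \Omega(1-\alpha)\cdot SW^*$.

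The main obstacle is ensuring the constants line up, i.e., verifying that the coefficient of $SW^*$ on the right-hand side above is bounded away from $1$ uniformly in $\alpha \in [0,1)$ and $\Delta$. This is precisely the role of the extra factor of $2$ built into the algorithm's threshold $2(\log\Delta+2)M$ beyond the bound of Claim~\ref{clm_final_welfareintermsrev}: it ensures that even after summing up to $\log\Delta+2$ small profit contributions produced by telescoping, the aggregate remains $O(SW^*/M)$, leaving the $(A+4)SW(k)$ term free to dominate. The $k=0$ boundary case needs the minor additional observation about $\pi \leq SW$, but no fundamentally new ideas beyond the framework above.
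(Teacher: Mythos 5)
Your proposal is correct and takes essentially the same approach as the paper's proof: both invoke Claim~\ref{clm_cruc_appbenchmark} to bound $SW^*-SW(0)$, telescope Claim~\ref{clm_diffinwelf} to bound $SW(0)-SW(k)$, use minimality of $k$ to control the at most $\log\Delta+2$ small-profit terms by $\epsilon$, close the recursion via $\pi\le SW$, and case-split on $k=0$ versus $k\ge1$. The only difference is cosmetic bookkeeping (you bound the aggregate $SW^*$-coefficient as ``strictly below $1$'' while the paper explicitly shows it is at most $\tfrac12$), which does not change the $\Theta(\tfrac{1}{1-\alpha})$ conclusion.
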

\begin{proof}
The revenue bound simply follows from the algorithm, so we focus on the social welfare here. Recall from Claim~\ref{clm_cruc_appbenchmark} that $(i) SW^* \leq SW(0) + \left(5 + \frac{6}{1-\alpha}\right)(\pi(-1) + \pi(0))$.

Moreover, applying Lemma~\ref{clm_diffinwelf} repeatedly for $j=0$ to $k$ and performing the telescoping summation, we get that $(ii): SW(0) - SW(k) \leq 3\pi(0) + 6\sum_{j=1}^{k-1}\pi(j) + 3\pi(k)$.

We use the above inequalities to derive a lower bound on $SW(k)$. We prove this in two cases,

\textbf{Case I:} $k = 0$

\begin{align*}
SW^* & \leq SW(0) + \left(5 + \frac{6}{1-\alpha}\right)(\pi(-1) + \pi(0))\\
\implies \frac{SW^*}{2} & \leq 6\left(1 + \frac{1}{1-\alpha}\right) SW(0).
\end{align*}

Clearly $\pi(-1) (5+\frac{6}{1-\alpha}) \leq \pi(-1) (8+\frac{4}{1-\alpha} + 2(\frac{1}{1-\alpha})^{\frac{1}{\alpha}}) \leq \frac{SW^*}{2}$ as per the definition of the algorithm since $k \neq -1$. Moreover, we also used the trivial inequality that $\pi(0) \leq SW(0)$. In conclusion, we have that $SW(k) = SW(0)$ is a $\Theta(\frac{1}{1-\alpha})$-approximation to the optimum welfare.

\textbf{Case II:} $k > 0$.

We add up the inequalities $(i)$ and $(ii)$ from above to get

\begin{align*}
SW^* & \leq SW(k) + 3\pi(k) + (8 + \frac{6}{1-\alpha})\sum_{j=-1}^{k-1}\pi(j)\\
& \leq SW(k) + 3\pi(k) + \frac{SW^*}{2}\\
& \leq 4SW(k) + \frac{SW^*}{2} \\
\implies \frac{SW^*}{2} & \leq 4SW(k).
\end{align*}

The second inequality is a consequence of the fact that for every $j \in [-1, k-1]$, $SW^* \geq 2(\log(\Delta) + 2)\{\left(8 + 2(\frac{1}{1-\alpha})^{\frac{1}{\alpha}} + 4\frac{1}{1-\alpha}\right)\pi(j)$. We add this inequality for $j=-1$ to $k-1$, which consists of at most $k+1 \leq \log(\Delta) + 2$ terms.

So, the worst welfare is obtained for $k =0$.

\end{proof}

\end{proof}

\section{Appendix: Cost functions}
The first two lemmas below are paraphrased versions of similar results in~\cite{anshelevichKS15}.
\begin{lemma}
\label{lem_diffcostmonoton}
Consider two pricing solutions $(\vec{p_1}, \vec{x_1}, \vec{y_1})$ and $(\vec{p_2}, \vec{x_2}, \vec{y_2})$ such that $\vec{p_1} \geq \vec{p_2}$, and suppose that $\vec{y_1}$, $\vec{y_2}$ are the minimum-cost allocations for the buyer demands $\vec{x_1}$ and $\vec{x_2}$ respectively over all feasible allocations. Then,
\begin{enumerate}
\item $\vec{x_2} \geq \vec{x_1}$, i.e, every buyer type's demand is higher under $\vec{p_2}$ than under $\vec{p_1}$.

\item For all $t \in S$, $c_t(y^1_t) \leq c_t(y^2_t)$.

\item For every buyer $i$, $r_i(\vec{z^1}) \leq r_i(\vec{z^2})$.
\end{enumerate}
\end{lemma}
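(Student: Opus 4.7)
The plan is to prove the three parts in sequence, each leveraging standard monotonicity behavior of demand in prices and the structure of min-cost allocations under convex costs.

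For Part 1, I would define the effective minimum price facing a buyer of type $i$ as $q_i(\vec{p}) = \min_{S \in B_i}\sum_{t \in S} p_t$. This is coordinate-wise non-decreasing in $\vec{p}$, so $\vec{p_1} \geq \vec{p_2}$ gives $q_i(\vec{p_1}) \geq q_i(\vec{p_2})$. In any valid pricing solution, either $x_i = 0$ (when $q_i > \lambda^{\max}$) or $\lambda_i(x_i) = q_i(\vec{p})$; since $\lambda_i$ is non-increasing, this forces $x_i^1 \leq x_i^2$ in all cases.

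For Parts 2 and 3, the structural fact I would rely on is that in a minimum-cost allocation $\vec{y_k}$, the marginal costs of all individual goods used by any fixed buyer $i$ must coincide (otherwise one could shift $i$'s consumption within her desired bundles toward a lower-marginal-cost alternative and strictly reduce $C(\vec{y_k})$). Hence $r_i(\vec{y_k})$ is well-defined as this common marginal cost, and $c_t(y_t^k) = r_i(\vec{y_k})$ for every good $t$ used by $i$ in solution $k$. The two parts are then established by contradiction via a mixing argument: assume a reversal of marginal costs for some good (Part 2) or for some buyer (Part 3), and consider the hybrid allocation $\vec{y}_{\mu} = \mu \vec{y_1} + (1-\mu)\vec{y_2}$, which is feasible for demand $\mu \vec{x_1} + (1-\mu)\vec{x_2}$. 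Using Part 1 together with the strict convexity of each $C_t$ at points where the two allocations disagree, I would show that either $\vec{y_1}$ or $\vec{y_2}$ fails to be min-cost for its respective demand, which is the desired contradiction.

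The main obstacle will be formalizing the swap/mixing argument while respecting the combinatorial multi-minded structure, i.e., ensuring that the hybrid allocation can be decomposed as a valid assignment of buyers to bundles in their desired sets. A cleaner alternative is to cast the min-cost allocation as a convex program in the variables $x_i(S)$ and $y_t$, and invoke standard convex-duality sensitivity analysis: the dual variables associated with the per-buyer demand constraints correspond exactly to $r_i(\vec{y})$, and monotonicity of duals in the right-hand side for convex, non-decreasing cost objectives gives $r_i(\vec{y_1}) \leq r_i(\vec{y_2})$ directly. Part 2 then follows since, for any good $t$ with $y_t^1 > 0$, $c_t(y_t^1) = r_i(\vec{y_1})$ for some buyer $i$ using $t$ in solution $1$, which is bounded above by $r_i(\vec{y_2}) \leq c_t(y_t^2)$ using the same equality and the convexity of $c_t$. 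As this lemma is paraphrased from \cite{anshelevichKS15}, either route yields the claim.
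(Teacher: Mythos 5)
The paper does not actually prove Lemma~\ref{lem_diffcostmonoton}; it states it and cites~\cite{anshelevichKS15} as the source. So I can only evaluate your sketch on its own merits, and unfortunately both routes you propose for Parts~2 and~3 contain real gaps.

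Your Part~1 argument is correct: $q_i$ is coordinate-wise monotone in $\vec p$, and $\lambda_i$ is non-increasing, so the demand moves the right way (with the $x_i=0$ boundary case handled). The trouble starts with the mixing argument. The hybrid $\vec y_\mu = \mu\vec y_1 + (1-\mu)\vec y_2$ is feasible only for the hybrid demand $\mu\vec x_1 + (1-\mu)\vec x_2$, not for $\vec x_1$ or for $\vec x_2$, so its cost cannot be compared against $C(\vec y_1)$ or $C(\vec y_2)$ and yields no contradiction. The natural repair is to instead compare against the coordinate-wise min and max, $\vec y^\wedge = \min(\vec y_1,\vec y_2)$ and $\vec y^\vee = \max(\vec y_1,\vec y_2)$, which by separable convexity satisfy $C(\vec y^\wedge)+C(\vec y^\vee)\le C(\vec y_1)+C(\vec y_2)$; but here the missing piece is exactly the hard one: one must show that $\vec y^\wedge$ is attainable by some routing of $\vec x_1$ and $\vec y^\vee$ by some routing of $\vec x_2$, and a coordinate-wise min/max of $\vec y$'s does \emph{not} in general lift to a valid assignment of buyers to bundles. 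Your duality alternative is also not correct as stated: ``monotonicity of duals in the right-hand side'' is not a general fact about convex programs. Convexity of the value function $f(\vec x)$ gives $\bigl(\nabla f(\vec x_2)-\nabla f(\vec x_1)\bigr)\cdot(\vec x_2-\vec x_1)\ge 0$, which is \emph{operator} monotonicity, not coordinate-wise monotonicity of each $r_i$; the latter requires additional sign structure in the Hessian of $f$, which is precisely what must be proved here and does not come for free.

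What actually makes the lemma true is the bipartite transportation structure of the min-cost allocation subproblem, and the cleanest proof I know is combinatorial rather than by mixing or generic duality. Suppose toward a contradiction that $T^- := \{t : c_t(y^2_t) < c_t(y^1_t)\}$ is nonempty. For each $t\in T^-$ one has $y^1_t > y^2_t \ge 0$, so some buyer $i$ uses $t$ in solution~1; the equalized-marginals property then gives $r_i(\vec y_1) = c_t(y^1_t) > c_t(y^2_t) \ge r_i(\vec y_2)$, so $i$ lies in $B^- := \{i : r_i(\vec y_2) < r_i(\vec y_1)\}$. Conversely, any $i\in B^-$ in solution~2 uses only goods $t'$ with $c_{t'}(y^2_{t'}) = r_i(\vec y_2) < r_i(\vec y_1) \le c_{t'}(y^1_{t'})$, i.e.\ only goods in $T^-$. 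Counting flow into $T^-$ now yields $\sum_{i\in B^-} x^1_i \ge \sum_{t\in T^-} y^1_t > \sum_{t\in T^-} y^2_t \ge \sum_{i\in B^-} x^2_i$, contradicting $\vec x_2 \ge \vec x_1$ from Part~1. Both Parts~2 and~3 fall out of this. I'd recommend replacing your two sketched routes with this flow-counting argument (or a direct citation to the corresponding lemma in~\cite{anshelevichKS15}), since as written your proposal would not close.
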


\begin{lemma}
\label{lem_costdiff}
Consider two buyer demands $\vec{x^1}$ and $\vec{x^2}$ where $\vec{x^2} \geq \vec{x^1}$ and the corresponding min-cost allocations are $\vec{z^1}$ and $\vec{z^2}$. Then, the following provides a bound for the difference in costs
$$\sum_{t \in T} (C_t(z^2_t) - C_t(z^1_t)) \geq \sum_{i \in B} r_i(\vec{z^1}) (x^2_i - x^1_i).$$
\end{lemma}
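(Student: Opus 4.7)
The plan is to combine convexity of the production cost functions with the first-order optimality conditions that characterize a min-cost allocation. I would start from the termwise convexity bound
$$C_t(z^2_t) - C_t(z^1_t) \ \geq\ c_t(z^1_t)\,(z^2_t - z^1_t),$$
which holds for each $t \in T$ because $C_t$ is convex and differentiable. Summing over $t$ reduces the problem to lower bounding $\sum_t c_t(z^1_t)(z^2_t - z^1_t)$ by $\sum_i r_i(\vec{z^1})(x^2_i - x^1_i)$.

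Next I would rewrite the good-level difference as a buyer-level difference by disaggregating the allocations. Writing $x^1_{it}$ and $x^2_{it}$ for the amount of good $t$ consumed by buyer type $i$ in the two solutions (so $z^k_t = \sum_i x^k_{it}$ and $x^k_i = \sum_t x^k_{it}$), we get
$$\sum_{t \in T} c_t(z^1_t)(z^2_t - z^1_t) \ =\ \sum_{i \in B}\sum_{t \in T} c_t(z^1_t)(x^2_{it} - x^1_{it}).$$
The goal now is to show, for each fixed $i$, that the inner sum dominates $r_i(\vec{z^1})(x^2_i - x^1_i) = r_i(\vec{z^1})\sum_t (x^2_{it} - x^1_{it})$.

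The main obstacle is that individual differences $x^2_{it}-x^1_{it}$ can have either sign, so one cannot simply apply $c_t(z^1_t) \geq r_i(\vec{z^1})$ termwise. The resolution is a two-case analysis based on whether buyer $i$ actually uses good $t$ under $\vec{z^1}$. If $x^1_{it} > 0$, then the KKT/complementary-slackness condition for the min-cost allocation at $\vec{z^1}$ forces $c_t(z^1_t) = r_i(\vec{z^1})$, and so the contribution is \emph{exactly} $r_i(\vec{z^1})(x^2_{it}-x^1_{it})$ regardless of sign. If $x^1_{it} = 0$ (and $t \in B_i$), then $x^2_{it}-x^1_{it} = x^2_{it} \geq 0$, while $c_t(z^1_t) \geq r_i(\vec{z^1})$ since $r_i(\vec{z^1})$ is the minimum marginal cost over goods available to $i$; hence the contribution is at least $r_i(\vec{z^1})(x^2_{it}-x^1_{it})$. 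Terms with $t \notin B_i$ contribute zero on both sides.

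Summing these per-$(i,t)$ inequalities over $t$ gives $\sum_t c_t(z^1_t)(x^2_{it}-x^1_{it}) \geq r_i(\vec{z^1})(x^2_i-x^1_i)$, and summing over $i$ combined with the initial convexity step yields the claimed inequality. The only subtlety worth double-checking in the write-up is that the min-cost characterization in Lemma~\ref{lem_diffcostmonoton} indeed gives equality of marginal costs across goods used by the same buyer type, which is exactly what validates the $x^1_{it}>0$ case.
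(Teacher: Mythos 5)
The paper does not prove this lemma --- it states that it is a paraphrased result imported from~\cite{anshelevichKS15} --- so there is no internal argument to compare against. That said, your reconstruction is correct and follows the natural line of attack: termwise convexity of $C_t$ reduces the claim to a linear inequality in the disaggregated consumption variables $x_{it}$, and the min-cost optimality of $\vec{z^1}$ (equal marginals $c_t(z^1_t)=r_i(\vec{z^1})$ on goods actually used by $i$, and $c_t(z^1_t)\geq r_i(\vec{z^1})$ on desired-but-unused goods) gives exactly the per-$(i,t)$ domination needed, with the sign issue handled correctly in the two cases. One small point worth making explicit in a write-up: when $x^1_i=0$ the paper's definition of $r_i(\vec{z^1})$ (``marginal cost of any good being used by $i$'') is vacuous, so one should adopt the convention $r_i(\vec{z^1})=\min_{t\in B_i}c_t(z^1_t)$; your case analysis then covers this uniformly since all the increments $x^2_{it}-x^1_{it}=x^2_{it}$ are nonnegative. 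Also note that the per-good disaggregation $x_{it}$ you use is the unit-demand form; this matches the only places the lemma is invoked (the unit-demand appendix), so no generality is lost.
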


\begin{lemma}
\label{lem_app_costintermlambda}
Let $(\vec{p}, \vec{x}, \vec{y})$ denote a valid pricing solution for a given instance such that for all $t$, $p_t \geq c_t(y_t)$. Then
\begin{enumerate}
\item $\sum_{i \in B}\lambda_i(x_i)x_i \geq 2C(\vec{y})$.

\item $\pi(\vec{p}, \vec{x}, \vec{y}) \geq C(\vec{y})$.

\item $\sum_{i \in B}\lambda_i(x_i)x_i \leq 2\pi(\vec{p}, \vec{x}, \vec{y})$.
\end{enumerate}

\end{lemma}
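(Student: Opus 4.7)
The plan is to combine two simple accounting identities with the structural consequence of double convexity. First, I would establish the per-good bound $C_t(y_t) \leq \frac{1}{2} y_t c_t(y_t)$. Since $c_t$ is convex with $c_t(0) = 0$, for any $z \in [0, y_t]$ convexity of $c_t$ yields $c_t(z) = c_t\bigl((z/y_t)\cdot y_t + (1 - z/y_t)\cdot 0\bigr) \leq (z/y_t)\, c_t(y_t)$; integrating over $z \in [0, y_t]$ gives the claim. Summing across $t \in T$ delivers $2C(\vec{y}) \leq \sum_{t \in T} c_t(y_t)\, y_t$, which is the only place where the doubly convex assumption is used.

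Next I would prove the revenue accounting identity $\sum_{i \in B} \lambda_i(x_i)\, x_i = \sum_{t \in T} p_t\, y_t$. The point is that in any valid pricing solution, a type-$i$ buyer who consumes bundle $S \in B_i$ pays $\sum_{t \in S} p_t$, and since she only consumes utility-maximizing bundles we have $\sum_{t \in S} p_t = \lambda_i(x_i)$. Writing $x_i(S)$ for the amount of bundle $S$ consumed by type $i$ and $y_t = \sum_{i, S \ni t} x_i(S)$, switching the order of summation gives $\sum_t p_t y_t = \sum_{i, S} x_i(S) \sum_{t \in S} p_t = \sum_i \lambda_i(x_i) x_i$.

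With these two ingredients in hand, the three statements follow by a short chain of inequalities using the hypothesis $p_t \geq c_t(y_t)$. For part~(1),
\[
\sum_{i \in B} \lambda_i(x_i)\, x_i \;=\; \sum_{t \in T} p_t\, y_t \;\geq\; \sum_{t \in T} c_t(y_t)\, y_t \;\geq\; 2C(\vec{y}).
\]
For part~(2),
\[
\pi(\vec{p}, \vec{x}, \vec{y}) \;=\; \sum_{t \in T} p_t\, y_t - C(\vec{y}) \;\geq\; \sum_{t \in T} c_t(y_t)\, y_t - C(\vec{y}) \;\geq\; 2C(\vec{y}) - C(\vec{y}) \;=\; C(\vec{y}).
\]
For part~(3), combining the identity with part~(2) gives $\sum_i \lambda_i(x_i)\, x_i = \pi + C(\vec{y}) \leq \pi + \pi = 2\pi$.

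There is no genuine obstacle here: the lemma is essentially a bookkeeping exercise, and the only nontrivial ingredient is the factor-$\tfrac{1}{2}$ gap between total cost and marginal-cost-times-quantity supplied by double convexity. The one thing to be careful about is stating the revenue identity precisely for multi-minded buyers, where a buyer's payment must be distributed across the items of her purchased bundle; this is handled by the swap of summation described above.
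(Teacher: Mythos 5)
Your proof is correct and follows essentially the same route as the paper: derive $C_t(y_t)\le \tfrac12 c_t(y_t)y_t$ from convexity, rewrite $\sum_i\lambda_i(x_i)x_i$ as $\sum_t p_t y_t$ by rearranging the allocation, and chain the inequalities. The only difference is that you spell out the integration step behind the double-convexity bound, which the paper simply asserts.
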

\begin{proof}
The proof follows from the nature of doubly convex cost functions. Double convexity implies that for every $t \in T$, $C_t(y_t) \leq \frac{1}{2}c_t(y_t)y_t$. Therefore, we have that
$$C(\vec{y}) \leq \frac{1}{2}\sum_{t \in T}c_t(y_t)y_t \leq \frac{1}{2}\sum_{t \in T}p_ty_t = \frac{1}{2}\sum_{i \in B}\lambda_i(x_i)x_i.$$
The final part of the above inequality comes from rearranging the allocation from the goods to the buyers, along with an application of the fact that for any buyer type $i$ with non-zero consumption of bundle $S$, $\lambda_i(x_i) = \sum_{t \in S}p_t$.

The second statement follows directly from the first since $\pi(\vec{p}, \vec{x}, \vec{y}) = \sum_{i \in B}\lambda_i(x_i)x_i - C(\vec{y})$. The third statement follows from the second.
\end{proof}

\section{Appendix: Basic Properties of $\alpha$-SR functions}

\begin{lemma}
\label{lem_asrconst}
Let $f(x)$ be any non-increasing, non-negative $\alpha$-Strongly regular function. Then for any constant $c$, $f(x)-c$ is also $\alpha$-strongly regular in the interval where it is non-negative.
\end{lemma}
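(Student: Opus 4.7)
My plan is to work with the equivalent pointwise (differential) form of the $\alpha$-SR condition. For a continuously differentiable non-increasing $f$ with $f'<0$, expanding $\frac{d}{dx}\bigl(f/|f'|\bigr)$ with $|f'|=-f'$ shows, after a short computation, that $\alpha$-strong regularity of $f$ is equivalent to the pointwise second-order inequality
\[
f(x)\,f''(x) \;\leq\; (1+\alpha)\bigl(f'(x)\bigr)^2.
\]
Setting $g(x)=f(x)-c$, we immediately have $g'=f'$ and $g''=f''$, so by the same equivalence it suffices to establish
\[
\bigl(f(x)-c\bigr)\,f''(x) \;\leq\; (1+\alpha)\bigl(f'(x)\bigr)^2
\]
on the subinterval where $g(x)\geq 0$, that is, where $f(x)\geq c$.

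The verification then splits cleanly on the sign of $f''(x)$. When $f''(x)\geq 0$, the assumption $c\geq 0$ gives $f(x)-c\leq f(x)$, so multiplying by the non-negative quantity $f''(x)$ and invoking the $\alpha$-SR property of $f$ yields $(f(x)-c)f''(x)\leq f(x)f''(x)\leq (1+\alpha)(f'(x))^2$. When $f''(x)\leq 0$, the product of the non-negative factor $f(x)-c$ with the non-positive factor $f''(x)$ is itself non-positive, hence is trivially at most $(1+\alpha)(f'(x))^2\geq 0$. Integrating this pointwise bound (or, equivalently, applying the mean value theorem) gives back the $\alpha$-SR difference inequality for $g$.

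The one subtlety worth flagging is that the statement reads ``for any constant $c$,'' whereas the first case above genuinely requires $c\geq 0$: taking $f(x)=1/x$ (the equal-revenue distribution, $\alpha=1$) and shifting by a positive constant gives an easy counterexample to the ``any $c$'' reading, since adding $c'>0$ turns the tight inequality $ff''=(1+\alpha)(f')^2$ into a strict violation. Fortunately, every invocation of the lemma in the paper is with $c=r_i(\vec{\tilde{y}})\geq 0$, a non-negative marginal cost, so $c\geq 0$ is the only regime one actually needs, and my proof plan covers it completely.
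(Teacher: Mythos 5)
Your proof is correct in the regime it covers (namely $c\geq 0$ and $f$ twice differentiable with $f'<0$), and it takes a genuinely different route from the paper. The paper works directly with the two-point difference inequality and splits on whether $|f'(x_1)|\geq |f'(x_2)|$: in the first case the extra term $-c\bigl(\frac{1}{|f'(x_2)|}-\frac{1}{|f'(x_1)|}\bigr)$ is non-positive (this is exactly where $c\geq 0$ enters), and in the second case the monotonicity of $f$ forces the entire left-hand side to be non-positive. You instead pass to the pointwise second-order form $f(x)f''(x)\leq(1+\alpha)\bigl(f'(x)\bigr)^2$ and split on the sign of $f''(x)$; this is morally the infinitesimal version of the paper's dichotomy, is arguably cleaner, and makes the role of the sign of $c$ more visible, but it costs one extra degree of smoothness (the paper's argument only needs $f\in C^1$). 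Your observation that ``for any constant $c$'' cannot be literally true, and that $c\geq 0$ is genuinely being used, is a sharp catch: the paper's own Case~I also silently relies on $c\geq 0$, and your $f(x)=1/x$ example confirms the sign restriction cannot be dropped. Since every invocation in the paper takes $c$ to be a non-negative marginal cost, the restriction is harmless in context, but it really should appear explicitly in the lemma statement.
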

\begin{proof}
We need to prove that for any $x_1 < x_2$, $\frac{f(x_2)-c}{|f'(x_2)|} - \frac{f(x_1)-c}{|f'(x_1)|} \leq \alpha(x_2 - x_1)$. We show this in two cases.

Case I: $|f'(x_1)| \geq |f'(x_2)|$\\
When both of the derivatives are non-zero, we have that
$$\frac{f(x_2)-c}{|f'(x_2)|} - \frac{f(x_1)-c}{|f'(x_1)|} = \{\frac{f(x_2)}{|f'(x_2)|} - \frac{f(x_1)}{|f'(x_1)|}\} - c(\frac{1}{|f'(x_2)|} - \frac{1}{|f'(x_2)|}).$$
Since the function $f$ is $\alpha$-SR, the first quantity in the RHS is smaller than or equal to $\alpha(x_2 - x_1)$ the second quantity is not positive as per the assumption for Case I. The lemma follows. Note that if any one of the two derivatives are zero, then the proof follows trivially.

Case II: $|f'(x_1)| < |f'(x_2)|$\\
In this case,

$$\frac{f(x_2)-c}{|f'(x_2)|} \leq \frac{f(x_2)-c}{|f'(x_1)|} \leq \frac{f(x_1)-c}{|f'(x_1)|}.$$
The final inequality follows from the fact that the function $f$ is monotone non-increasing. Therefore, $\frac{f(x_2)-c}{|f'(x_2)|} - \frac{f(x_1)-c}{|f'(x_1)|} \leq 0 \leq \alpha(x_2 - x_1)$.

\end{proof}

\begin{lemma}
\label{lem_integral}
Let $f(x)$ be any non-increasing, non-negative $\alpha$-Strongly regular function. Then for any $x_2 \geq x_1$, the following inequality is true,
$$\int_{x_1}^{x_2}f(x)dx \leq \frac{1}{1-\alpha}\frac{f(x_1)}{|f'(x_1)|}(f(x_1) - f(x_2)).$$
\end{lemma}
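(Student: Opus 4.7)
My plan is to exploit the identity $f(x)=g(x)\,|f'(x)|$, where $g(x):=f(x)/|f'(x)|$ is the quantity appearing in the definition of $\alpha$-strong regularity. Since $f$ is non-increasing and $\alpha$-SR, Definition~\ref{def:strongregularity} gives
\[
g(x)\;\le\;g(x_1)+\alpha(x-x_1)\qquad\text{for every }x\in[x_1,x_2].
\]
Substituting $f(x)=g(x)\,|f'(x)|$ into the target integral and applying this pointwise bound yields
\[
\int_{x_1}^{x_2} f(x)\,dx \;=\; \int_{x_1}^{x_2} g(x)\,|f'(x)|\,dx \;\le\; g(x_1)\!\int_{x_1}^{x_2}\!|f'(x)|\,dx \;+\; \alpha\!\int_{x_1}^{x_2}\!(x-x_1)\,|f'(x)|\,dx.
\]
The first integral on the right is simply $f(x_1)-f(x_2)$ because $|f'|=-f'$ (as $f$ is non-increasing), so this already produces the desired factor $g(x_1)(f(x_1)-f(x_2))$; what remains is to absorb the second term.

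To handle $\int_{x_1}^{x_2}(x-x_1)|f'(x)|\,dx$, I would integrate by parts with $u=x-x_1$ and $dv=|f'(x)|\,dx=-f'(x)\,dx$, so that $v=-f(x)$. This gives
\[
\int_{x_1}^{x_2}(x-x_1)|f'(x)|\,dx \;=\; \bigl[-(x-x_1)f(x)\bigr]_{x_1}^{x_2} + \int_{x_1}^{x_2} f(x)\,dx \;=\; -(x_2-x_1)f(x_2) + \int_{x_1}^{x_2} f(x)\,dx.
\]
Plugging this back into the previous display and moving the $\alpha\int f$ term to the left-hand side produces a $(1-\alpha)$ factor, and the leftover term $-\alpha(x_2-x_1)f(x_2)$ on the right is non-positive and can be discarded. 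Rearranging then gives
\[
(1-\alpha)\!\int_{x_1}^{x_2} f(x)\,dx \;\le\; g(x_1)\bigl(f(x_1)-f(x_2)\bigr),
\]
which is the desired inequality after recalling $g(x_1)=f(x_1)/|f'(x_1)|$.

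The proof is essentially short once one spots the trick of rewriting $f=g|f'|$ and following it with integration by parts; this is the one non-obvious step, as a naive approach (solving the differential inequality $-(\ln f)'\ge 1/(g(x_1)+\alpha(x-x_1))$ and integrating the explicit power-type upper bound on $f$) produces awkward expressions involving $(g(x_1)+\alpha(x_2-x_1))^{1-1/\alpha}$ that must be collapsed using the auxiliary fact $(f(x_2)/f(x_1))^{1-\alpha}\ge f(x_2)/f(x_1)$. The only minor technical nuisance is edge cases where $f'$ vanishes somewhere in $[x_1,x_2]$ so that $g$ is undefined; these can be dismissed either by restricting to the subinterval where $f$ is strictly decreasing (if $f'(x)=0$ at some $x\le x_2$, then $f$ is constant beyond that point, contributing nothing to the bound that isn't already captured) or by interpreting $g=+\infty$ at such points, which only weakens the hypothesis $g(x)\le g(x_1)+\alpha(x-x_1)$ trivially.
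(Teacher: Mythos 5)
Your proof is correct and is essentially identical to the paper's: both start from the pointwise $\alpha$-SR bound $f(x)/|f'(x)| \le f(x_1)/|f'(x_1)| + \alpha(x-x_1)$, multiply through by $|f'(x)|$ and integrate, then integrate by parts on the $\alpha\int(x-x_1)|f'(x)|\,dx$ term and rearrange to produce the $(1-\alpha)$ factor, discarding the non-positive leftover $-\alpha(x_2-x_1)f(x_2)$. Your remark on handling points where $f'=0$ is a reasonable clean-up that the paper elides.
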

\begin{proof}
Consider any $x \geq x_1$. We now use the definition of $\alpha$-strong regularity to obtain a crude upper bound on $f(x)$ in terms of $f(x_1)$ and the derivative at $x$. i.e., we know that
$$\frac{f(x)}{|f'(x)|} \leq \frac{f(x_1)}{|f'(x_1)|} + \alpha(x - x_1).$$

For convenience, define $h = \frac{f(x_1)}{|f'(x_1)|}$. This gives us the following upper bound, $f(x) \leq h |f'(x)| + \alpha|f'(x)|(x - x_1).$ The quantity we wish to compute is $\int_{x_1}^{x_2}f(x)dx$. Using our upper bound from before, we have that
$$\int_{x=x_1}^{x_2}f(x)dx \leq h\int_{x=x_1}^{x_2}|f'(x)|dx + \alpha \int_{x=x_1}^{x_2}(x - x_1)|f'(x)|dx.$$

Recall that since $f$ is non-increasing $|f'(x)| = -f'(x)$. Moreover, for the second integral, we can simply integrate by parts and get that $\int_{x_1}^{x_2}(x - x_1)(-f'(x))dx = [(x - x_1)(-f(x))]_{x_1}^{x_2} - \int_{x=x_1}^{x_2}(-f(x))dx = -(x_2 - x_1)f(x_2) + \int_{x=x_1}^{x_2}f(x)$. Combining the various pieces, we are now ready to complete the proof
\begin{align*}
\int_{x_1}^{x_2}f(x)dx & \leq h(f(x_1) - f(x_2)) - \alpha(x_2 - x_1)f(x_2) + \alpha\int_{x=x_1}^{x_2}f(x)dx \\
\implies (1-\alpha)\int_{x_1}^{x_2}f(x)dx & \leq h(f(x_1) - f(x_2)) - \alpha(x_2 - x_1)f(x_2)\\
& \leq h(f(x_1) - f(x_2)).
\end{align*} \end{proof}

\begin{lemma}
\label{lem_hazardthresh}
Let $f(x)$ be a non-increasing, non-negative $\alpha$-Strongly regular function and let $x_1$ be any point satisfying $\frac{f(x_1)}{|f'(x_1)|} \leq x_1$. Then for every $x \geq x_1$, the following inequality holds,
$$\frac{f(x)}{|f'(x)|} \leq x.$$
\end{lemma}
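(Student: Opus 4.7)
The proof plan is a direct one-line calculation using the defining inequality of $\alpha$-strong regularity. Let $g(x) := f(x)/|f'(x)|$. The hypothesis of the lemma says $g(x_1) \leq x_1$, and the goal is to show $g(x) \leq x$ for every $x \geq x_1$. The $\alpha$-SR condition is exactly the statement that $g$ grows at rate at most $\alpha$, so the intuition is clear: starting from a point where $g$ lies below the line $y=x$ (which grows at rate $1 \geq \alpha$), $g$ can never catch up to this line.

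Concretely, I would proceed as follows. Fix any $x \geq x_1$. Applying Definition~\ref{def:strongregularity} with the pair $(x_1, x)$ gives
\[
g(x) - g(x_1) \;\leq\; \alpha(x - x_1).
\]
Using the hypothesis $g(x_1) \leq x_1$, this yields
\[
g(x) \;\leq\; x_1 + \alpha(x - x_1) \;=\; (1-\alpha)\,x_1 + \alpha\, x.
\]
Since $\alpha \in [0,1]$, the coefficient $1-\alpha$ is non-negative, and combining this with $x_1 \leq x$ gives
\[
g(x) \;\leq\; (1-\alpha)\,x + \alpha\, x \;=\; x,
\]
which is the desired bound. The case $\alpha = 1$ is covered by the same chain of inequalities (the first line becomes $g(x) \leq x_1 + (x-x_1) = x$ directly).

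The main obstacle is essentially non-existent: the proof is a single invocation of the $\alpha$-SR inequality followed by a convex combination argument. The only subtlety worth noting is that we implicitly need $|f'(x)| > 0$ on $[x_1, \infty)$ so that $g$ is well-defined; for points where $f'$ vanishes while $f$ remains positive, the ratio is $+\infty$ and the $\alpha$-SR condition would already be violated across any interval containing such a point, so these degenerate cases do not arise under the lemma's hypotheses.
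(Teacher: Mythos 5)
Your proof is correct and follows precisely the same one-line chain of inequalities as the paper: apply the $\alpha$-SR definition to the pair $(x_1,x)$, substitute $g(x_1)\leq x_1$, and note that the resulting convex combination $(1-\alpha)x_1+\alpha x$ is at most $x$.
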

\begin{proof}
The proof follows almost directly from the definition of an $\alpha$-SR function. Since $x \geq x_1$, we have that
$$\frac{f(x)}{|f'(x)|} \leq \frac{f(x_1)}{|f'(x_1)|} + \alpha(x-x_1) \leq x_1 + \alpha(x-x_1) \leq (1-\alpha)x_1 + \alpha x.$$

Since $x_1 \leq x$, we have that $(1-\alpha)x_1 + \alpha x \leq x$. This completes the proof.
\end{proof}

\begin{lemma}
\label{lem_subclaim_mhr}
Let $f(x)$ be a non-increasing, non-negative $\alpha$-Strongly regular function, and let $x_1$ be any point satisfying $\frac{f(x_1)}{|f'(x_1)|} > x_1$. Then,
$$f(0) < f(x_1)(\frac{1}{1-\alpha})^{1/\alpha}.$$

\end{lemma}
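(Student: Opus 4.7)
\medskip

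\noindent\textbf{Proof plan.} The plan is to convert the hypothesis $f(x_1)/|f'(x_1)| > x_1$ into a pointwise differential inequality on $f$ over the interval $[0, x_1]$, and then integrate to obtain the desired multiplicative bound on $f(0)/f(x_1)$. Write $h(x) := f(x)/|f'(x)|$, so that the $\alpha$-SR property reads $h(x_2) - h(x_1) \leq \alpha(x_2 - x_1)$ for $x_1 \leq x_2$; equivalently, $h(x_1) - h(x) \leq \alpha(x_1 - x)$ for $x \in [0, x_1]$.

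\noindent First, I would use this together with the hypothesis $h(x_1) > x_1$ to get the pointwise lower bound
\[
h(x) \;\geq\; h(x_1) - \alpha(x_1 - x) \;>\; x_1 - \alpha(x_1 - x) \;=\; (1-\alpha)x_1 + \alpha x,
\]
valid for all $x \in [0, x_1]$. Since $f$ is non-increasing and non-negative, $|f'(x)| = -f'(x)$, and rearranging gives
\[
-\frac{f'(x)}{f(x)} \;<\; \frac{1}{(1-\alpha)x_1 + \alpha x}.
\]
Here $f$ is strictly positive on $[0,x_1]$ because the right-hand side of the $h$-inequality is bounded below by $(1-\alpha)x_1 > 0$, so $h(x)$ is finite and positive there (if $|f'(x)| = 0$ at some point, we interpret $h(x) = +\infty$ and the inequality is trivially true at that point).

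\noindent Next, I would integrate both sides from $0$ to $x_1$. The left side integrates to $\ln f(0) - \ln f(x_1)$. For the right side, a substitution $u = (1-\alpha)x_1 + \alpha x$ yields
\[
\int_0^{x_1} \frac{dx}{(1-\alpha)x_1 + \alpha x} \;=\; \frac{1}{\alpha} \bigl[\ln x_1 - \ln((1-\alpha)x_1)\bigr] \;=\; \frac{1}{\alpha} \ln \frac{1}{1-\alpha}.
\]
Exponentiating then gives $f(0)/f(x_1) < \bigl(\tfrac{1}{1-\alpha}\bigr)^{1/\alpha}$, which is exactly the claim. The edge case $\alpha = 0$ corresponds to the MHR regime and can be handled separately by taking $h$ non-increasing directly, yielding $f(0) < e\, f(x_1)$ in agreement with the limit $\lim_{\alpha \to 0}(1/(1-\alpha))^{1/\alpha} = e$.

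\noindent The main obstacle is simply making sure the pointwise lower bound on $h$ translates cleanly into a strict inequality after integration, and handling the boundary cases where $|f'(x)| = 0$ or $x_1 = 0$; all of these are routine once the substitution $h(x) = f(x)/|f'(x)|$ reduces the $\alpha$-SR condition to a Lipschitz-like statement. The rest is a direct separation of variables in the ordinary differential inequality for $\ln f$.
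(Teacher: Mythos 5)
Your proof is correct and takes essentially the same route as the paper: both convert the $\alpha$-SR hypothesis into the pointwise lower bound $f(x)/|f'(x)| > (1-\alpha)x_1 + \alpha x$ on $[0,x_1]$, then integrate $|f'(x)|/f(x) = -\frac{d}{dx}\ln f(x)$ over that interval to obtain $\ln\frac{f(0)}{f(x_1)} < \frac{1}{\alpha}\ln\frac{1}{1-\alpha}$. The only differences are cosmetic (naming $h=f/|f'|$) plus your explicit handling of the $\alpha=0$ and $|f'|=0$ edge cases, which the paper leaves implicit.
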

Notice that since $f$ is non-increasing, $f(0)$ is the maximum value of the function.
\begin{proof}
Consider the function $g(x) = \log(f(x))$. Differentiating this gives us,
$$g'(x) = \frac{f'(x)}{f(x)}.$$
Recall that since $f(x)$ is non-increasing, its derivative cannot be positive. Integrating $g'(x)$ from $x_1$ to $0$ gives us
\begin{align*}
\int_{x_1}^0 g'(x) = & \int_{x_1}^0 \frac{f'(x)}{f(x)}dx\\
g(0) - g(x_1) = & -\int_{x_1}^0 \frac{|f'(x)|}{f(x)}dx\\
\log\frac{f(0)}{f(x_1)} = &  \int_{0}^{x_1} \frac{|f'(x)|}{f(x)}dx\\
\end{align*}

Now, since the function is $\alpha$-SR and $x \leq x_1$, we have that $\frac{f(x)}{|f'(x)|} \geq \frac{f(x_1)}{|f'(x_1)|} - \alpha(x_1 - x)$. The first quantity in RHS is at least $x_1$ from the lemma statement, and therefore, we get $\frac{f(x)}{|f'(x)|} > x_1(1-\alpha) + \alpha x$. Substituting this in the above integral, we get

$$\int_{0}^{x_1} \frac{|f'(x)|}{f(x)}dx < \int_{0}^{x_1}\frac{1}{x_1(1-\alpha) + \alpha x}dx = \frac{1}{\alpha}\log(\frac{1}{1-\alpha}).$$

Therefore, we get that $\log\frac{f(0)}{f(x_1)} < \frac{1}{\alpha}\log(\frac{1}{1-\alpha})$, giving us the upper bound of $(\frac{1}{1-\alpha})^{\frac{1}{\alpha}}$ for $\frac{f(0)}{f(x_1)}$.
\end{proof}

The following corollary that follows from the previous lemma is perhaps the most important ingredient required for our main theorem.

\begin{corollary}
\label{corr_crucial}
Let $f(x)$ be a non-increasing, non-negative $\alpha$-Strongly regular function and $x_1$ be some point satisfying $f(0) \geq (\frac{1}{1-\alpha})^{\frac{1}{\alpha}}f(x_1)$. Then, $\frac{f(x_1)}{|f'(x_1)|} \leq x_1$.
\end{corollary}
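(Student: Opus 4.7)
The plan is to prove Corollary~\ref{corr_crucial} by a direct contrapositive argument from Lemma~\ref{lem_subclaim_mhr}. The two statements are in fact logically equivalent: the corollary takes the hypothesis of the lemma's conclusion (in negated form) and asserts the negation of the lemma's hypothesis.

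More explicitly, I would argue as follows. Suppose towards contradiction that $\frac{f(x_1)}{|f'(x_1)|} > x_1$. Since $f$ is a non-increasing, non-negative $\alpha$-strongly regular function, Lemma~\ref{lem_subclaim_mhr} applies directly at the point $x_1$, yielding
\[
f(0) < f(x_1)\left(\tfrac{1}{1-\alpha}\right)^{1/\alpha}.
\]
But this contradicts the hypothesis of the corollary, namely $f(0) \geq \left(\tfrac{1}{1-\alpha}\right)^{1/\alpha} f(x_1)$. Hence the assumption $\frac{f(x_1)}{|f'(x_1)|} > x_1$ must fail, which means $\frac{f(x_1)}{|f'(x_1)|} \leq x_1$, as required.

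There is essentially no obstacle here, since all the real work has been done in Lemma~\ref{lem_subclaim_mhr}, whose proof integrates $g'(x) = f'(x)/f(x)$ from $0$ to $x_1$ and uses the $\alpha$-SR bound $\frac{f(x)}{|f'(x)|} \geq (1-\alpha)x_1 + \alpha x$ on $[0,x_1]$ to control $\log(f(0)/f(x_1))$ by $\frac{1}{\alpha}\log\frac{1}{1-\alpha}$. The only subtle point to keep in mind is the edge case $|f'(x_1)| = 0$: here $\frac{f(x_1)}{|f'(x_1)|}$ is interpreted as $+\infty$, but the assumption $f(0) \geq \left(\tfrac{1}{1-\alpha}\right)^{1/\alpha} f(x_1)$ with a non-increasing $f$ and $f'(x_1)=0$ combined with $\alpha$-SR forces a contradiction in a similar way, so the contrapositive framing still goes through.
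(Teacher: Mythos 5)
Your argument is exactly the paper's: assume $\frac{f(x_1)}{|f'(x_1)|} > x_1$ for contradiction, invoke Lemma~\ref{lem_subclaim_mhr} to obtain $f(0) < f(x_1)(\frac{1}{1-\alpha})^{1/\alpha}$, and contradict the hypothesis. Correct and identical in approach.
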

\begin{proof}
Assume by contradiction that $\frac{f(x_1)}{|f'(x_1)|} > x_1$. Then as per Lemma~\ref{lem_subclaim_mhr}, it must be true that $f(0) < f(x_1)(\frac{1}{1-\alpha})^{\frac{1}{\alpha}}$, which of course contradicts the corollary statement.
\end{proof}

\end{document}